\newcommand{\mypara}[1]{\smallskip\noindent\textbf{#1.}}  
\newcommand{\subsec}[1]{\subsection{#1}}
\newtheorem{proposition}{Proposition}
\newtheorem{lemma}{Lemma}
\newtheorem{corollary}{Corollary}
\newtheorem{theorem}{Theorem}
\newtheorem{definition}{Definition}
\newtheorem{claim}{Claim}
\newtheorem{fact}{Fact}
\newtheorem{axiom}{Axiom}
\newcommand{\capacity}{\textsf{Capacity}}
\newcommand{\wcapacity}{\textsf{WCapacity}}
\newcommand{\scheduling}{\textsf{Scheduling}}
\newcommand{\tlog}{\widehat{\log}}
\newcommand{\cG}{{\cal G}}
\newcommand{\cH}{{\cal H}}
\newcommand{\cK}{{\cal K}}
\begin{document}

\begin{titlepage}

\title{How Well Can Graphs Represent Wireless Interference?}

\author{
  Magn\'us M. Halld\'orsson
  \qquad
  Tigran Tonoyan \\ \\
  ICE-TCS, School of Computer Science \\
  Reykjavik University \\
  \url{mmh@ru.is, ttonoyan@gmail.com}
}

\maketitle
\thispagestyle{empty}

\begin{abstract}
  Efficient use of a wireless network requires that transmissions be grouped into feasible sets, where feasibility means
  that each transmission can be successfully decoded in spite of the interference caused by simultaneous transmissions.
  Feasibility is most closely modeled by a signal-to-interference-plus-noise (SINR) formula, which unfortunately is
  conceptually complicated, being an asymmetric, cumulative, many-to-one relationship.

We re-examine how well graphs can capture wireless receptions as encoded in SINR relationships, placing them in a
framework in order to understand the limits of such modelling.  We seek for each wireless instance a pair of graphs that
provide upper and lower bounds on the feasibility relation, while aiming to minimize the gap between the two graphs.
The cost of a graph formulation is the worst gap over all instances, and the price of (graph) abstraction is the
smallest cost of a graph formulation.

We propose a family of conflict graphs that is parameterized by a non-decreasing sub-linear function, and show that with
a judicious choice of functions, the graphs can capture feasibility with a cost of $O(\log^* \Delta)$, where $\Delta$ is
the ratio between the longest and the shortest link length.  This holds on the plane and more generally in doubling
metrics.  We use this to give greatly improved $O(\log^* \Delta)$-approximation for fundamental link scheduling
problems with arbitrary power control.

We explore the limits of graph representations and find that our upper bound is tight: the price of graph abstraction is
  $\Omega(\log^* \Delta)$.  We also give strong impossibility results for general metrics, and for approximations in
  terms of the number of links.
\end{abstract}

\end{titlepage}

\section{Introduction}

\mypara{Wireless scheduling} At the heart of any wireless network is a mechanism for managing \emph{interference}
between simultaneous transmissions.  The medium access (MAC) layer manages access to the shared resource, the wireless
spectrum, balancing the aim of maximizing simultaneous use with the impact of the resulting interference.  We can
represent a transmission as a communication \emph{link}, a sender-receiver pair of nodes in a metric space.  Wireless
scheduling mechanisms assign the links to different ``slots'', involving different frequencies, phases and/or time steps.

The model of communication that most closely captures actual conditions, nicknamed the \emph{physical model}, uses a
formula based on the ratio of the (intended) signal strength to the received interference strength (SINR) to determine
if decoding is successful.  A subset $X$ of links is \emph{feasible} if there exists a power assignment to the senders
such that each link $i$ satisfies the SINR formula $\frac{\sum_{j \in X} I_{ji} + N}{S_i} \le \beta^{-1}$ within its
subset, where $S_i$ is the received signal strength on link $i$, $\beta$ and $N$ are fixed constants (dependent on
technology and environment), and $I_{ji}$ is the interference strength of link $j$ on link $i$ \cite{kumar00} (see
Sec.~\ref{S:sinr} for full definitions). We can avoid dealing directly with power assignments using a condition for
feasibility due to Kesselheim \cite{kesselheimconstantfactor} (and shown here to be necessary).  The point remains, however,
that the feasibility predicate is an asymmetric, cumulative, many-to-one relationship.

\mypara{Capturing interference with graphs}
The aim of this work is to capture the complex feasibility relationship of the physical model with \emph{graphs}, a much
more amenable and better studied model. Preferably, a feasible set of links should correspond to an \emph{independent
  set} in the graph on the links, and vice versa, but since exact capture is impossible, we seek instead approximate
representations. Specifically, we want a formulation that constructs a \emph{pair} of graphs that bound feasibility both
from below and from above. The two graphs should also be ``close'' in some sense; specifically we want an independent
set in the lower bound graph to induce a low chromaticity subgraph in the upper bound graph.  That is, given a set $L$
of links, form graphs $G_l(L)$ and $G_u(L)$ on $L$ such that:
\begin{itemize} 
\setlength{\itemsep}{0cm}%
  \setlength{\parskip}{0cm}%
 \item If $S$ is feasible subset of links, then it is an independent set in $G_l(L)$,
 \item If $S$ is an independent set in $G_u(L)$, then $S$ is feasible set of links, and
 \item Chromatic numbers of the subgraphs induced by a subset $S$ are close, or $\max_{S \subseteq L} \frac{\chi(G_u[S])}{\chi(G_l[S])} \le \rho$.
\end{itemize}
We may dub the worst ratio $\rho$ over all instances as the \emph{cost} of that graph formulation, the price we pay for using that simpler pairwise and binary graph representation. The least such cost over all graph formulation can then be called the \emph{price of (graph) abstraction}.

\mypara{Our results}
We propose a family of conflict graph representations, parameterized by a sub-linear, non-decreasing function $f$.
It generalizes known families, e.g., disc graphs correspond to linear functions $f$ and pairwise feasibility is captured by constant functions. The graphs in the family have a structural property that allows for effective approximability.
We argue that this family captures all meaningful conflict graph representations, modulo constant factors. Our main positive result is that for the right choice of $f$, our conflict graph representation has a cost of $O(\log^* \Delta)$, where $\Delta$ is the ratio between longest and the shortest link length. 
We also show that all meaningful representations must pay this $\log^* \Delta$ factor.
Thus, the price of abstraction, for the SINR model with arbitrary power control, is $\Theta(\log^* \Delta)$.
The upper bounds hold for planar instances, and more generally in doubling metrics, while the lower bounds are on the line. We also find that no such results are possible in general metrics nor in terms of the parameter $n$, the number of links.

We apply our formulations to obtain greatly improved bounds for fundamental wireless scheduling problems.
In the link {\scheduling} problem, we want to partition a given set of links into fewest possible feasible
sets. In the \emph{weighted capacity} problem, {\wcapacity}, the links have associated positive weight, and 
we want to find the maximum weight feasible subset. In both cases, our $O(\log^* \Delta)$-approximations 
are the first sub-logarithmic approximations known. 

\mypara{Related work}
Gupta and Kumar \cite{kumar00} proposed the geometric version of the SINR model, where signal decays as a fixed
polynomial of distance; it has since been the default in analytic and simulations studies.
They also initiated the average-case analysis of network capacity, giving rise to a large body on ``scaling laws''.
Moscibroda and Wattenhofer \cite{moscibrodaconnectivity} initiated worst-case analysis in the SINR model. 

Graph-based models of wireless communication have been very common in the past.
Most common are geometric graphs involving circular ranges: \emph{unit-disc graphs} (UDG) have all ranges of the same
radius, while in \emph{disc graphs}, the radius can vary with the power assigned, and in the \emph{protocol} model
\cite{kumar00} the communication and interference ranges are different.  Various attempts have been made to add realism,
such as with 2-hop model, or \emph{quasi-unit disc graphs} \cite{barriere2003robust,kuhn2003ad}, and the recently
studied model of \emph{dual graphs} \cite{kuhn2010broadcasting} captures arbitrary unreliability in networks.  None of
these known models offers though any guarantees on fidelity for representing SINR relationships, see e.g.\
\cite{moscibeyondgraphs}.  Graph formulations for modeling SINR relationships were given previously in \cite{us:talg12}
followed by \cite{tonoyanmeancapacity}, but the cost factor was either $O(\log\log \Delta \cdot \log n)$ or $O(\log
\Delta)$.

Early work on the {\scheduling} problem includes \cite{CruzS03,elbatt02,chafekar07,goussevskaiacomplexity}.
NP-completeness results have been given for different variants \cite{goussevskaiacomplexity, katz2010energy,lin2012complexity}, but as of yet no APX-hardness or stronger lower bounds are known for any related problem in geometric settings, perhaps indicating the difficulty of dealing with the SINR constraints.
%
The related {\capacity} problem, where we seek to find a maximum feasible subset of links, admits constant-factor
approximation \cite{kesselheimconstantfactor}.  This immediately implies a $O(\log n)$-approximation for {\scheduling},
where $n$ is the number of links.  Another approach is to solve links of similar lengths in groups, which results in a
$O(\log \Delta)$-approximation \cite{goussevskaiacomplexity,fu2009power,us:talg12}.  The question of improved
approximation for {\scheduling} has been frequently cited as perhaps the most fundamental open problem in the field
\cite{locher2008sensor,dagstuhl11091,fanghanel2011scheduling,GoussevskaiaHW14,halldorsson2014making}. 
The {\wcapacity} problem has applications in several extensions of unit-demand link scheduling problems, such as \emph{stochastic packet scheduling}~\cite{TE92,ryu2013dss,kesselheimStability}, \emph{general demand vectors, multi-path flow} etc.~\cite{WanFJYXT11}. A more general variant of this problem has been considered in the framework of \emph{combinatorial auctions}~\cite{HoeferKV11, HoeferK12}.
{\scheduling} and {\wcapacity} have also been considered with \emph{fixed oblivious power assignments}~\cite{halwatapx, fangkesoblivious, us:talg12,
  halmitcognitive,fangkeslinear}, but the only known sub-logarithmic approximations are known in the case of the
\emph{linear power scheme}~\cite{halmitcognitive, tonoyanlinear}.

\mypara{Issues of models}
A wide range of areas in computer science and mathematics deal with finding simpler abstractions of complex
phenomena. Some examples include: a) discrepancy theory, b) dimensionality reductions and embeddings, c) graph
augmentations and sandwiching properties, d) graph sparsification, e) curve fitting (including least squares, finite
methods, and regression), f) approximation theory (in math), including generalized Fourier series and Chebyshev
polynomials, and g) PAC learning.

There are tradeoffs between the accuracy of a model and its complexity of detail.
There are legitimate concerns that models and problem formulations are sometimes overly detailed and ``brittle'',
possibly exceeding reasonable levels of precision (see, e.g., discussion in \cite{papadimitriou1997np}).
The benefits of a coarser model tend to include simpler algorithms, easier analysis, but also less sensitivity to incidental details that may or may not be modelable.

A case in point is the SINR model, which has its issues.  Whereas the additivity of interference and the near-threshold
nature of signal reception has been borne out in experiments, the geometric decay assumption is far off in essentially
all actual environments \cite{son2006,MaheshwariJD08,sevani2012sir,us:MSWiM14}.  One practical alternative is to use
facts-on-the-ground in the form of signal strength measurements, instead of the prescriptive distance-based formula
\cite{us:MSWiM14,us:PODC14}.  To model that formally, the pessimistic reaction would be to replace the distance
assumption with an arbitrary signal-quality matrix, but that runs into the computational intractability monster, since
such a formulation can encode the coloring problem in general graphs \cite{GoussevskaiaHW14}.  A more moderate
approach is to relax the Euclidean assumption to more general metric spaces \cite{fangkeslinear}.  
The determinacy of the model is another issue.  To capture the probabilistic factors observed in the capture of
transmissions, one approach is to extend the basic SINR model accordingly, such as with Rayleigh fading; in that case,
it has been shown that applying algorithms based on the deterministic formula results in nearly equally good results in
that probabilistic setting \cite{damskesrayleigh}.

Our results suggest that a reassessment of the role of graphs as wireless models might be in order.
By paying a small factor (recalling, as well, that $\log^*(x) \le 5$ in this universe),
we can work at higher levels of abstraction, with all the algorithmic and analytical benefits that it accrues.
At the same time, hopes for fully constant-factor approximation algorithms for core scheduling problems have receded.
It remains to be seen what abstractions are possible for other related settings, especially the case of uniform power.

\mypara{Roadmap to the rest of the paper}
Following the basic definitions in Sec.~\ref{S:defs}, we derive from first principles what properties link conflict
graphs must satisfy (Sec.~\ref{S:formul}). This can be read independently of the rest of the paper.
We next derive (in Sec.~\ref{S:graphproperties}) two key properties of the family of graphs: how their chromatic numbers
relate and how their colorings can be approximated. We then introduce the definitions of the SINR model
before starting the technical core of the paper.
In Sec.~\ref{S:sandwich}, we show that feasibility is captured by two members of our conflict graph family. 
The implications are discussed in Sec.~\ref{S:mainresult}:
our main upper bound result, i.e. $O(\log^*{\Delta})$-approximation for {\scheduling} and {\wcapacity}; a necessary and sufficient condition for feasibility; and an explicit
polynomial-time computable measure of interference.
Limitation results are given in Sec.~\ref{S:limitations}, in particular 
that no better bounds are possible via conflict graphs.
For space reasons, only sketches of the proofs are given in this extended abstract.

\section{Definitions: Metrics, Functions, Graphs}
\label{S:defs}

\mypara{Communication Links}
Consider a set $L$ of $n$ \emph{links}\label{G:numlinks}, numbered from $1$ to $n$. Each link $i$ represents a
unit-demand communication request from a sender $s_i$ to a receiver $r_i$\label{G:siri} --- point-size wireless
transmitter/receivers (nodes) located in a metric space with distance function $d$\label{G:distance}.  We denote
$d_{ij}=d(s_i,r_j)$\label{G:asymdistance} and refer to $l_i=d(s_i,r_i)$\label{G:li} as the \emph{length} of link $i$.
We let $\Delta(L)$\label{G:delta} denote the ratio between the longest and the shortest link lengths in $L$, and drop
$L$ when clear from context.  We shall assume in the rest of the paper that all link lengths are distinct, which can be
achieved by arbitrarily (but consistently) breaking ties as needed.  For sets $S_1, S_2$ of links, we let $d(S_1,S_2)$
denote the minimum distance between \emph{a node} in $S_1$ and a node in $S_2$. In particular, we will extensively use
$d(i,j) = \min(d_{ij},d_{ji},d(s_i,s_j),d(r_i,r_j))$\label{G:symdistance}, the minimum distance between nodes on links $i,j$.

Let $S_i^+=\{j\in S : l_j> l_i\}$\label{G:liplus} denote the subset of links in a set $S$ that are longer than link $i$,
and similarly $S_i^-=\{j\in S : l_j< l_i\}$\label{G:liminus} the subset of links shorter than $i$.

\mypara{Doubling Metrics} The \emph{doubling dimension} of a metric space is the infimum of all numbers $\delta > 0$
such that for every $\epsilon$, $0 < \epsilon \le 1$, every ball of radius $r>0$ has at most $C\epsilon^{-\delta}$
points of mutual distance at least $\epsilon r$ where $C\geq 1$ is an absolute constant, and $0<\epsilon \leq 1$.
Metrics with finite doubling dimensions are said to be \emph{doubling}. For example, the $m$-dimensional Euclidean space
has doubling dimension $m$~\cite{heinonen}.  We let $m$\label{G:dimension} denote the doubling dimension of the space
containing the links.

\mypara{Functions}
A function $f$ is \emph{sub-linear} if $f(x) = O(x)$.
A function $f$ is \emph{strongly sub-linear} if for each constant $c\ge 1$, there is a constant $c'$ such that $cf(x)/x\le f(y)/y$ for all $x,y\ge 1$ with $x\ge c'y$. Note that if $f$ is strongly sub-linear then $f(x)=o(x)$. 
\\\textit{Examples.} The functions $f(x)=x^{1-\epsilon}$ for any constant $\epsilon>0$ and $f(x)=\log{x}$ are strongly
sub-linear\footnote{When not otherwise identified, logarithms are base 2.}, while $f(x)=x/\log{x}$ is not strongly sub-linear even though $f(x)=o(x)$.

Let $f$ be a strongly sub-linear function. For each integer $c\geq 1$, the function $f^{(c)}(x)$ is defined inductively by: $f^{(1)}(x)=f(x)$ and $f^{(c)}(x)=f(f^{(c-1)}(x))$\label{G:frepeated}. Let $x_0=\inf\{x\geq 1, f(x) < x\} +1$; such a point exists for any $f(x)=o(x)$. The function \emph{iterated $f$}, denoted $f^*(x)$\label{G:fstar}, is defined by:
\[
f^*(x)=\begin{cases}\min_c\{f^{(c)}(x)\le x_0\}, &\text{if }x> x_0,\\
1, &\text{otherwise.}
\end{cases}
\]
\textit{Examples. } For $f(x)=\log{x}$, $f^*(x)=\log^*{x}$ is the well known \emph{iterated logarithm}. It is also easy to check that for $f(x)=\log^{(c)}{x}$, $f^*(x)=\lceil\frac{\log^*{x}}{c}\rceil$ and for $f(x)=\sqrt{x}$, $f^*(x)=\lceil\log{\log{x}}\rceil$. Note also that $g^*(x)=\Theta(f^*(x))$ if $g(x)=cf(x)$ for a constant $c>0$.

\mypara{Graphs} For a graph $G$ and a vertex $v$, $N_{G}(v)$\label{G:nv} denotes the neighborhood of $v$ in $G$, i.e.,
the set of vertices adjacent with $v$.  $\chi(G)$\label{G:chi} denotes the chromatic number of graph $G$, the minimum
number of colors needed for a proper (vertex) coloring of $G$.

A \emph{$d$-inductive order} of a graph is an arrangement of the vertices from left to right such that each vertex has
at most $d$ \emph{post-neighbors}, or neighbors appearing to its right.  A \emph{$k$-simplicial elimination order} is
one where the post-neighbors of each vertex can be covered with $k$ cliques.  A graph is $d$-inductive ($k$-simplicial)
if it has a $d$-inductive ($k$-simplicial elimination) order.  It is well known that a $d$-inductive graphs are
$d+1$-colorable, while the coloring and weighted maximum independent set problems are $k$-approximable in $k$-simplicial
graphs \cite{ackoglu, kammertholey, yeborodin}.
The only inductive or simplicial order we consider for conflict graphs is the \emph{increasing order} of links by length.

\section{Formulations of Conflict Graphs}
\label{S:formul}

What kind of graphs are conflict graphs? By a ``conflict graph formulation'' we mean a
deterministic rule for forming graphs on top of a set of links.  For it to be meaningful as a general purpose
mechanism, such a formulation cannot be too context sensitive.  We shall postulate some axioms
(that by nature should be self-evident) that lead to a compact description of the space of
possible conflict graph formulations.

\begin{axiom}
A conflict graph formulation is defined in terms of the \emph{pairwise relationship} of links.
\label{axiom:pairwise}
\end{axiom}

By nature, graphs represent pairwise relationships; conflict graphs formulations are boolean predicates of
pairs of links.  More specifically, though, we expect the conflict graph to be defined in terms of the
relative standings of the link pairs. That is, the existence of an edge between link $i$ and link $j$ should
depend only on the properties of the two links, not on other links in the instance.  The only properties
of note are the ${4 \choose 2} = 6$ distances between the nodes in the links.

We refer to a \emph{conflict} between two links if the formulation specifies them to be adjacent in the
conflict graph; otherwise, they are \emph{conflict-free}.

\begin{axiom}
A conflict graph formulation is \emph{independent of positions and scale}. 
Translating distances or scaling them by a fixed factor does not change the conflict relationship.
\label{axiom:scale-free}
\end{axiom}

It is an essential feature of the SINR formula (that distinguishes it from other formulations, like unit-disc graphs) 
is that only relative distances matter.  Thus, the positions of
the nodes should not matter, only the pairwise distances, and only the relative factors among the
distances.  There is a practical limit to which links can truly grow, due to the ambient noise term.  However, that
only matters when lengths are very close to that limit; we will treat that case separately.

As a result of this axiom, we can factor out the length of the shorter of the two links considered.

\begin{axiom}
A conflict formulation is \emph{monotonic} with increasing distances.
\label{axiom:monotonicity}
\end{axiom}

The reasoning is that a conflict formulation should represent the degree of conflict between pairs of
links, or their relative ``nearness''.  Specifically, if two links conflict and their separation (i.e.,
one of the distances between endpoints on distinct links) decreases while the links stay of the same
length, then the links still conflict.  Similarly, if two links are conflict-free and the length of one
of them decreases (while their separation stays unchanged), the links stay conflict-free.

\begin{axiom}
A conflict formulation should respect pairwise incompatibility. That is, 
if two (links) cannot coexist in a feasible solution, they should be adjacent in the conflict graph.
\label{axiom:incompatible-pair}
\end{axiom}

\smallskip

In the case of conflict graphs for links in the SINR model with arbitrary power control, we propose an
additional axiom.

\begin{axiom}
  A conflict formulation for links under arbitrary power control is
  symmetric with respect to senders and receivers.
\label{axiom:symmetry}
\end{axiom}

Namely, it should not matter which endpoint of a link is the sender and which is the receiver when
determining conflicts. The key rationale for this comes from Kesselheim's sufficient condition for feasibility,
given here as Thm.~\ref{T:kesselheimconstant}. 
As we show in Sec.\ \ref{S:necsuf}, this formula is also a necessary condition in doubling metrics, up to constant factors.
Thus, feasibility is fully characterized by a symmetric rule (modulo constant factors).

As we shall see, the axioms and the properties of doubling metrics imply that only two distances really
matter in the formulation of conflict graphs: the length of the longer link, and the distance between the
nearest nodes on the two links (both scaled by the length of the shorter link). This motivates the
following definition. 

\begin{definition}
Let $f:\mathbb{R}_+\rightarrow \mathbb{R}_+$ be a positive non-decreasing sub-linear function. 
Two links $i,j$ are said to be \emph{$f$-independent} if
  \[ \frac{d(i,j)}{l_{min}} > f\left(\frac{l_{max}}{l_{min}}\right), \]
where $l_{min}=\min\{l_i,l_j\},l_{max}=\max\{l_i,l_j\}$, and otherwise they are \emph{$f$-adjacent}. 
A set of links is $f$-independent ($f$-adjacent) if they are pairwise $f$-independent ($f$-adjacent), respectively.

Let $L$ be a set of links. The conflict graph $\cG_f(L)$\label{G:gf} is the graph with vertex set $L$, where two vertices 
are adjacent if and only if they are $f$-adjacent.
\end{definition}

\noindent \emph{Remark.}
For the constant function $f(x)\equiv \gamma$ for a number $\gamma$, 
we use the notation $\cG_{\gamma}(L)$\label{G:ggamma} for the corresponding conflict graph.

We now argue that all conflict formulations satisfying the above axioms are essentially of the form
$\cG_f$, for a function $f$. They can differ from $\cG_f$ but not by more than what can be
accounted for by an appropriate constant factor. 

\begin{proposition}
Every conflict graph formulation $\cK$ is captured by $\cG_f$, for some non-decreasing and sub-linear function $f$.
Namely, there is a constant $\gamma$ such that 
$\cK$ is sandwiched by  $\cG_f$ and $\cG_{\gamma f}$, i.e.,
$\cG_f(L) \subseteq \cK(L) \subseteq \cG_{\gamma f}(L)$, for every link set $L$.
\label{prop:gf-suffices}
\end{proposition}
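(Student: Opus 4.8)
The idea is to read $f$ directly off $\cK$ and then verify the two inclusions, the triangle inequality providing all of the geometry. By Axioms~\ref{axiom:pairwise} and~\ref{axiom:scale-free}, whether $\cK$ joins links $i,j$ is a predicate of the six distances among $s_i,r_i,s_j,r_j$ that is invariant under scaling all of them; so I may fix $l_{\min}=1$ and write $x:=l_{\max}\ge 1$. By Axiom~\ref{axiom:symmetry} the predicate is also invariant under swapping the endpoints of $i$ and, independently, those of $j$; in particular I may assume $d(i,j)=\theta$ is attained at $d(r_i,s_j)$. For $t\ge 0$ let $Q_x(t)$ denote the configuration of two collinear links of lengths $1$ and $x$ with nearest endpoints at distance $t$, i.e.\ with the six distances $1,x,t,t+1,t+x,t+1+x$. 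Since lowering $t$ lowers every inter-link distance of $Q_x(t)$, Axiom~\ref{axiom:monotonicity} makes ``$Q_x(t)$ is a conflict of $\cK$'' downward-closed in $t$, so I set $f(x):=\sup\{t:Q_x(t)\text{ is a }\cK\text{-conflict}\}$. Reading the length clause of Axiom~\ref{axiom:monotonicity} contrapositively (lengthening a link preserves a conflict) gives that $f$ is non-decreasing. The same clause applied to \emph{shortening the shorter link} — which keeps separations but raises $l_{\max}/l_{\min}$ by a factor $c>1$ — forces $f(cx)\le c\,f(x)$, i.e.\ $f$ is sub-linear; this is exactly the point at which the monotonicity axiom rules out ``over-coarse'' formulations.

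\emph{The geometry: two distances suffice up to constants.} Here is the technical core. There is an absolute constant $\kappa\ge 1$ such that for a pair with $l_{\min}=1$, $l_{\max}=x$, $d(i,j)=\theta$: if $\theta<\kappa$ the two links are pairwise incompatible (a standard SINR fact, assuming $\beta\ge 1$), hence $\cK$-adjacent by Axiom~\ref{axiom:incompatible-pair}; and if $\theta\ge\kappa$ then, with $d(r_i,s_j)=\theta$, the triangle inequality yields $\theta\le d(s_i,s_j)\le 1+\theta$ and $\max(\theta,x-\theta)\le d(r_i,r_j),d(s_i,r_j)\le \theta+x+1$, with $\max(\theta,x-\theta)\ge x/2$. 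Two consequences: $Q_x(\theta)$ realises the coordinate-wise \emph{maximum} inter-link distance vector among all configurations with these $(x,\theta)$; and every such configuration has two inter-link distances that are $\Theta(\theta)$ and two that are $\Theta(\max(\theta,x))$, so all six distances are pinned down by $(x,\theta)$ up to an absolute factor $\gamma_0$.

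\emph{The sandwich.} Fix $\gamma$ a large enough constant — it will depend only on $\gamma_0$, $\kappa$ and $f(1)>0$, hence only on $\cK$. For $\cG_f(L)\subseteq\cK(L)$: take a pair with $d(i,j)=\theta\le f(l_{\max}/l_{\min})$. If (after scaling) $\theta<\kappa$ it is $\cK$-adjacent by Axiom~\ref{axiom:incompatible-pair}; otherwise $\theta\le f(x)$, so $Q_x(\theta)$ is a conflict of $\cK$, and by the geometry lemma (and the relabelling of the first paragraph) the given pair's inter-link distances are coordinate-wise at most those of $Q_x(\theta)$, so iterating Axiom~\ref{axiom:monotonicity}, realised by a deformation that never raises an inter-link distance, makes the pair a conflict. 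For $\cK(L)\subseteq\cG_{\gamma f}(L)$, equivalently: whenever $\cK$ puts an edge on a pair with $d(i,j)=\theta$ and ratio $x$, then $\theta\le\gamma f(x)$. Here one again uses that, by the geometry lemma, the whole configuration is determined by $(x,\theta)$ up to the factor $\gamma_0$ and that, by Axiom~\ref{axiom:monotonicity}, the conflict relation at a fixed ratio $x$ is a threshold in $\theta$ up to that factor; since $Q_x$ realises one extreme of that factor-$\gamma_0$ window, this threshold is within a constant multiple of $f(x)=\sup\{t:Q_x(t)\in\cK\}$, and $\gamma$ absorbs the constant.

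\emph{Main obstacle.} The delicate part is the geometry lemma together with the bookkeeping in the second inclusion: Axiom~\ref{axiom:monotonicity} speaks of moving one inter-link distance at a time with the link lengths fixed, but in a metric the six distances are coupled by the triangle inequality, so the passage between a canonical configuration $Q_x(\cdot)$ and an arbitrary pair must be realised by a sequence of legal moves (continuous endpoint motions keeping the configuration a valid metric), and one must check that the ``up to a constant factor'' slack of the geometry lemma can always be shifted into the gap between $f$ and $\gamma f$ rather than blocking a monotonicity step. The regime $\theta\ll x$, where the two ``far'' inter-link distances remain $\Theta(x)$ independently of $\theta$ so that a pointing-toward configuration is almost extremal and hard to shrink, and the boundary regime $\theta\approx x$, where the role of ``min distance'' changes hands, need the most care; one also has to pin down the constant $\kappa$ below which Axiom~\ref{axiom:incompatible-pair} applies and confirm the final $\gamma$ depends only on $\cK$, as the statement permits.
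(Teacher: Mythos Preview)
Your proposal is correct and follows essentially the same route as the paper: reduce the six-distance predicate to a two-variable one $(x,\theta)=(l_{\max}/l_{\min},\,d(i,j)/l_{\min})$ by combining Axioms~\ref{axiom:pairwise}, \ref{axiom:scale-free}, \ref{axiom:symmetry} with the triangle inequality (your ``geometry lemma'' is exactly the paper's computation that the remaining inter-link distances lie within constant multiples of $\theta$ and $\max(\theta,x)$), invoke Axiom~\ref{axiom:incompatible-pair} to dispose of the small-$\theta$ regime, and then read off $f$ from the resulting monotone threshold. Your presentation is in fact more explicit than the paper's on two points --- you construct $f$ concretely via the canonical collinear pair $Q_x(t)$ rather than appealing abstractly to ``a monotone boolean predicate of two variables can be written as $y>f(x)$,'' and you give a clean derivation of sub-linearity from the length clause of Axiom~\ref{axiom:monotonicity} where the paper only remarks that super-linear growth would violate it. The ``main obstacle'' you flag (realising the passage between configurations by legal one-coordinate moves in a valid metric) is a reasonable worry but is not actually needed: Axiom~\ref{axiom:monotonicity} is stated as monotonicity of the abstract predicate in each inter-link distance, so coordinate-wise domination suffices without constructing an interpolating path of metrics; the paper takes this reading implicitly.
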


\begin{proof}
By Axiom \ref{axiom:pairwise}, a conflict formulation $\cK$ is a function of link pairs, more specifically, the
distances among the four points. By Axiom \ref{axiom:scale-free}, we can use normalized distances, and will 
choose to factor out the length of the shorter link.
By Axiom \ref{axiom:symmetry}, it does not matter which of them involve senders and which involve receivers.

Now, consider two links $i = (s_i,r_i)$ and $j = (s_j,r_j)$, where $l_i \le l_j$, and
assume without loss of generality that $s_i$ and $s_j$ are the nearest points on the two links.
Let us denote for short $d = d(i,j) = d(s_i,s_j)$.
We aim to show that decisions regarding adjacency in $\cK$
can be determined in terms of constant multiples of $d$ and $l_j$.

First, recall that by Axiom \ref{axiom:incompatible-pair}, pairwise incompatible links must be adjacent in any conflict
graph.  Thus, as derived in Thm.~\ref{T:lowerbound}, we may restrict attention to the case that $d(i,j) \ge c l_i$, for
an absolute constant $c$.  In that case, it follows that distance $d(r_i,s_j)$ is at most constant times the distance
from $i$ to $j$, i.e., $d(s_i,s_j) \le d(r_i,s_j) \le (1 + 1/c) d(s_i,s_j)$.

Next, we claim that $d(s_i,r_j)$ is within a constant multiple of $q = \max(d, l_j/2)$.
By definition of $d$, it holds that $d(s_i,r_j) \ge d$, while by the triangular inequality,
$d(s_i,r_j) \ge l_j - d(s_i,s_j) = l_j - d$. Thus, $d(s_i,r_j) \ge \max(d,l_j-d) \ge q$.
Also, by triangular inequality, $d(s_i,r_j) \le d + l_j \le 3q$. 

Finally, for by triangular inequality, $d(r_i,r_j) \le d + l_i + l_j \le d + 2l_j$,
and $d(r_i,r_j) \ge d - l_i - l_j \ge d - 2l_j$. So, defining $q' = \max(d, l_j/4)$,
we similarly obtain that $q' \le d(r_i,r_j) \le 9q'$.
It follows that all the five distances between endpoints are within constant multiples of $d(i,j)$ and $l_j$,
relative to the shorter link length $l_i$.

Hence, by monotonicity (Axiom \ref{axiom:monotonicity}), $\cK$ is dominated by a conflict graph
formulation $\cH$ defined by a monotone boolean predicate of two variables: length of the longer link
$l_j$, and the distance $d(i,j)$ between the links (scaled by the shorter link). 
But, an arbitrary monotone boolean predicate of two variables $x, y$ can be represented by a 
relationship of the form $y > f(x)$, for some function $f$. 
Thus, $\cK$ is dominated by $\cG_f$, for some non-decreasing function $f$.
Also, by the same arguments, $\cK$ dominates $\cG_{cf}$ for a constant $c$.

Finally, sub-linearity is a necessary condition, since super-linear growth would break Axiom \ref{axiom:monotonicity}. 
\end{proof}

\section{Properties of Conflict Graphs}\label{S:graphproperties}

We explore two types of properties of conflict graphs. The first type is concerned with gaps between the chromatic numbers of conflict graphs, or the relative difference of the chromatic numbers of graphs $\cG_{f}$ and $\cG_{f'}$.
We show that the introduction of $f$ increases the chromatic number of $\cG_{\gamma}$ by a rather small factor depending on $f$. This is a key result that will be used to derive the approximation factor in the main result of this paper. 
We also show that the introduction of constant factor $\gamma$ changes the chromatic number of $\cG_{f}$ only by a constant factor.

 In the second part we consider algorithmic properties of graphs $\cG_f$. In particular, we prove that graphs $\cG_f$ are constant-simplicial. Thus, constant factor approximation algorithms for vertex coloring, weighted maximum independent set and several other $\mathcal{NP}$-hard problems follow. This allows us to algorithmically approximate feasibility with graphs.

\subsec{Gaps Between Chromatic Numbers of Conflict Graphs}\label{S:gaps}
 We start by  showing that the difference between the chromatic numbers of $\cG_\gamma$ and $\cG_{\gamma f}$ is a factor of at most $O(f^*(\Delta))$, where $f^*$ is the iterated $f$ function. This result is obtained by proving that for any independent set $S$ in $\cG_{\gamma}(L)$, the graph $\cG_{\gamma f}(S)$ is $O(f^*(\Delta))$-inductive. 
 To this end, we want to show that, for any given link $i$ in $S$, any set $T$ of mutually $\gamma$-independent links in
 $S^+_i$ that are $\gamma f$-neighbors of $i$ is small, or $O(f^*(\Delta))$. We do so by showing that the progression of
 lengths of the links in $T$ must be fast growing, or inversely with $f$; the number of links must therefore be bounded
 by the iterated $f$ function.

\begin{theorem}\label{T:inductiveness}
For any set of links $L$, a constant $\gamma\geq 1$ and a non-decreasing strongly sub-linear function $f$, $\chi(\cG_{\gamma f}(L))=O(f^*(\Delta))\cdot \chi(\cG_{\gamma}(L))$. 
\end{theorem}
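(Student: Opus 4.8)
The plan is to reduce the statement to an inductiveness bound on a single color class and then glue colorings together. Fix an optimal proper coloring of $\cG_{\gamma}(L)$; each color class $S$ is an independent set of $\cG_{\gamma}(L)$, i.e.\ a $\gamma$-independent link set, and since $f$-adjacency of two links depends only on those two links, $\cG_{\gamma f}(S)=\cG_{\gamma f}(L)[S]$. I claim every such $\cG_{\gamma f}(S)$ is $d$-inductive in the increasing-length order, with $d=O(f^*(\Delta))$ (the hidden constant depending on $\gamma$ and on the doubling dimension $m$). Granting this, a $d$-inductive graph is $(d{+}1)$-colorable, so coloring each $\cG_{\gamma f}(S)$ from its own private palette of $d{+}1$ colors yields a proper coloring of $\cG_{\gamma f}(L)$ — intra-class edges are handled by the inductive coloring, inter-class edges by disjointness of palettes — using $(d{+}1)\cdot\chi(\cG_{\gamma}(L))=O(f^*(\Delta))\cdot\chi(\cG_{\gamma}(L))$ colors.

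It remains to bound the number of post-neighbors. Fix a $\gamma$-independent set $S$ and $i\in S$, and let $T\subseteq S$ be the set of links $j$ with $l_j>l_i$ that are $\gamma f$-adjacent to $i$; write $x_j=l_j/l_i\in(1,\Delta]$. The geometric core is the estimate: for $j,j'\in T$ with $l_j\le l_{j'}$, one has $x_j<2f(x_{j'})+2$. Indeed, $\gamma f$-adjacency of $j$ (resp.\ $j'$) to $i$ places an endpoint of $j$ (resp.\ $j'$) within distance $l_i\gamma f(x_j)$ (resp.\ $l_i\gamma f(x_{j'})$) of an endpoint of $i$, hence within $l_i(\gamma f(x_j)+1)$ (resp.\ $l_i(\gamma f(x_{j'})+1)$) of $s_i$; by the triangle inequality these two endpoints lie within $l_i(\gamma f(x_j)+\gamma f(x_{j'})+2)\le l_i(2\gamma f(x_{j'})+2)$ of each other (using that $f$ is non-decreasing). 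Crucially this routing stays inside the $O(l_i)$-ball around link $i$ and never traverses the (possibly huge) links $j,j'$. The displayed quantity upper-bounds $d(j,j')$, which by $\gamma$-independence of $S$ exceeds $\gamma\min(l_j,l_{j'})=\gamma l_j$; dividing by $\gamma l_i$ and using $\gamma\ge 1$ gives $x_j<2f(x_{j'})+2$.

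Now choose a constant $C$, depending only on $f$, large enough that $f(C)\ge 1$ and that $4f(x)<x$ for all $x\ge C$; this is possible for strongly sub-linear $f$, since then $f(x)/x\to 0$, so $\{x: 4f(x)\ge x\}$ is bounded, and (if $f$ is bounded, then $\cG_{\gamma f}$ is contained in a constant-function graph $\cG_{\gamma'}$ and the claim follows from the constant-factor comparison established elsewhere in this section, so we may assume $f$ unbounded and $C$ finite). We also take $C$ to dominate the threshold constant entering the definition of $f^*$. Split $T=T_1\cup T_2$ with $T_1=\{j\in T: x_j<C\}$ and $T_2=\{j\in T: x_j\ge C\}$. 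Every link of $T_1$ has length in $[l_i,Cl_i)$ and an endpoint within $l_i(\gamma f(C)+1)=O(l_i)$ of $s_i$, while distinct links of $T_1$ are $\gamma$-independent, so these endpoints are pairwise at distance $>\gamma l_i=\Omega(l_i)$; a packing bound in the doubling metric gives $|T_1|=O(1)$. For $T_2$, list its members by increasing length as $j_1,\dots,j_p$, so $C\le x_{j_1}<\dots<x_{j_p}\le\Delta$; applying the estimate to consecutive members, and using $x_{j_{q+1}}\ge C$ so that $f(x_{j_{q+1}})\ge 1$ and $2f(\cdot)+2\le 4f(\cdot)$, we get $x_{j_q}<4f(x_{j_{q+1}})$, hence $x_{j_1}<(4f)^{(p-1)}(\Delta)$. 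Since $x_{j_1}\ge C$ while the iteration of $4f$ strictly decreases above $C$, the value $(4f)^{(p-1)}(\Delta)$ can exceed $C$ only for $p-1$ at most the number of $4f$-iterations needed to bring $\Delta$ down to $C$, which is $O((4f)^*(\Delta))=O(f^*(\Delta))$ — the last equality because $4f$ is a constant multiple of $f$ (cf.\ the remark in Section~\ref{S:defs}). Thus $|T|=|T_1|+|T_2|=O(1)+O(f^*(\Delta))=O(f^*(\Delta))$, establishing the inductiveness and hence the theorem.

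\textbf{Main obstacle.} The delicate points are the two ends of the geometric estimate. First, the route bounding $d(j,j')$ from above must be confined to the $O(l_i)$-neighborhood of link $i$; routing through an arbitrary endpoint of $j$ introduces an additive $l_j$ and ruins the bound once $l_j\gg l_i$. Second, because the estimate is $x_j<2f(x_{j'})+2$ rather than $x_j=O(f(x_{j'}))$, the short links ($x_j<C$) must be peeled off by a packing argument before the fast-growth argument can be applied to the rest; and converting ``$(4f)^{(p-1)}(\Delta)$ stays above $C$'' into ``$p=O(f^*(\Delta))$'' is exactly where strong sub-linearity is needed — to rule out the iteration getting stuck at a fixed point $4f(v)=v$ above $C$ — together with the fact that passing from $f$ to a constant multiple changes $f^*$ by only a constant factor.
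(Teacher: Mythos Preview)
Your proof is correct and follows essentially the same approach as the paper: show that every $\gamma$-independent set $S$ is $O(f^*(\Delta))$-inductive in $\cG_{\gamma f}$ by deriving, for any two post-neighbors $j,j'$ of $i$, a recurrence of the form $l_j/l_i \le c\,f(l_{j'}/l_i)$, then splitting off the short links by a doubling packing bound and iterating $f$ on the rest. The only cosmetic difference is that the paper first splits $T$ into two halves according to which endpoint of $i$ is nearer, obtaining the cleaner inequality $l_k/l_i < 2f(l_j/l_i)$ directly, whereas you route uniformly through $s_i$, pick up the additive $+2$, and absorb it via $2f+2\le 4f$ once $f\ge 1$; both routes lead to the same conclusion.
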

	\begin{proof}
	Let $S$ be a $\gamma$-independent set in $L$. Consider any link $i\in S$ and let $T$ denote the set of links in $S_i^+$ that are $f$-adjacent with $i$. We will show that $|T|= O(f^*(\Delta))$. 
	Note that for each $j\in T$, $d(i,j)\leq \gamma l_i f(l_j/l_i)$. Let $p_j$ denote the endpoint of link $j\in T$ closest to link $i$.  We split $T$ into two subsets $T_1$ and $T_2$, where
	\[
	T_1=\{j\in T : d(p_j,r_i) \leq \gamma l_i f(l_j/l_i)\}\mbox{ and }T_2=T\setminus T_1\subseteq\{j\in T : d(p_j,s_i) \leq \gamma l_i f(l_j/l_i)\}.
	\]
	Let us first consider $T_1$.
	Let $j,k\in T_1$ be two links with $l_{k}< l_j$. Then,
	\begin{align}
	\label{E:indineq1} d(p_j,r_i) &\leq \gamma l_i f(l_j/l_i),          &&\text{(because $j,k\in T_1$)}\\
	\label{E:indineq2} d(p_k,r_i) &\leq \gamma l_i f(l_k/l_i),        &&\\
	 \label{E:indineq3} d(p_j,p_k) &\geq d(j,k) > \gamma l_{k},     &&\text{($j,k$ are $\gamma$-independent)}
	\end{align}
	 By plugging the inequalities above into the triangle inequality $d(p_j,p_k) \leq d(p_j,r_i) + d(r_i,p_k)$, we obtain
	$
	\gamma l_{k} < \gamma l_i f(l_j/l_i) + \gamma l_i f(l_k/l_i)\leq 2\gamma l_i f(l_j/l_i),
	$
	where the last inequality follows from the assumption that $l_k< l_j$ and that $f$ is a non-decreasing function. Thus, 
	\begin{equation}\label{E:linkratio}
	 l_k/l_i < 2f(l_j/l_i).
	\end{equation}
	Denote $g(x)\equiv 2f(x)$. Note that $g(x)$ is strongly sub-linear; hence, there exists $x_0=\inf\{x\geq 1, g(x) < x\}+1$.
	Let $1,2,\dots,t=|T_1|$ be the arrangement of the links in $T_1$ in increasing order by length and let $\lambda_j=\frac{l_j}{l_i}$ for  $j=1,2,\dots,t$. Let $h$ be the link with the smallest index such that $\lambda_h \ge x_0$. We will bound the number of links in $A=\{1,2,\dots,h-1\}$ and $B=\{h,h+1,\dots,t\}$ separately. $|A|$ can be bounded by a simple application of the doubling property of the space. Note that for all $j\in A$, $f(l_j/l_i)\leq f(x_0)=O(1)$ because $l_j/l_i < x_0$. Thus, the system of inequalities (\ref{E:indineq1}-\ref{E:indineq3}) implies that the mutual distance between different points $p_j$ with $j\in A$ is at least $\gamma l_i$, while their distance from $r_i$ is at most $\gamma f(x_0) l_i$; hence, we have that $|A|=O(f(x_0)^m)= O(1)$.
	
	Now let us bound $|B|$, using~(\ref{E:linkratio}). We have that
	\[
	x_0\le \lambda_h < g(\lambda_{h+1})\le g(g(\lambda_{h+2}))\le \cdots\le g^{(t-h)}(\lambda_t),
	\]
	which implies that $t-h\le g^*(\lambda_t)= O(f^*(\Delta))$. Recall that $h=|A|=O(1)$; hence, $t=O(f^*(\Delta))$. 
	
	This completes the proof that $|T_1|=O(f^*(\Delta))$. The set $T_2$ is handled similarly. In this case, for any pair of links $j,k\in T_2$ with $l_j> l_k$, the system of inequalities (\ref{E:indineq1}-\ref{E:indineq3}) holds with $r_i$ replaced with $s_i$;
	the rest of the argument is identical. These results imply the theorem.
	\end{proof}

Next we show that the chromatic number of $\cG_{f}$ is of the same order as the chromatic number of $\cG_{\gamma f}$. We prove that any independent set $S$ in $\cG_{f}$ is constant-inductive as an induced subgraph of $\cG_{\gamma f}$. To this end, we prove, using the strong sub-linearity of $f$ (Lemma~\ref{L:shortneighbor}),
  that for any link $i\in S$, the set of neighbors of $i$ in $S_i^+$ mainly consists of links of length $\Theta(l_i)$. The number of such links is bounded by a rather straightforward application of the doubling property of the space, using the fact that those links form an independent set, while at the same time are adjacent with link $i$.

\begin{lemma}\label{L:shortneighbor}
Let $f$ be a non-decreasing strongly sub-linear function and $i,j,k$ be links. If links $j,k$ are longer than $i$, are $f$-independent and are $\gamma f$-adjacent with $i$, then $\min\{l_j,l_{k}\}\le cl_i$, where the constant $c$ depends only on function $f$ and constant $\gamma$.
\end{lemma}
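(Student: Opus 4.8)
The plan is to argue by contradiction: suppose both $l_j$ and $l_k$ are much larger than $l_i$, say $\min\{l_j,l_k\}\ge c l_i$ for a constant $c$ to be chosen large. Without loss of generality let $l_k\le l_j$, so $l_k\ge c l_i$ as well. The $\gamma f$-adjacency of $i$ with both $j$ and $k$ gives $d(i,j)\le \gamma l_i f(l_j/l_i)$ and $d(i,k)\le \gamma l_i f(l_k/l_i)$. Let $p_j,p_k$ be the endpoints of $j,k$ closest to link $i$; as in the proof of Theorem~\ref{T:inductiveness}, each of $p_j,p_k$ lies within the stated distance of \emph{some} endpoint of $i$ (either $s_i$ or $r_i$).

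First I would reduce to the case where $p_j$ and $p_k$ are close to the \emph{same} endpoint of $i$ — if they are close to opposite endpoints, then $d(p_j,p_k)\le d(p_j,\{s_i,r_i\}) + l_i + d(p_k,\{s_i,r_i\}) \le \gamma l_i f(l_j/l_i) + l_i + \gamma l_i f(l_k/l_i)$, and I can fold the additive $l_i$ into the $f$-terms since $f(x)\ge f(1)>0$ is bounded below on $[1,\infty)$ and $l_j/l_i, l_k/l_i \ge 1$; so in all cases $d(p_j,p_k) \le c_1 l_i f(l_j/l_i)$ for a constant $c_1$ depending on $\gamma$. On the other hand, $f$-independence of $j,k$ gives $d(j,k) > l_k f(l_j/l_k)$, hence $d(p_j,p_k)\ge d(j,k) > l_k f(l_j/l_k) \ge l_k f(1) = \Omega(l_k)$, so $l_k = O(l_i f(l_j/l_i))$.

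Now set $x=l_j/l_i\ge 1$ and $y=l_k/l_i$, with $1\le y\le x$. The two bounds combine to $y = O(f(x))$, i.e. $y\le c_2 f(x)$ for a constant $c_2$. But I also have (from adjacency of $k$ with $i$, playing $k$'s role) that the \emph{same} chain can be run with the roles refined: the key point is that $d(j,k) > l_k f(l_j/l_k) = l_k f(x/y)$, and since $y\le c_2 f(x)$, we get $x/y \ge x/(c_2 f(x)) \to \infty$ by strong sub-linearity of $f$ ($f(x)=o(x)$). So $f(x/y)$ is large, forcing $l_k f(x/y) \le d(p_j,p_k) \le c_1 l_i f(x)$, i.e. $y f(x/y) \le c_1 f(x)$. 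Combined with $y\le c_2 f(x)$ this should pin $y$ down: if additionally $y$ were large (which is what we assumed, $y\ge c$), then strong sub-linearity forces $y f(x/y)$ to dominate $c_1 f(x)$ — concretely, strong sub-linearity says for any constant $C$ there is $c'$ with $C f(u)/u \le f(v)/v$ whenever $u\ge c' v$; apply with $u=x$, $v=x/y$ (valid once $y\ge c'$) to get $f(x/y) \ge (y/(C)) \cdot f(x)/x \cdot (x/y)\cdot$... — more cleanly, $f(v)/v \ge C f(u)/u$ means $f(x/y) \ge C (y) f(x)/x \cdot (x/y) / y$, and after simplification $y f(x/y) \ge C f(x)$, contradicting $y f(x/y)\le c_1 f(x)$ once $C > c_1$.

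The main obstacle is the bookkeeping in this last step: matching the quantifier in the definition of strong sub-linearity ($cf(x)/x \le f(y)/y$ for $x\ge c'y$) to the inequality $y f(x/y) \le c_1 f(x)$, and correctly extracting the threshold constant $c$ (depending only on $f$ and $\gamma$) beyond which the assumption $\min\{l_j,l_k\}\ge c l_i$ becomes untenable. I would handle the boundary regime $1\le y \le c$ trivially (that is exactly the conclusion), so the contradiction only needs to be derived for $y$ larger than a suitable constant, which is where strong sub-linearity has teeth. One subtlety to double-check: ensuring $x/y\ge 1$ so that $f(x/y)$ is evaluated in the domain where the stated monotonicity and sub-linearity hypotheses apply — this holds since $l_j\ge l_k$.
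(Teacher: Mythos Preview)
Your approach is correct and essentially identical to the paper's: both combine the triangle inequality (using $\gamma f$-adjacency of $j,k$ with $i$) with $f$-independence of $j,k$ to obtain $l_k f(l_j/l_k) \le c_1\, l_i f(l_j/l_i)$, and then invoke strong sub-linearity with the pair $u=l_j/l_i$, $v=l_j/l_k$ to get a contradiction once $l_k/l_i$ exceeds the sub-linearity threshold. Your preliminary bound $y\le c_2 f(x)$ (from $f(l_j/l_k)\ge f(1)$) is an unnecessary detour---you can pass directly to $y\,f(x/y)\le c_1 f(x)$ and apply strong sub-linearity, which is exactly the paper's argument after the change of variable $y'=x/y=l_j/l_k$; your endpoint case-split (absorbing the extra $l_i$ when $p_j,p_k$ are near different endpoints of $i$) is in fact slightly more careful than the paper's write-up.
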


\begin{proof}
Assume without loss of generality that $l_k< l_j$.
Since $f$ is strongly sub-linear, there is a constant $c>0$ such that $2\gamma f(x)/x\le f(y)/y$ whenever $x\ge cy$. We will show that $l_k\le c l_i$. 
Let us assume, for contradiction, $l_{k} >cl_i$. Let $p_j$ ($p_k$) denote the endpoint of link $j$ (link $k$) closest to link $i$. Then,
\begin{align}
\label{E:ii1}d(p_j,r_i) &\le  \gamma l_i f(l_j/l_i),                            &&\text{  ($j,k$ are $\gamma f$-adjacent with $i$)}\\
\label{E:ii2}d(p_k,r_i) &\le  \gamma l_i f(l_k/l_i),         &&\\
\label{E:ii3}d(p_j,p_k) &\geq d(j,k) > l_{k}f(l_j/l_k),        &&\text{  ($j,k$ are $f$-independent)}.
\end{align}
By plugging these inequalities into the triangle inequality we get:
\[
l_k f(l_j/l_k) <d(p_j,p_k)\leq d(p_j,r_i) + d(r_i,p_k) \leq  \gamma l_i f(l_j/l_i) + \gamma l_i f(l_k/l_i)\leq 2\gamma  l_i f(l_j/l_i).
\]
Let us denote $x=l_j/l_i$ and $y=l_j/l_k$. Note that $x>cy$. The inequality above asserts that $2\gamma f(x)/x>f(y)/y$, which contradicts the definition of $c$.  This completes the proof.
\end{proof}

\begin{theorem}\label{T:constantsens}
Let $L$ be a set of links and $f$ be a non-decreasing strongly sub-linear function. Then $\chi(\cG_{f}(L))=\Theta(\chi(\cG_{\gamma f}(L)))$ for any constant $\gamma>0$.
\end{theorem}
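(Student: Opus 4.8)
The plan is to reduce to the case $\gamma\ge 1$ and there prove the two inequalities $\chi(\cG_f(L))\le\chi(\cG_{\gamma f}(L))=O(\chi(\cG_f(L)))$. For $0<\gamma<1$ the claim then follows by applying the $\gamma\ge 1$ statement to the function $\gamma f$ (which is again positive, non-decreasing and strongly sub-linear) and the constant $1/\gamma>1$, since $(1/\gamma)(\gamma f)=f$. When $\gamma\ge 1$, the inequality $\chi(\cG_f(L))\le\chi(\cG_{\gamma f}(L))$ is immediate: $f\le\gamma f$ pointwise, so $\cG_f(L)$ is a spanning subgraph of $\cG_{\gamma f}(L)$, and the chromatic number is monotone under this.

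For the reverse inequality the crux is the structural claim: \emph{if $S\subseteq L$ is $f$-independent, then $\cG_{\gamma f}[S]$ is $O(1)$-inductive in the increasing order of links by length}, with the constant depending only on $f$, $\gamma$ and the doubling dimension $m$. Granting this, each $\cG_{\gamma f}[S]$ is $O(1)$-colorable, and I would finish by composing colorings: take an optimal coloring of $\cG_f(L)$ with color classes $S_1,\dots,S_k$, $k=\chi(\cG_f(L))$ (each class is $f$-independent), color each $\cG_{\gamma f}[S_t]$ with $O(1)$ colors from pairwise disjoint palettes, and note that the union is a proper coloring of $\cG_{\gamma f}(L)$: an edge inside some $S_t$ is properly colored there, while an edge between two distinct classes is fine because their palettes are disjoint. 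This uses $O(k)=O(\chi(\cG_f(L)))$ colors, as desired.

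To prove the structural claim, fix $i\in S$ and let $T\subseteq S_i^+$ be the set of post-neighbors of $i$ in $\cG_{\gamma f}[S]$, i.e. the links longer than $i$ that are $\gamma f$-adjacent with $i$; we must show $|T|=O(1)$. Any two $j,k\in T$ are longer than $i$, $f$-independent (they both lie in $S$) and $\gamma f$-adjacent with $i$, so Lemma~\ref{L:shortneighbor} gives $\min\{l_j,l_k\}\le c\,l_i$ for a constant $c=c(f,\gamma)\ge 1$. Hence at most one link of $T$ has length exceeding $c\,l_i$, and it suffices to bound $T'=\{j\in T: l_i<l_j\le c\,l_i\}$. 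For $j\in T'$, let $p_j$ be the endpoint of $j$ nearest link $i$. From $\gamma f$-adjacency and $l_j\le c\,l_i$ we get $d(i,j)\le\gamma l_i f(l_j/l_i)\le\gamma f(c)\,l_i$, and hence $d(p_j,s_i)\le(\gamma f(c)+1)l_i=:R$ by the triangle inequality across the edge $s_ir_i$. On the other hand, for distinct $j,k\in T'$, $f$-independence together with the monotonicity of $f$ gives $d(p_j,p_k)\ge d(j,k)> \min\{l_j,l_k\}\cdot f(\max\{l_j,l_k\}/\min\{l_j,l_k\})\ge l_i f(1)$. Thus $\{p_j: j\in T'\}$ is a set of points lying in a ball of radius $R$ with pairwise distances at least $\epsilon R$, where $\epsilon=f(1)/(\gamma f(c)+1)\in(0,1]$; by the definition of the doubling dimension, $|T'|\le C\epsilon^{-m}=O(1)$, so $|T|\le|T'|+1=O(1)$.

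The one point I expect to require care is the observation that Lemma~\ref{L:shortneighbor}, which on its face only compares two neighbors at a time, actually forces all but (at most) one post-neighbor of $i$ to have length $\Theta(l_i)$; after that, bounding $|T'|$ is a routine packing estimate in the doubling metric, and the composition of colorings is standard. Note that, in contrast to the proof of Theorem~\ref{T:inductiveness}, no case split according to which endpoint of $i$ is nearest is needed here, since we bound the distance from $p_j$ to $s_i$ directly via the triangle inequality.
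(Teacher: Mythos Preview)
Your proposal is correct and follows essentially the same approach as the paper: reduce to $\gamma\ge 1$, show that any $f$-independent set $S$ induces a constant-inductive subgraph of $\cG_{\gamma f}$, invoke Lemma~\ref{L:shortneighbor} to force all but one post-neighbor of $i$ into the length range $(l_i,c\,l_i]$, and finish with a doubling packing bound. The only noteworthy difference is that the paper splits $T'$ into $T'_1$ and $T'_2$ according to whether $p_j$ is closer to $r_i$ or $s_i$ and runs the packing argument in each half, whereas you absorb this split by the single triangle-inequality step $d(p_j,s_i)\le d(i,j)+l_i$; this is a harmless streamlining that costs an extra additive $l_i$ in the radius $R$ but changes nothing asymptotically.
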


	\begin{proof}
	We assume w.l.o.g. that $\gamma\geq 1$. Thus, we only need to show that $\chi(\cG_{\gamma f}(L))=O(\chi(\cG_{f}(L)))$. Consider any independent set $S$ in $\cG_f(L)$. It suffices to show that $\cG_{\gamma f}(S)$ is constant-inductive.  Let us fix a link $i$ and let $T=S_i^+\cap N_{\cG_{\gamma f}(S)}(i)$ denote the set of neighbors of $i$ in $\cG_{\gamma f}(S)$ that are longer than $i$. It is enough to show that $T=O(1)$. Recall that all the links in $T$ are $\gamma f$-adjacent with $i$ and are $f$-independent among each other. By applying Lemma~\ref{L:shortneighbor} for all pairs $j,k\in T$, we conclude that there is a constant $c$ such that all the links in $T$, except perhaps one, have length at most $cl_i$. Let $T'$ be the subset of $T$ containing those links. It is enough to show that $|T'|=O(1)$. For each link $j\in T'$, let $p_j$ denote the endpoint of $j$ closest to link $i$. We split $T'$ into two subsets $T'_1$ and $T'_2$, where
	\[
	T'_1=\{j\in T' : d(p_j,r_i) \leq \gamma l_i f(l_j/l_i)\}\mbox{ and }T'_2=T'\setminus T'_1\subseteq\{j\in T' : d(p_j,s_i) \leq \gamma l_i f(l_j/l_i)\}.
	\]
	We first bound $|T'_1|$. Note that for each pair of links $j,k\in T'_1$ with $l_j> l_k$, the distance $d(p_j,p_k)$ is at least $d(p_j,p_k)>l_kf(l_j/l_k)\ge f(1) l_i$ because $j,k$ are $f$-independent. On the other hand, for each $j\in T'_1$, the distance $d(p_j,r_i)$ is at most $d(p_j,r_i)\le \gamma l_i f(l_j/l_i)\le \gamma f(c) l_i$ because $i,j$ are $\gamma f$-adjacent and $l_j\le cl_i$. We conclude that $|T'_1|=O((\gamma f(c)/f(1))^m)=O(1)$, using the doubling property of the metric space.
	
	We can prove that $|T'_2|=O(1)$ in a similar manner, by replacing $r_i$ with $s_i$ in the formulas. These results imply the theorem.
	\end{proof}

\subsec{Algorithmic Properties of Conflict Graphs}

We prove that every conflict graph $\cG_f$ with strongly sub-linear function $f$ is constant-simplicial. This guarantees, among other properties, that the vertex coloring and maximum weighted independent set problems in these graphs can be efficiently approximated within constant factors~\cite{ackoglu, kammertholey, yeborodin}. 

We give a simple argument that holds in the plane, where it holds for essentially all sub-linear functions.
It is based on splitting the plane into $60\degree$ sectors emanating from a given node of a link, 
and arguing that all adjacent longer links within a sector must form a clique.
With a more detailed argument, the result can be extended to general doubling metrics, but requires then strong sub-linearity.
\begin{proposition}\label{P:2dperfectness}
Let $f$ be a non-decreasing function such that $f(x)/x$ is non-increasing and let $L$ be a set of links in the plane. Then $\cG_f(L)$ is 12-simplicial.
\end{proposition}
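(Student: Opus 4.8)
The plan is to produce an explicit $12$-simplicial elimination order, and the natural candidate is the increasing order of links by length. Fix a link $i$; its post-neighbors in this order are exactly the links $j$ with $l_j>l_i$ that are $f$-adjacent with $i$. For each such $j$, let $p_j$ be an endpoint of $j$ and $o_j\in\{s_i,r_i\}$ an endpoint of $i$ realizing the link distance, i.e.\ with $d(o_j,p_j)=d(i,j)$; since $j$ is $f$-adjacent with $i$ and $l_i<l_j$, we have $d(o_j,p_j)=d(i,j)\le l_i f(l_j/l_i)$. Now partition the plane around each of the two nodes $s_i$ and $r_i$ into six cones of angular width $60\degree$, and assign each post-neighbor $j$ to the cone around $o_j$ that contains $p_j$. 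This splits the post-neighbors of $i$ into at most $2\cdot 6=12$ classes, which is where the number $12$ comes from.

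The heart of the argument is to show that each such class is a clique in $\cG_f(L)$. Consider two post-neighbors $j,k$ in the same class, with common apex $o\in\{s_i,r_i\}$, and assume without loss of generality $l_k<l_j$ (link lengths are distinct). Two points lying in a common $60\degree$ cone with apex $o$ are at distance at most the larger of their two distances to $o$ (law of cosines, since the included angle is at most $60\degree$), so $d(j,k)\le d(p_j,p_k)\le\max\{d(o,p_j),d(o,p_k)\}\le\max\{l_i f(l_j/l_i),\,l_i f(l_k/l_i)\}=l_i f(l_j/l_i)$, using that $f$ is non-decreasing and $l_k<l_j$. It remains to turn this into the $f$-adjacency bound $d(j,k)\le l_k f(l_j/l_k)$. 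Since $l_i<l_k<l_j$, set $a=l_j/l_i>b=l_j/l_k>1$; writing $l_i=l_j/a$ and $l_k=l_j/b$ gives $l_i f(l_j/l_i)=l_j\cdot f(a)/a$ and $l_k f(l_j/l_k)=l_j\cdot f(b)/b$, and since $f(x)/x$ is non-increasing and $a>b$ we get $f(a)/a\le f(b)/b$. Hence $d(j,k)\le l_i f(l_j/l_i)\le l_k f(l_j/l_k)$, i.e.\ $j$ and $k$ are $f$-adjacent, so the class is a clique. Applying this to every link $i$ shows the post-neighbors of each vertex are covered by $12$ cliques, so $\cG_f(L)$ is $12$-simplicial.

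There is no deep obstacle here, only two places to be careful. First, one must choose for each post-neighbor $j$ the \emph{correct} apex $o_j$ on link $i$ (the node of $i$ attaining $d(i,j)$), so that the distance bound $d(o_j,p_j)\le l_i f(l_j/l_i)$ is genuinely valid; this is what makes the split into exactly two origin classes (times six cones) the right one. Second, the planar $60\degree$-cone packing — six cones tiling $\mathbb{R}^2$ — is exactly where the restriction to the plane is used; in a general doubling metric one would need more cones and, as the surrounding text notes, the stronger hypothesis of strong sub-linearity. The single non-geometric inequality, $l_i f(l_j/l_i)\le l_k f(l_j/l_k)$, is precisely where the hypothesis that $f(x)/x$ is non-increasing is invoked, and it is what lets the argument cover essentially all sub-linear $f$ in the plane.
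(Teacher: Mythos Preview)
Your proof is correct and follows essentially the same approach as the paper: order links by increasing length, split the post-neighbors of each link $i$ by the closer endpoint of $i$, partition the plane around each endpoint into six $60\degree$ sectors, and use the law-of-cosines bound together with monotonicity of $f$ and non-increasingness of $f(x)/x$ to conclude each sector class is a clique. The paper's writeup is slightly terser but the structure and the two key inequalities are identical.
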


\begin{proof}
  Consider a link $i$ and let $S\subseteq L_i^+$ be the longer neighbors of $i$ in $\cG_f(L)$. We partition $S$ into two
  sets: $S_1$, with links that are closer to the sender node $s_i$, and $S_2$, the links that are closer to
  $r_i$. Consider $S_1$ first. For a link $j\in S_1$, let $p_j$ denote the endpoint of $j$ that is the closest to $s_i$,
  i.e., $d(i,j)=d(p_j,r_i)$. Then we have, from $f$-dependence, that $d(p_j,s_i)\le l_i\cdot f(l_j/l_i)$ for all $j\in
  S_1$. Partition the plane into six $60\degree$ sectors emanating from $s_i$. Let $j,k$ be two links such that
  $l_k>l_j$ and $p_j$ and $p_k$ fall in the same sector. By considering the triangle $(p_j,p_k,s_i)$, we can see that
  the edge $p_jp_k$ must be no longer than $\max\{d(p_j,s_i),d(p_k,s_i)\}$. Thus,
\[
d(j,k)\le d(p_j,p_k)\le l_i\cdot f(l_k/l_i)\le l_j\cdot f(l_k/l_j),
\]
where the second inequality follows because $f(x)$ is non-decreasing and the third one follows because $f(x)/x$ is non-increasing. This shows that the links $j,k$ must be adjacent. Thus, $S_1$ can be covered with at most 6 cliques. An identical argument holds for $S_2$, by splitting the plane around $r_i$. This completes the proof.
\end{proof}

\noindent \emph{Remark.}
The proof above cannot be replicated in general doubling metrics. Moreover, it can be shown that there are functions $f$ such that $\cG_f$ is constant-simplicial in the plane, but is not so even in a one dimensional doubling metric. The linear function $f(x)=x$ is such an example. This claim can be demonstrated by the following example with $n+1$ points $p_0,p_1,\dots,p_n$, where $d(p_i,p_0) =2^i$ for all $i\geq 1$ and $d(p_i,p_k)=2^i+1$ for all $i>k\geq 1$. It is straightforwardly checked that the points in this example form a $1$-dimensional doubling space. On the other hand, the $f$-neighborhood of point $p_0$ contains a set of $\log{n}$ $f$-independent points. A  similar example was also observed in~\cite{Welzl08} in a different context. However, we prove that $\cG_f$ is constant-simplicial in doubling metrics for \emph{strongly sub-linear} functions $f$.

\medskip

We show as before, that for a link $i$,  the set of longer links adjacent with $i$ can be covered with a small number of cliques. We start by showing (using Lemma~\ref{L:shortneighbor}) that the neighbors of $i$ of length $\Omega(l_i)$ must form a single clique. The rest of the links (having length $O(l_i)$) can be covered by cliques using a simple clustering procedure. We show that the cluster-heads  must be separated by a distance $\Omega(l_i)$, which implies, using the doubling property of the space and the fact that all the links in consideration are neighbors of link $i$ and have length $O(l_i)$, that the number of clusters (cliques) must be bounded by a constant.

\begin{theorem}\label{T:perfectness}
Let $f$ be a non-decreasing strongly sub-linear function with $f(x)\ge 1$ for all $x\ge 1$. Then for each set $L$, $\cG_{f}(L)$ is constant-simplicial.
\end{theorem}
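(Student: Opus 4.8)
The plan is to exhibit a constant-simplicial elimination order, taking the increasing order of links by length as mandated in the paper. Fix a link $i$ and write $S=N_{\cG_f(L)}(i)\cap L_i^+$ for its set of post-neighbors, i.e.\ the longer links adjacent with $i$; it suffices to show that $S$ can be covered by $O(1)$ cliques of $\cG_f(L)$, with the constant depending only on $f$ and the doubling dimension $m$. I would split $S$ at the length threshold supplied by Lemma~\ref{L:shortneighbor} applied with $\gamma=1$: letting $c\ge 1$ be the resulting constant (depending only on $f$), set $S^{\mathrm{big}}=\{j\in S:l_j>cl_i\}$ and $S^{\mathrm{small}}=S\setminus S^{\mathrm{big}}$.

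The set $S^{\mathrm{big}}$ is a single clique. Indeed, if some pair $j,k\in S^{\mathrm{big}}$ were not $f$-adjacent, hence $f$-independent, then since $j,k$ are longer than $i$ and $f$-adjacent with $i$, Lemma~\ref{L:shortneighbor} would force $\min\{l_j,l_k\}\le cl_i$, contradicting $l_j,l_k>cl_i$. This is the one place where strong sub-linearity is indispensable --- exactly the phenomenon that the remark preceding the theorem flags for $f(x)=x$ --- but all of that work is already encapsulated in Lemma~\ref{L:shortneighbor}, so here it is a one-line deduction.

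It remains to cover $S^{\mathrm{small}}$, all of whose links have length in $(l_i,cl_i]$, hence $\Theta(l_i)$. For $j\in S^{\mathrm{small}}$ let $p_j$ denote the endpoint of $j$ nearest to link $i$. Since $j$ is $f$-adjacent with $i$ we have $d(i,j)\le l_i f(l_j/l_i)\le l_i f(c)$, and since the two endpoints of $i$ are within $l_i$ of each other, every $p_j$ lies in the ball $B(s_i,R)$ with $R=(f(c)+1)l_i=\Theta(l_i)$. Now cluster greedily: repeatedly pick a still-unclustered $h\in S^{\mathrm{small}}$, declare it a cluster-head, and assign to its cluster every still-unclustered $j$ with $d(p_j,p_h)\le l_i/2$. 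Each cluster is a clique: for $j,k$ in a common cluster, $d(j,k)\le d(p_j,p_k)\le d(p_j,p_h)+d(p_h,p_k)\le l_i<\min\{l_j,l_k\}\le\min\{l_j,l_k\}\cdot f(l_{\max}/l_{\min})$, where the last step uses $f\ge 1$ on $[1,\infty)$, so $j,k$ are $f$-adjacent. And the number of clusters is $O(1)$: the representatives of distinct cluster-heads are at mutual distance $>l_i/2$ and all lie in $B(s_i,R)$, so the doubling property (radius $R$, scale $\epsilon=l_i/(2R)\le 1$) bounds their number by $C(2R/l_i)^m=C(2(f(c)+1))^m$.

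Combining the two parts, $S$ is covered by $1+C(2(f(c)+1))^m=O(1)$ cliques, uniformly over the choice of $i$, so the increasing-by-length order is $k$-simplicial for a constant $k=k(f,m)$, which is the claim. The main obstacle is really just bookkeeping the constants so that the ``small'' links genuinely have length $\Theta(l_i)$ --- making both the clustering radius $l_i/2$ and the packing ball radius $R$ of order $l_i$ --- and so that the hypothesis $f\ge 1$ delivers same-cluster adjacency; beyond the appeal to Lemma~\ref{L:shortneighbor}, the argument is the standard doubling-metric packing bound. (One could alternatively first split $S^{\mathrm{small}}$ according to whether $p_j$ is nearer $s_i$ or $r_i$, mirroring the proof of Thm.~\ref{T:constantsens}, but that refinement is unnecessary here since both endpoints of $i$ already lie in $B(s_i,R)$.)
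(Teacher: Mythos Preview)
Your proof is correct and follows essentially the same route as the paper: split $S$ at the threshold $cl_i$ from Lemma~\ref{L:shortneighbor}, observe that the long part is a single clique, and cover the short part by a greedy clustering at scale $l_i/2$ whose cluster-heads are bounded via the doubling packing bound. Your version is slightly streamlined --- you cluster around the single representative $p_j$ rather than both endpoints of each picked link (so each cluster is one clique rather than two), and you avoid the $R_1/R_2$ split by absorbing the extra $l_i$ into the ball radius --- but these are cosmetic simplifications of the same argument.
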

\begin{proof}
Let $i\in L$ be any link and let the subset $S\subseteq L_i^+$ consist of links in $L$ longer than $i$ and $f$-adjacent with $i$. We show that $S$ can be covered with a constant number of cliques.

Let $j,k\in S$ be any pair of $f$-independent links in $S$. By Lemma~\ref{L:shortneighbor}, there is a constant $c>0$ depending only on $f$, such that at least one of the links $j,k$ is not longer than $cl_i$. Thus, the set of links $j\in S$ with $l_j> cl_i$ forms a single clique.

It remains to show that the subset of $S$ with links of length at most $cl_i$ can be covered with a constant number of cliques. Let $T$ denote this subset. 

We split $T$ into a set of cliques by the following procedure. Pick an arbitrary link $j\in T$. Let $N^1_{j}$ denote the set of links $k$ (including link $j$) with $\min\{d(s_{j},s_k),d(s_j,r_k)\}\leq l_i/2$ and $N^2_j$ denote the set of links $k$ with $\min\{d(r_{j},s_k),d(r_j,r_k)\}\leq l_i/2$. Set $T=T\setminus (N^1_{j}\cup N^2_{j})$. Repeat until $T$ is empty. Let $R$ be the links picked by the procedure above. 

Note that for each $j\in R$, $N^1_j$ and $N^2_j$ are cliques. Indeed, consider e.g. $N_j^1$ and let $k_1,k_2\in N_j^1$ and $l_{k_1}< l_{k_2}$. The triangle inequality and the assumption $f(x)\geq 1$ for $x\geq 1$ imply that $d(k_1,k_2)\leq l_i< l_{k_1}f(l_{k_2}/l_{k_1})$, which means that $k_1,k_2$ are $f$-adjacent. Moreover, these cliques cover $T$. Let us show that $|R|= O(1)$. For each $j\in R$, let $p_j$ denote the endpoint of $j$ that is closest to $i$. We split $R$ into two subsets, $R_1$ and $R_2$ (based on the fact that $R$ consists of links $f$-adjacent with $i$), where
\[
R_1=\{j\in R : d(p_j,r_i) \leq l_i f(l_j/l_i)\}\mbox{ and }R_2=R\setminus R_1\subseteq\{j\in R : d(p_j,s_i) \leq l_i f(l_j/l_i)\}.
\]
Let us consider $R_1$ first. Recall that for each $j\in R$, $l_j\leq cl_i$. From the definition of $R_1$ we have that for each $j\in R_1$, $d(p_j,r_i)\leq l_if(l_j/l_i)\leq l_if(c)$, i.e., the points $p_j$ are inside the ball of radius $l_if(c)$ centered at $s_i$. On the other hand, we have by the construction of $R$ that for any $j,k\in R$, $d(p_j,p_k)\geq d(j,k)>l_i/2$. Hence, by applying the doubling property of the metric space we get that $|R_1|\leq (l_if(c)/(l_i/2))^m=(2f(c))^m=O(1)$. 
A similar argument holds for $R_2$, by replacing $r_i$ with $s_i$. Thus, we get $|R|=O(1)$, which completes the proof.
\end{proof}

\section{Definitions: SINR, Feasibility}\label{S:sinr}

\begin{table}
\centering
    \begin{tabular}{  l  l c l }
		\textit{Notation} & \textit{Meaning} & \textit{Topic} & \textit{Page}\\\hline
    $f^{(c)}(x)$ & function $f$ applied $c$ times & & {\pageref{G:frepeated}}\\
    $f^*(x)$ & iterated $f$ & \textit{Functions} & \pageref{G:fstar}\\
		$\tlog(x)$ & $\approx \log^{2/(\alpha-m)}(x)$ & & \pageref{G:tlog}\\\hline
		$m$ & the doubling dimension of the metric space & \textit{Metric Space} & \pageref{G:dimension}\\
    $d$ & the distance function of the metric space & & \pageref{G:distance}\\\hline
		$n$ & the number of links & & \pageref{G:numlinks} \\ 
    $s_i,r_i$ & sender and receiver nodes of link $i$ & & \pageref{G:siri} \\ 
    $l_i$ & the length of link $i$, $l_i=d(s_i,r_i)$ & & \pageref{G:li}\\ 
    $d_{ij}$ & the distance from $s_i$ to $r_j$, $d_{ij}=d(s_i,r_j)$ & & \pageref{G:asymdistance} \\
    $d(i,j)$ & the minimum distance between links $i,j$ & \textit{Links} & \pageref{G:symdistance}\\
    $\Delta(L)$ & the maximum ratio between link lengths in $L$ & & \pageref{G:delta} \\
    $L_i^+$ & the subset of links of $L$ longer than link $i$ & & \pageref{G:liplus} \\
    $L_i^-$ & the subset of links of $L$ shorter than link $i$ & & \pageref{G:liminus} \\\hline
    $\chi(G)$ & the chromatic number of graph $G$ & & \pageref{G:chi}\\
		$N_{G}(v)$ & the neighborhood of vertex $v$ in graph $G$ & {\textit{Graphs}} & \pageref{G:nv}\\
    $\cG_f(L)$ & the $f$-adjacency graph over the set $L$ &  & \pageref{G:gf} \\
    $\cG_\gamma(L)$ & the $f$-adjacency graph over $L$ with $f(x)=\gamma$ & & \pageref{G:ggamma}\\\hline
    $P$ & power assignment, $P:L\rightarrow \mathbb{R}_+$ & & \pageref{G:power}\\
    $\alpha$ & the path loss exponent & & \pageref{G:alpha}\\
    $\beta$ & the SINR threshold value & & \pageref{G:beta}\\
    $N$ & the ambient noise term & \textit{SINR} & \pageref{G:noise}\\
    $OPTS(L)$ & the optimum schedule length of set $L$ & & \pageref{G:opts}\\
    $I$ & the influence operator & & \pageref{G:influence}\\
    $I(L)$ & same as $\max_{i\in L}{I(L_i^-,i)}$ & & \pageref{G:il}\\
    \end{tabular}
    \caption{Notations.}
\end{table}

\mypara{SINR Model and Feasibility}
A \emph{power assignment} for a set $L$ of links is a function $P:L\rightarrow \mathbb{R}_+$. For each link $i$, $P(i)$\label{G:power} defines the power level used by the sender node $s_i$. 
In the \emph{physical model (or SINR model)} of communication~\cite{rappaport}, when using a power assignment $P$, a transmission of a link $i$ is successful if and only if 
\begin{equation}\label{E:sinr}
\frac{P(i)}{l_i^{\alpha}}\geq \beta\cdot \left(\sum_{j\in S\setminus \{i\}}\frac{P(j)}{d_{ji}^{\alpha}} + N\right),
\end{equation}
where $N$\label{G:noise} is a constant denoting the ambient noise, $\beta$\label{G:beta} denotes the minimum SINR (Signal to Interference and Noise Ratio) required for a message to be successfully received, $\alpha \in (2,6)$\label{G:alpha} is the path loss constant and $S$ is the set of links transmitting concurrently with link $i$. Here the left side of the inequality is interpreted as the received signal power of link $i$ and the sum on the right side is interpreted as the interference on link $i$ caused by concurrently transmitting links.

A set $S$ of links is called $P$-\emph{feasible} if the condition~(\ref{E:sinr}) holds for each link $i\in S$ when using power $P$. We say $S$ is \emph{feasible} if there exists a power assignment $P$ for which $S$ is $P$-feasible. Similarly, a collection of sets is $P$-feasible/feasible if each set in the collection is.
Note that we do not assume limits on the available power, which means that the noise term can be ignored.
The case of a maximum power limit requires primarily that the links that are close to maximum length be handled separately using the maximum power available \cite{kesselheimflexible}, something that remains to be studied.

\mypara{The Influence Operator and a Sufficient Condition for Feasibility} 
The \emph{influence operator} $I$\label{G:influence} is defined as follows. For links $i,j$, let
$I(i,j)=\frac{l_i^\alpha}{d(i,j)^\alpha}$ and define $I(i,i)=0$ for simplicity of notation.  The operator $I$ is
additively expanded: for a set $S$ of links and a link $i$, let $I(S,i)=\sum_{j\in S}I(j,i)$ and $I(i,S)=\sum_{j\in
  S}I(i,j)$.  We will use the notation $I(L)=\max_{i\in L}{I(L_i^-,i)}$\label{G:il}.

In order to identify feasible sets, we will use the following sufficient condition for feasibility.
\begin{theorem}\cite{kesselheimconstantfactor}\label{T:kesselheimconstant}
For any set of links $L$ in a metric space, if $I(L)<\frac{1}{2\cdot 3^{\alpha}(4\beta+2)}$, then $L$ is feasible.
\end{theorem}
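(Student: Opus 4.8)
The plan is to exhibit a power assignment that makes every link meet the SINR threshold. First I would eliminate the noise term: since we impose no upper bound on transmit powers, multiplying a power assignment $P$ by a large constant $K$ scales every received signal and every interference term by $K$ while leaving $N$ fixed. Hence it suffices to find $P$ for which the noise-free SINR exceeds, say, $2\beta$ at each link, i.e. $\frac{P(i)}{l_i^\alpha}\ge 2\beta\sum_{j\in L\setminus\{i\}}\frac{P(j)}{d_{ji}^\alpha}$ for all $i$; letting $K\to\infty$ then absorbs $N$.

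Writing $p_i=P(i)/l_i^\alpha$ for the normalized powers and using $d_{ji}\ge d(i,j)$, so that $P(j)/d_{ji}^\alpha=p_j\,l_j^\alpha/d_{ji}^\alpha\le p_j\,I(j,i)$, it is enough to produce a positive vector $p$ with $p_i\ge 2\beta\sum_{j}p_j\,I(j,i)$ for every $i$. One clean way to obtain such a $p$ is to build it by processing the links in increasing order of length and letting each link's budget pay for the interference caused by the (already processed) shorter links; by the hypothesis $I(L_i^-,i)<\tau$ together with a monotonicity argument, the normalized powers stay within a bounded range. The remaining task is then to check that each link also survives the interference it receives from the longer links, the ones still to come in the length order. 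This is the heart of the matter, and it is where the asymmetric distance $d(i,j)$ and the triangle inequality must be used carefully: applying the hypothesis not only to the pair $\{i,j\}$ but to pairs of longer links among themselves forces those longer links to be mutually well separated, and, crucially even in a general metric with no appeal to doubling, that separation suffices to bound the accumulated interference from longer links by $O(\tau)$. Choosing the constant in the threshold small enough then closes the argument; the explicit value $\frac{1}{2\cdot 3^\alpha(4\beta+2)}$ emerges from tracking the triangle-inequality losses, the factor $3^\alpha$ coming from bounding the several endpoint-to-endpoint distances (such as $d(s_i,r_j)$ and $d(r_i,r_j)$) by constant multiples of $d(i,j)$ and of the longer link length.

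I expect that last step, controlling the interference received from longer links, to be the real obstacle, and it is where naive approaches break down. No oblivious power law works for it: uniform power, the ``linear'' power $P(i)\propto l_i^\alpha$, and the ``mean'' power $P(i)\propto l_i^{\alpha/2}$ each fail on some instance that still satisfies the hypothesis (a short link encircled by a geometric progression of ever longer links kills the linear scheme, while a large link surrounded by many short links placed at the critical distance defeats the mean-power scheme), so the assignment must adapt to the instance. Likewise a bare Gershgorin-type bound on the associated affectance matrix is useless, since its row sums mix in the quantity ``total influence of longer links on a short link,'' which can be arbitrarily large (a single very long link close to a short one). The way past this is the amortization mentioned above: the crowding of long links near a link $i$ is charged against the influence those long links impose on one another, which is finite precisely because of the hypothesis. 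Verifying that this amortization actually yields an $O(\tau)$ bound, uniformly over all configurations and with no dimension assumption, is the technically delicate part of the proof.
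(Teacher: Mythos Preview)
The paper does not prove this theorem at all: it is quoted from \cite{kesselheimconstantfactor} and used as a black box. There is therefore no ``paper's own proof'' to compare against; the statement appears in Section~\ref{S:sinr} with a citation and no argument.

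As for your outline on its own merits: the overall shape is right and matches Kesselheim's actual argument --- build powers by processing links in increasing length order so that each link's budget covers the interference from shorter links, then argue separately that interference from longer links is controlled. But your write-up stops exactly at the nontrivial part. You correctly identify that bounding the interference from longer links is the crux and that no oblivious scheme works, and you gesture at an amortization (``the crowding of long links near a link $i$ is charged against the influence those long links impose on one another''), but you never carry it out. The missing mechanism is this: in Kesselheim's construction the power of a longer link $j$ is set proportional to the total interference it \emph{receives} from all shorter links, so when you sum the interference that all longer links cause at $i$, each term unpacks into a double sum that can be reindexed and bounded, via the triangle inequality $d(k,i)\le 3\max\{d(k,j),d(j,i)\}$ (whence the $3^\alpha$), by quantities of the form $I(L_j^-,j)$ --- exactly what the hypothesis controls. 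Without writing down that recursion for the powers and executing the reindexing, your proposal is a plan rather than a proof.
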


\mypara{Sensitivity of Feasible Sets} A set of links is called $p$-$P$-feasible if it is $P$-feasible with the parameter $\beta$ replaced with number $p$. The following sensitivity argument has proved useful. It shows, in particular, that constant factor changes to the threshold parameter $\beta$ do not affect asymptotic results by more than a constant factor.

 \begin{theorem}\cite{HB14}\label{T:signalstrengthening}
Let $p$, $p'$ be positive values, $P$ be a power assignment,  and  $L$ be a $p$-$P$-feasible set.
Then $L$ can be partitioned into $\left\lceil 2p'/p\right\rceil$ sets each of which is $p'$-$P$-feasible.
 \end{theorem}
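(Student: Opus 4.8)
When $p'\le p$ there is essentially nothing to do: a subset of a $p$-$P$-feasible set is $p$-$P$-feasible, hence $p'$-$P$-feasible, and $\lceil 2p'/p\rceil\in\{1,2\}$, so one would just take the whole set (and, if needed, a second empty class). So assume $p'>p$. Since no power limit is imposed, I take the noise term $N=0$ (as in Section~\ref{S:sinr}); then a set $S$ is $q$-$P$-feasible precisely when $\sum_{j\in S\setminus\{i\}}a^i_j\le 1/q$ for every $i\in S$, where $a^i_j=\frac{P(j)/d_{ji}^{\alpha}}{P(i)/l_i^{\alpha}}$ is the normalized interference of $j$ on $i$ and $a^i_i=0$. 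The hypothesis becomes $\sum_{j\in L\setminus\{i\}}a^i_j\le 1/p$ for all $i\in L$, and the target is to colour $L$ with $k=\lceil 2p'/p\rceil$ colours so that within each colour class every link receives normalized interference at most $1/p'$.

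\textbf{Core step (greedy colouring by averaging).} I would colour the links one at a time, in an order fixed below. When link $i$ comes up, let $\mu_t(i)=\sum_j a^i_j$, summed over the links $j$ already coloured $t$; this is the interference $i$ would pick up from the present contents of class $t$. Since $\sum_{t=1}^{k}\mu_t(i)\le\sum_{j\in L\setminus\{i\}}a^i_j\le 1/p$, there is a colour $t$ with $\mu_t(i)\le \tfrac{1}{kp}\le\tfrac{1}{2p'}$ (using $k\ge 2p'/p$); assign $i$ to such a colour. In the final colouring every link receives at most $1/(2p')$ interference from the links that precede it inside its own class. This is exactly where the factor $2$ in $\lceil 2p'/p\rceil$ goes: half of each link's interference budget $1/p'$ is reserved for earlier links.

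\textbf{Main obstacle.} It remains to bound, by the other half $1/(2p')$, the interference a link receives from the links placed \emph{after} it in its class; adding the two halves then yields $p'$-$P$-feasibility of every class. This is the delicate part, because $a^i_j$ is not symmetric: when a later link $j$ is coloured, the averaging step limits what $j$ \emph{receives}, not what $j$ \emph{inflicts} on the links already sharing its colour. My plan is to process the links in increasing order of length, so that ``later'' means ``longer'', and then to show --- using that $L$ is $p$-$P$-feasible together with the doubling geometry, in the spirit of Lemma~\ref{L:shortneighbor} and the clique-covering argument of Theorem~\ref{T:perfectness} --- that the longer links from which a given link can absorb significant interference while staying mutually compatible in $L$ are geometrically so sparse that the same greedy choice also keeps the interference received from longer links of the same colour below $1/(2p')$. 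Making this second half close with the constants exactly as stated, without leaking additional factors, is the crux of the argument.
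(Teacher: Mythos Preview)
The paper does not prove Theorem~\ref{T:signalstrengthening}; it is quoted from \cite{HB14} without argument. So there is no in-paper proof to compare against, but your proposal contains a genuine gap that cannot be closed along the lines you suggest.

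You correctly isolate the difficulty: your greedy rule controls only the interference link $i$ \emph{receives} from earlier links in its class, not what later (longer) links inflict on $i$. Your proposed remedy --- processing by increasing length and invoking doubling-metric sparsity \`a la Lemma~\ref{L:shortneighbor} and Theorem~\ref{T:perfectness} --- cannot work. First, the statement is purely about the affectance values under a \emph{fixed} power assignment and carries no doubling hypothesis; the paper uses it as a black box valid in any metric. Second, and decisively, even on the real line a two-link instance defeats the plan: take a short link $2$ and a longer link $1$, placed far apart, with $P(1)$ chosen so that $a^{2}_{1}=1/p$ while $a^{1}_{2}=p\,(l_1 l_2)^{\alpha}/(d_{12}d_{21})^{\alpha}$ is as small as you like. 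The pair is $p$-$P$-feasible. Your algorithm processes link $2$ first; when link $1$ arrives, its incoming interference from the current class is $a^{1}_{2}\approx 0\le 1/(2p')$, so it joins link $2$. But then link $2$ receives $a^{2}_{1}=1/p>1/p'$ from its \emph{single} longer classmate, and the class is not $p'$-$P$-feasible. With exactly one longer link present, no sparsity or clique-covering argument can force its contribution below $1/(2p')$; the geometry of Lemma~\ref{L:shortneighbor} is about distances, not about interference under an arbitrary $P$. The actual proof in \cite{HB14} is non-geometric: when placing $i$ one must also account for the interference $i$ \emph{inflicts} on the links already in the class (equivalently, require that $C_t\cup\{i\}$ remain $p'$-$P$-feasible), and then bound the number of classes opened by charging both failure modes against the budget $1/p$.
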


\mypara{Fading Metrics}
\emph{Fading metrics} are doubling metrics with doubling dimension $m < \alpha$. We shall assume,
without stating so explicitly, that the links are located in a fading metric.

\section{Capturing Feasibility with Conflict Graphs}\label{S:sandwich}

We show that for appropriate constant $\gamma>0$ and function $f$, SINR-feasibility is ``trapped'' between graph representations $\cG_{\gamma}$ and $\cG_{f}$; namely, each  feasible set is an independent set in $\cG_{\gamma}(L)$  and each independent set in $\cG_f(L)$ is feasible. 
 In particular, this holds for $f(x)=\gamma'\tlog(x)$ for an appropriate constant $\gamma'>0$, where the function $\tlog(x)$\label{G:tlog} is defined for $x\ge 1$ by
$\tlog(x)=\max(\log^{2/(\alpha - m)}(x), 1)$.
The gap between these approximations is quantified using our results in Sec.\ \ref{S:gaps}, ultimately leading to $O(\log^*\Delta)$ approximation for scheduling problems.

\subsec{Independence of Feasible Sets}
The theorem below is based on the simple observation that two links in the same ``highly feasible'' set must be spatially separated by at least a multiple of the length of the shorter link, implying $\gamma$-independence for some $\gamma>0$. The constant $\gamma$ may then be adapted using Thm.\ \ref{T:signalstrengthening}, i.e. a feasible set can be split into a constant number of $\gamma'$-independent sets for any constant $\gamma'>0$.
\begin{theorem}\label{T:lowerbound}
For any constant $\gamma > 0$, a $(\gamma+1)^\alpha$-feasible set is $\gamma$-independent. In particular, if $\beta>1$ then each feasible set is $(\beta^{1/\alpha}-1)$-independent.
\end{theorem}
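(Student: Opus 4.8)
The plan is to show that if a set $S$ is $(\gamma+1)^\alpha$-feasible under some power assignment $P$, then any two links $i,j\in S$ satisfy $d(i,j) > \gamma\cdot\min\{l_i,l_j\}$. Fix such a pair and assume without loss of generality that $l_i \le l_j$. First I would apply the SINR inequality (\ref{E:sinr}) with threshold $(\gamma+1)^\alpha$ to link $i$, keeping only the single interference term coming from link $j$ and dropping the noise term $N\ge 0$ and all other interferers. This yields
\[
\frac{P(i)}{l_i^\alpha} \ge (\gamma+1)^\alpha\cdot \frac{P(j)}{d_{ji}^\alpha},
\]
and symmetrically, applying the inequality to link $j$ and keeping only the interference from $i$,
\[
\frac{P(j)}{l_j^\alpha} \ge (\gamma+1)^\alpha\cdot \frac{P(i)}{d_{ij}^\alpha}.
\]

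Next I would multiply these two inequalities together and cancel the common factor $P(i)P(j)$ (which is positive), obtaining $d_{ij}^\alpha d_{ji}^\alpha \ge (\gamma+1)^{2\alpha} l_i^\alpha l_j^\alpha$, i.e. $d_{ij}\,d_{ji} \ge (\gamma+1)^2 l_i l_j$. Since $l_i \le l_j$, at least one of $d_{ij}, d_{ji}$ must be at least $(\gamma+1)\sqrt{l_i l_j} \ge (\gamma+1) l_i$; but this only bounds the larger of the two sender-receiver distances, whereas $d(i,j)$ is the minimum over all four cross-distances. To close this gap I would instead argue directly: from the first displayed inequality, $d_{ji}^\alpha \ge (\gamma+1)^\alpha l_i^\alpha \cdot P(j)/P(i)$; combined with the triangle inequality $d_{ij} \ge d_{ji} - l_i - l_j$ is too lossy, so the cleaner route is to observe that whichever of $d_{ij}$ or $d_{ji}$ realizes the smaller value, say $d_{ji}$, the corresponding inequality gives $d_{ji} \ge (\gamma+1) l_i (P(j)/P(i))^{1/\alpha}$, and the reverse inequality gives $d_{ij} \ge (\gamma+1) l_i (P(i)/P(j))^{1/\alpha}\cdot (l_i/l_j)^{... }$ — this is getting complicated, so the actual argument should be the multiplicative one above, reinterpreted geometrically: one shows $d(s_i,s_j) \ge d_{ji} - l_j$ and $d(r_i,r_j)\ge d_{ji}-l_i$ etc., feeding back into the product bound. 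Cleanly, the product bound $d_{ij}d_{ji}\ge(\gamma+1)^2 l_i l_j$ together with the triangle inequalities relating all four cross-distances to $d_{ij}$ (or $d_{ji}$) and the link lengths forces $d(i,j) \ge \gamma l_i = \gamma\min\{l_i,l_j\}$ after absorbing the additive $l_i, l_j$ terms into the slack between $(\gamma+1)^\alpha$ and $\gamma^\alpha$.

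The main obstacle, as the sketch above reveals, is the mismatch between $d(i,j)$, which is the minimum of all four inter-link distances, and the distances $d_{ij}, d_{ji}$ that appear naturally in the SINR formula — one must spend the ``$+1$'' in $(\gamma+1)^\alpha$ precisely to convert a bound on $d_{ij}$ or $d_{ji}$ into a bound on $d(i,j)$ via the triangle inequality. Once $\gamma$-independence is established, the second sentence of the theorem follows immediately: if $\beta > 1$, write $\beta = ((\beta^{1/\alpha}-1)+1)^\alpha$ and set $\gamma = \beta^{1/\alpha}-1 > 0$, so every $\beta$-feasible (hence feasible) set is $(\beta^{1/\alpha}-1)$-independent by the first part.
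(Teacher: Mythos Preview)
Your approach is the same as the paper's: derive the two pairwise SINR inequalities, multiply them to obtain the product bound $d_{ij}d_{ji} > (\gamma+1)^2 l_i l_j$, and then use the triangle inequality to convert this into a lower bound on $d(i,j)$. Your identification of the ``$+1$'' as exactly the slack needed for the triangle-inequality conversion is also right.

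However, the last step in your sketch is not actually carried out --- you try two routes that you yourself abandon, and then simply assert that the product bound plus triangle inequalities ``forces $d(i,j)\ge \gamma l_i$.'' The paper closes this cleanly by a short contradiction argument, handled separately for each of the four inter-link distances. For $d_{ij}$ (and symmetrically $d_{ji}$): if $d_{ij}\le\gamma\min\{l_i,l_j\}$, then $d_{ji}\le d_{ij}+l_i+l_j\le(\gamma+2)\max\{l_i,l_j\}$, so $d_{ij}d_{ji}\le\gamma(\gamma+2)l_il_j<(\gamma+1)^2 l_i l_j$, a contradiction. For $d(s_i,s_j)$: if $d(s_i,s_j)\le\gamma\min\{l_i,l_j\}$, then $d_{ji}\le d(s_i,s_j)+l_i\le(\gamma+1)l_i$ and $d_{ij}\le d(s_i,s_j)+l_j\le(\gamma+1)l_j$, again contradicting the product bound; the case of $d(r_i,r_j)$ is symmetric. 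Note that the relevant slack is $(\gamma+1)^2>\gamma(\gamma+2)$ at the level of the product, not a comparison of $(\gamma+1)^\alpha$ with $\gamma^\alpha$ as you wrote.
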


\begin{proof}
It suffices to show that two links in the same $(\gamma+1)^\alpha$-feasible set must be $\gamma$-independent. Let $i,j$ be such links. Since $i,j$ are in the same $(\gamma+1)^\alpha$-feasible set, the SINR condition implies that there is a power assignment $P$ such that:
\[
P(i)/l_i^\alpha > (\gamma+1)^\alpha P(j)/d_{ji}^\alpha\text{ and }P(j)/l_j^\alpha > (\gamma+1)^\alpha P(i)/d_{ij}^\alpha.
\]
By multiplying together the inequalities above, canceling $P(i)$ and $P(j)$ and raising to the power of $1/\alpha$, we obtain: 
\begin{equation}\label{E:independence}
d_{ij}d_{ji} > (\gamma+1)^2l_il_j.
\end{equation}
 Let us show first that $\min\{d_{ij},d_{ji}\} > \gamma \min\{l_i,l_j\}$. Indeed, if the opposite was true, e.g. if $d_{ij} \leq \gamma\min\{l_i,l_j\}$, the the triangle inequality would imply that $d_{ji}\leq d_{ij} + l_i + l_j\leq (\gamma+2)\max\{l_i,l_j\}$, which would contradict to (\ref{E:independence}): 
$
d_{ij}d_{ji}\leq \gamma(\gamma+2)l_il_j\leq (\gamma+1)^2l_il_j.
$

Now consider $d(s_i,s_j)$. Let us assume, for contradiction, that e.g.  $d(s_i,s_j)\le \gamma l_i \le \gamma l_j$. Then the triangle inequality would imply $d_{ji} \leq d(s_i,s_j) + l_i\leq (\gamma+1)l_i$ and $d_{ij}\leq d(s_i,s_j) + l_j\leq (\gamma+1)l_j$, which would again yield a contradiction to (\ref{E:independence}). We  prove in the same manner that $d(r_i,r_j) > \gamma\min\{l_i,l_j\}$ and conclude that $d(i,j) > \gamma \min\{l_i,l_j\}$, i.e., $i$ and $j$ are $\gamma$-independent.
\end{proof}

\subsec{Feasibility of Independent Sets}
Here we show that for a large enough constant $\gamma> 0$, $\gamma\tlog$-independence implies feasibility. In particular, we show that if a set $S$ is $\gamma\tlog$-independent then $I(S)= O(\gamma^{m-\alpha})$.
Since we assumed that $m<\alpha$, an appropriate choice of $\gamma$ yields feasibility via Thm.\ \ref{T:kesselheimconstant}.

The argument consists of the following stages. For any given link $i\in S$, we first split $S_i^-$ into length classes, or \emph{equilength subsets}, where each equilength subset contains links differing by at most a factor of 2 in length. We bound the influence on link $i$ for each of those subsets separately, and then combine those bounds using the additivity of the influence operator $I$. 

For each equilength subset $S$ the following common technique is applied: partition the plane into concentric annuli around the link $i$, count the number of links in each annulus and bound $I(S_i^-,i)$ based on these numbers and the fact that the links within the same annulus have almost the same influence on link $i$ (because they are at roughly the same distance from $i$ and have roughly similar lengths). The number of links in each annulus can be bounded using the doubling property of the space and independence of the links. The influence bound obtained for each subset $S$ is $O((\gamma \tlog(l_i/\ell))^{m-\alpha})$, where $\ell$ is the longest link length in $S$. The function $\tlog$ is chosen so that combining those bounds in a sum results in an upper bound of $I(S_i^-,i) = O(\gamma^{m-\alpha})$.

We will use the following two technical observations.
 \begin{fact}\label{L:convex}
 Let $\alpha\geq 1$ and $r\geq 0$ be real numbers. Then
 $
\frac{1}{ r^\alpha} - \frac{1}{(r+1)^\alpha}\leq \frac{\alpha}{(r+1)^{\alpha+1}}.
 $
 \end{fact}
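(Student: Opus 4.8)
The plan is to read the left-hand side as the increment of the smooth, decreasing, convex function $h(x)=x^{-\alpha}$ across the unit interval $[r,r+1]$, and to compare that increment with the derivative of $h$ at an endpoint. First I would record the elementary facts that $h$ is differentiable on $(0,\infty)$ with $h'(x)=-\alpha x^{-\alpha-1}$, and that $h$ is convex there, since $h''(x)=\alpha(\alpha+1)x^{-\alpha-2}\ge 0$ for $\alpha\ge 1$. Rewriting the difference as an integral,
\[
\frac{1}{r^\alpha}-\frac{1}{(r+1)^\alpha}=\int_r^{r+1}\bigl(-h'(t)\bigr)\,dt=\int_r^{r+1}\frac{\alpha}{t^{\alpha+1}}\,dt,
\]
reduces the statement to estimating this integral by $\frac{\alpha}{(r+1)^{\alpha+1}}$, so that the entire claim becomes a one-variable comparison with no geometry left in it.

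Next I would control the integrand $\alpha t^{-\alpha-1}$, which is positive and monotone decreasing on $[r,r+1]$, and replace the running variable $t$ by the endpoint that produces the denominator $(r+1)^{\alpha+1}$ appearing on the right-hand side. Equivalently, by the Mean Value Theorem the difference equals $\alpha\,\xi^{-\alpha-1}$ for a single point $\xi\in(r,r+1)$, and the target form is obtained by pinning $\xi$ against the upper endpoint $r+1$ and then reading off $\frac{\alpha}{(r+1)^{\alpha+1}}$. I would finish by checking the boundary behaviour, namely the degenerate endpoint $r\to 0^+$ and the extreme exponent $\alpha=1$, to confirm that the estimate is uniform over the full stated range $\alpha\ge 1$, $r\ge 0$, and that no hidden constraint on $r$ is needed.

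The main obstacle, and the step that demands the most care, is precisely this endpoint control. The mean-value point $\xi$ lies strictly inside $(r,r+1)$, so the crude tangent-line estimate for a convex function naturally attaches the denominator to the left endpoint $r$, whereas the statement asks for the right endpoint $r+1$; the difficulty is to identify which endpoint the monotonicity of the decreasing integrand forces and in which direction of inequality. I expect the whole proof to hinge on carrying out this comparison in the correct direction, with the remaining algebraic manipulations being entirely routine once the endpoint is fixed.
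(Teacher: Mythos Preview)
Your plan is sound in structure, but the obstacle you flag in your final paragraph is not merely a delicate point to be handled with care: it is fatal, because the inequality as stated is false. For $\alpha=1$ and $r=1$ the left-hand side equals $1-\tfrac12=\tfrac12$ while the right-hand side equals $\tfrac14$. More generally, your own integral representation
\[
\frac{1}{r^\alpha}-\frac{1}{(r+1)^\alpha}=\int_r^{r+1}\frac{\alpha}{t^{\alpha+1}}\,dt
\]
shows, since the integrand is positive and decreasing on $[r,r+1]$, that the left-hand side is \emph{at least} $\alpha/(r+1)^{\alpha+1}$, not at most. The bound your mean-value argument genuinely delivers is
\[
\frac{1}{r^\alpha}-\frac{1}{(r+1)^\alpha}\le\frac{\alpha}{r^{\alpha+1}},
\]
with the left endpoint in the denominator, exactly as you anticipated; there is no way to push the denominator to $r+1$ while keeping the inequality in the stated direction.

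The paper records this Fact without proof, so there is no argument to compare against; it is simply a misprint. In the only place the Fact is invoked (the proof of Lemma~\ref{L:garbage}), the corrected bound works equally well: one obtains $M_r-M_{r+1}\le \alpha\,2^\alpha/(q+r-2)^{\alpha+1}$ in place of the stated $(q+r-1)^{\alpha+1}$, and since $q\ge 1$ and $r\ge 2$ there, the ratio $(q+r-1)/(q+r-2)$ is at most $2$, so the ensuing series still sums to $O(q^{m-\alpha})$ with only a change of constant.
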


\begin{fact}\label{L:integral}
Let $g(x) = \displaystyle\frac{1}{(q + x)^{\delta}}$, where $\delta > 1$ and $q\geq 1$. Then 
$
\sum_{r=0}^\infty {g(r)}= \displaystyle O\left(q^{1-\delta}\right).
$
\end{fact}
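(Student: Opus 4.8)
The plan is a routine integral comparison. The function $g(x)=(q+x)^{-\delta}$ is positive and strictly decreasing on $[0,\infty)$, so for every $r\ge 1$ we have $g(r)\le \int_{r-1}^{r} g(x)\,dx$; summing these inequalities over $r\ge 1$ and adding the single term $g(0)$ yields
\[
\sum_{r=0}^{\infty} g(r)\ \le\ g(0)+\int_{0}^{\infty} g(x)\,dx .
\]
It therefore suffices to estimate the two terms on the right-hand side separately.

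The integral is evaluated directly: since $\delta>1$, the antiderivative $(q+x)^{1-\delta}/(1-\delta)$ tends to $0$ as $x\to\infty$, so $\int_{0}^{\infty}(q+x)^{-\delta}\,dx = q^{1-\delta}/(\delta-1)$. For the boundary term, $q\ge 1$ together with $\delta>1$ gives $g(0)=q^{-\delta}\le q^{1-\delta}$. Combining the two estimates, $\sum_{r=0}^{\infty} g(r)\le\bigl(1+\tfrac{1}{\delta-1}\bigr)\,q^{1-\delta}=O(q^{1-\delta})$.

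There is essentially no obstacle here: the only points that need to be flagged are that the improper integral converges precisely because $\delta>1$, and that the constant absorbed into the $O(\cdot)$ depends only on $\delta$ and not on $q$ — which is exactly what is needed in the applications, where $\delta$ equals the fixed quantity $\alpha-m$ while $q$ ranges over ratios $l_i/\ell\ge 1$. If one prefers to avoid integrals, the same bound follows by telescoping the differences $(q+r)^{1-\delta}-(q+r+1)^{1-\delta}$ and bounding each difference from below via the mean value theorem.
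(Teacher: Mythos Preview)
Your proof is correct; the paper states this as a ``Fact'' without proof, so there is nothing to compare against. One minor slip in your commentary (not in the proof): in the application inside Lemma~\ref{L:garbage}, the exponent is $\delta=\alpha-m+1$, not $\alpha-m$.
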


The following lemma bounds the influence of an equilength $1$-independent set $S$ on a long link $i$ that is $f$-independent from the set $S$. This will be the main building block to be used for showing that $\gamma \tlog$-independent sets are feasible. The proof uses the annuli argument mentioned above.

\begin{lemma}\label{L:garbage}
 Let $f$ be a non-decreasing function, such that $f(x)\geq 1$ whenever $x\geq 1$. Let $S$ be an equilength  $1$-independent set of links, and let $i$ be a link s.t. for each $j\in S$, $l_i\geq l_j$ and $i$ and $j$ are $f$-independent. 
Then
 $
  I(S,i) = O\left((f(l_i/\ell))^{m - \alpha} \right),
 $
where $\ell$ denotes the longest link length in $S$.
 \end{lemma}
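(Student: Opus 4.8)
The plan is to use the annuli argument sketched in the text. First I would set up notation: let $\ell$ be the longest link length in $S$, and since $S$ is equilength, every $j \in S$ has $l_j \in (\ell/2, \ell]$. Let $q = l_i \cdot f(l_i/\ell)/\ell$ (or a constant multiple thereof), which by $f$-independence is a lower bound, up to constants, on $d(i,j)/\ell$ for every $j \in S$ — more precisely, $d(i,j) > \ell \cdot f(l_i/\ell) \geq \ell \cdot f(l_i/l_j)$ is not quite what we want, so I would instead use $d(i,j) > l_j f(l_i/l_j) \geq (\ell/2) f(l_i/\ell)$, noting $f$ is non-decreasing and $l_j \le \ell$, $l_i/l_j \ge l_i/\ell$. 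So all links of $S$ lie outside a ball of radius $\Omega(\ell f(l_i/\ell))$ around (a fixed node of) $i$.

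Next I would partition the space into concentric annuli $A_r = \{x : r\ell \le d(x, s_i) < (r+1)\ell\}$ for integers $r \ge r_0$, where $r_0 = \Theta(f(l_i/\ell))$ is the index of the innermost annulus meeting $S$ (and I would pick $r_0 \ge 1$ since $f(l_i/\ell) \ge f(1) \ge 1$; if $f(1)<1$ one still has $r_0 = \Omega(f(l_i/\ell))$ only when this is $\ge 1$, so I would take $q = \max(1, c f(l_i/\ell))$ to be safe, and the final bound still reads $O(f(l_i/\ell)^{m-\alpha})$ because $q^{m-\alpha} = O(f(l_i/\ell)^{m-\alpha})$ when $f(l_i/\ell)\ge 1$). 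Within annulus $A_r$, every link $j \in S$ has $d(i,j) = \Theta(r\ell)$, hence $I(j,i) = l_j^\alpha/d(i,j)^\alpha = \Theta(\ell^\alpha/(r\ell)^\alpha) = \Theta(r^{-\alpha})$. The number of links of $S$ in $A_r$ is $O(r^m)$: the senders (or the nearer endpoints) of links in $S \cap A_r$ lie in a ball of radius $O(r\ell)$ and are pairwise at distance $> \min\{l_j,l_k\} > \ell/2$ apart by $1$-independence, so the doubling property gives $O((r\ell/\ell)^m) = O(r^m)$ of them. Combining, $I(S \cap A_r, i) = O(r^{m-\alpha})$.

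Finally I would sum over annuli:
\[
I(S,i) = \sum_{r \ge r_0} I(S \cap A_r, i) = O\!\left(\sum_{r \ge r_0} r^{m-\alpha}\right) = O\!\left(\sum_{t \ge 0} (r_0 + t)^{m-\alpha}\right) = O\!\left(r_0^{\,1 - (\alpha - m)}\right) = O\!\left(r_0^{\,m-\alpha}\right),
\]
where the second-to-last step is Fact~\ref{L:integral} with $\delta = \alpha - m > 1$ (using that we are in a fading metric, $m < \alpha$) and $q = r_0 \ge 1$. Since $r_0 = \Theta(f(l_i/\ell))$ (when $f(l_i/\ell)\ge 1$, which holds as $l_i \ge \ell$ so $l_i/\ell \ge 1$ and $f(l_i/\ell)\ge 1$ by hypothesis), this gives $I(S,i) = O(f(l_i/\ell)^{m-\alpha})$ as claimed.

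The main obstacle — really the only delicate point — is making the innermost-annulus index and the distance lower bound consistent: I must carefully derive $d(i,j) = \Omega(\ell f(l_i/\ell))$ from the $f$-independence condition $d(i,j) > l_{\min} f(l_{\max}/l_{\min}) = l_j f(l_i/l_j)$, using $l_j \in (\ell/2, \ell]$ and monotonicity of $f$, and then verify that the constant hidden in $r_0$ does not interfere with applying Fact~\ref{L:integral} (which needs $q \ge 1$). Everything else is a routine doubling-dimension counting argument plus the geometric-sum estimate already packaged as Fact~\ref{L:integral}.
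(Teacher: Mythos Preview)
Your annuli argument has a genuine gap in the final summation. You bound $|S \cap A_r| = O(r^m)$ by placing the near endpoints in a ball of radius $O(r\ell)$; combined with $I(j,i) = O(r^{-\alpha})$ this gives $I(S \cap A_r, i) = O(r^{m-\alpha})$ and hence $\sum_{r \geq r_0} r^{m-\alpha}$. Two problems: (i) Fact~\ref{L:integral} with $\delta = \alpha - m$ yields $O(r_0^{\,1-\delta}) = O(r_0^{\,m-\alpha+1})$, not $O(r_0^{\,m-\alpha})$ --- the arithmetic slip is $1-(\alpha-m)=m-\alpha+1$, not $m-\alpha$; (ii) the fading-metric hypothesis is only $m<\alpha$, not $\alpha-m>1$, so your series need not even converge. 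The per-annulus count $O(r^m)$ is the best one can say for a single annulus in a doubling metric (there is no $O(r^{m-1})$ shell bound in general), so this is not a matter of sharpening that estimate.

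The paper repairs exactly this point with an Abel-summation step you are missing. It bounds the \emph{cumulative} count $|S_r|=O((q+r)^m)$ in the ball of radius $(q+r-1)\ell/2$ and rewrites $\sum_r (|S_r|-|S_{r-1}|)\,M_r$ as $\sum_r |S_r|(M_r-M_{r+1})$ plus boundary terms. Since $M_r-M_{r+1}=O((q+r)^{-(\alpha+1)})$ by Fact~\ref{L:convex}, the summand becomes $O((q+r)^{m-\alpha-1})$ and Fact~\ref{L:integral} now applies with $\delta=\alpha-m+1>1$, giving precisely $O(q^{m-\alpha})$. In effect, summation by parts recovers the factor of $(q+r)^{-1}$ that the naive per-annulus bound throws away. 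A minor secondary issue: centering all annuli at $s_i$ does not control $d(i,j)$ when the nearest node of $j$ is closer to $r_i$ (recall $l_i$ may be much larger than $\ell$); the paper handles this by first splitting $S$ into the halves $S'$, $S''$ closer to $r_i$ and $s_i$ respectively.
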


\begin{proof}
Let us denote $q=f(l_i/\ell)$.  Note that $q\geq 1$ because $l_i/\ell\ge 1$. 

 Let us split $S$ into two subsets $S'$ and $S''$, where $S'$ contains the links of $S$ that are closer to $r_i$ than to $s_i$,
i.e., $S'=\{j\in S: \min\{d(s_j,r_i), d(r_j,r_i)\} \leq \min\{d(s_j,s_i), d(r_j,s_i)\}\}$ and $S''=S\setminus S'$. Let us bound $I(S',i)$ first.

For a link $j\in S'$, let $p_j$ denote the endpoint of link $j$ that is closest to $r_i$, i.e., $d(i,j)=d(p_j,r_i)$. 
Consider the ``chain'' of subsets $S_1\subseteq S_2\subseteq \dots\subseteq S'$, where 
 \[
 S_r=\{j\in S': d(j,i)=d(p_j,r_i)\leq q \ell/2+(r-1)\ell/2\}.
 \]
Let $M_r\geq\max_{j\in S_r\setminus S_{r-1}}{I(j,i)}$ be some upper bound on the maximum of $I(j,i)$ in the annulus $S_{r}\setminus S_{r-1}$ for $r=2,3,\dots$. The value $I(S',i)$ can be bounded as follows:
 \begin{align}
\nonumber I(S',i) &= I(S_1,i) + \sum_{r\geq 2}{\sum_{j\in S_r\setminus S_{r-1}}{I(j,i)}}\\
\nonumber &\leq I(S_1,i) + \sum_{r\geq 2}{M_r \cdot|S_r\setminus S_{r-1}|}\\
\nonumber &=I(S_1,i) + \sum_{r\geq 2}{M_r(|S_r|- |S_{r-1}|)}\\
\label{E:gupperbound} &=I(S_1,i) - |S_1|M_2 + \sum_{r\geq 2}{|S_r|(M_r - M_{r+1})},
 \end{align}
 where the last line follows by a simple rearrangement of the sum. We will next bound the sizes of subsets $S_r$ and find bounds $M_r$. 
 \begin{claim}\label{C:gs1}
 $S_1=\emptyset.$
 \end{claim}
 \begin{proof}
For each link $j\in S'$, $d(p_j,r_i)=d(i,j) > l_j q\geq \ell q/2$ because $i$ and $j$ are $f$-independent and $S$ is an equilength set with maximum link length $\ell$ and minimum link length at least $\ell/2$. 
 \end{proof}
 \begin{claim}\label{C:gs2}
 For each $r\geq 2$, $|S_r|\leq C \left(q+r-1\right)^{m}$, where $C$ is an absolute constant.
 \end{claim}
 \begin{proof}
 We bound $|S_r|$ using the doubling property of the metric space. Consider any $j,k\in S_r$ such that $l_j\geq l_k$. By the assumption, $j,k$ are  $1$-independent; hence,
 $ d(p_j,p_k) \geq d(j,k)> \min\{l_j,l_k\}\geq \ell/2. $
 By the definition of $S_r$, $d(p_j,r_i)\leq q \ell/2+(r-1)\ell/2$ for each $j\in S_r$. Because the metric space has doubling dimension $m$, the number of points $p_j$ with $j\in S_r$ (hence, also the size $|S_r|$) can be bounded as follows: 
\[
|S_r|=|\{p_j\}_{j\in S_r}|< C\cdot \left(\frac{ q\ell/2+(r-1)\ell/2 }{\ell/2}\right)^{m} = C \left(q+r-1\right)^{m}.
 \]
 \end{proof}
\begin{claim}\label{C:gmr}
Let $M_r=\frac{2^\alpha}{\left(q+r-2\right)^\alpha}$. For each $r\geq 2$, $\max_{j\in S_{r}\setminus S_{r-1}}\{I(j,i)\} < M_r$.
\end{claim}
\begin{proof}
For each $r>1$ and for any link $j\in S_r\setminus S_{r-1}$, we have that $l_j \leq \ell$ and $d(i,j) > q\ell/2+(r-2)\ell/2$; hence, 
$
I (j, i) = \frac{l_j^{\alpha}}{d(i,j)^\alpha}
 < \left(\frac{\ell}{q\ell/2+(r-2)\ell/2}\right)^\alpha
 =\frac{2^\alpha}{\left(q+r-2\right)^\alpha}.
$
\end{proof}
By Claim~\ref{C:gs1}, the first two terms of (\ref{E:gupperbound}) are zero. 
Let us fix any $r\geq 2$. Let $M_r$ be as in Claim~\ref{C:gmr}. By Fact~\ref{L:convex}, 
$
M_r - M_{r+1}\leq \alpha2^{\alpha}/(q+r-1)^{\alpha+1},
$
and by Claim~\ref{C:gs2},
\[
|S_r|(M_r - M_{r+1})<\frac{C\alpha2^{\alpha}(q+r-1)^m}{(q+r-1)^{\alpha+1}}=\frac{C\alpha2^{\alpha}}{(q+r-1)^{\alpha - m +1}}.
\]
 By plugging these inequalities into (\ref{E:gupperbound}) and using Fact~\ref{L:integral}, we get the desired bound for $I(S',i)$:
\begin{align*}
I(S',i) <\sum_{r\geq 2}{|S_r|(M_r - M_{r+1})}< C\alpha2^{\alpha}\sum_{r\geq 2}{\frac{1}{(q+r-1)^{\alpha-m+1}}}\in O(q^{m-\alpha}).
\end{align*}
The proof holds symmetrically for the set $S''$. Recall that $S''$ consists of the links of $S$ that are closer to the sender $s_i$ than to the receiver $r_i$. Now, we can define the set $\{p_j\}_{j\in S''}$ where $p_j$ is the endpoint of link $j$ that is closest to $r_i$, for each $j\in S''$. The rest of the proof will be identical, by replacing $r_i$ with $s_i$ in the formulas.
\end{proof}

 Having a bound for the influence of each equilength set, we can now split the whole set into equilength subsets (length classes), bound the influence of each equilength subset using Lemma~\ref{L:garbage} and combine them into a series that converges when we choose $f(x)=\gamma\tlog(x)$.
  \begin{theorem}\label{T:main}
   Let $L$ be a $\gamma\tlog$-independent set with $\gamma\ge 1$. 
Then
 $
 I(L) = O\left(\gamma^{m-\alpha} \right).
 $
 \end{theorem}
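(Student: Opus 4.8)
The plan is to fix a link $i \in L$ and bound $I(L_i^-, i) = \sum_{j \in L_i^-} I(j,i)$; taking the maximum over $i$ then gives $I(L)$. First I would partition $L_i^-$ into \emph{length classes} (equilength subsets) $S_0, S_1, S_2, \dots$, where $S_t$ contains the links $j$ with $l_i/2^{t+1} < l_j \le l_i/2^t$. Each $S_t$ is an equilength set, and since $L$ is $\gamma\tlog$-independent with $\gamma \ge 1$, each $S_t$ is in particular $1$-independent, and every $j \in S_t$ is $\gamma\tlog$-independent (hence $f$-independent for $f = \gamma\tlog$) from $i$. So Lemma~\ref{L:garbage} applies with $\ell = \ell_t$ the longest link length in $S_t$, which satisfies $\ell_t \le l_i/2^t$, i.e. $l_i/\ell_t \ge 2^t$. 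This yields $I(S_t, i) = O\big((\gamma\tlog(l_i/\ell_t))^{m-\alpha}\big) = O\big((\gamma\tlog(2^t))^{m-\alpha}\big)$, using that $\tlog$ is non-decreasing and $m - \alpha < 0$ (fading metric), so larger argument makes the bound smaller.

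Next I would substitute the definition $\tlog(x) = \max(\log^{2/(\alpha-m)}(x), 1)$. For $t \ge 1$ we have $\log(2^t) = t \ge 1$, so $\tlog(2^t) = t^{2/(\alpha-m)}$, and therefore
\[
(\gamma\tlog(2^t))^{m-\alpha} = \gamma^{m-\alpha} \cdot \big(t^{2/(\alpha-m)}\big)^{m-\alpha} = \gamma^{m-\alpha} \cdot t^{-2}.
\]
This is the whole point of the peculiar exponent $2/(\alpha-m)$ in the definition of $\tlog$: it is engineered precisely so that the per-class bound collapses to $\gamma^{m-\alpha}/t^2$, a summable sequence. Then I would sum over all classes:
\[
I(L_i^-, i) = \sum_{t \ge 0} I(S_t, i) = O\Big(\gamma^{m-\alpha} \sum_{t\ge 0} \frac{1}{(t+1)^2}\Big) = O(\gamma^{m-\alpha}),
\]
where the $t = 0$ term is handled separately (or absorbed by shifting the index, using $\tlog \ge 1$ always so $I(S_0,i) = O(\gamma^{m-\alpha})$), and $\sum_t (t+1)^{-2} = \pi^2/6 = O(1)$. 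Since this bound is uniform over $i \in L$, we get $I(L) = \max_i I(L_i^-, i) = O(\gamma^{m-\alpha})$.

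The main obstacle is not any single estimate but making sure the hypotheses of Lemma~\ref{L:garbage} are genuinely met for each length class and that the edge cases are clean: one must check that each $S_t$ really is $1$-independent (immediate from $\gamma \ge 1$ and the definition of $\gamma\tlog$-independence, since $\gamma\tlog(x) \ge \gamma \ge 1$), that the "$l_i \ge l_j$ and $i,j$ $f$-independent" requirement holds for $f = \gamma\tlog$ restricted to arguments $\ge 1$, and that $f(x) = \gamma\tlog(x) \ge 1$ for $x \ge 1$ so the lemma's standing assumption on $f$ applies. The only mild subtlety is the smallest class $S_0$, where $\tlog(l_i/\ell_0)$ could equal $1$ rather than a power of a logarithm; there the per-class bound is just $O(\gamma^{m-\alpha})$, which is fine as a single additive term. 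Everything else is the geometric-series bookkeeping already sketched above, so the proof is short once Lemma~\ref{L:garbage} is in hand.
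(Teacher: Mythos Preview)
Your proposal is correct and follows essentially the same approach as the paper: partition $L_i^-$ into dyadic length classes, apply Lemma~\ref{L:garbage} to each class, and observe that the exponent $2/(\alpha-m)$ in $\tlog$ makes the per-class bounds sum like $\sum t^{-2}$. The only cosmetic difference is the indexing direction---you index classes downward from $l_i$ (so $l_i/\ell_t \ge 2^t$ directly), while the paper indexes upward from the shortest link $\ell_0$ and works with $d-t$; your convention is arguably cleaner but the content is identical.
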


  \begin{proof}
  Let us fix an arbitrary link $i\in L$.
We have for each $j\in L_i^-$,
$d(i,j) > \gamma l_{j}\tlog(l_i/l_j)$ because of $\gamma\tlog$-independence and that $l_i\ge l_j$. Let $\ell_0$ denote the minimum link length in $L_i^-$. 
We partition $L_i^-$ into at most $\lceil\log{l_i/\ell_0}\rceil$ equilength subsets $L_1, L_2,\dots$ as follows: 
\[
L_t=\{j\in L_i^-: 2^{t-1}\ell_0 \leq l_j<2^t \ell_0\},
\]
 for $t=1,2,\dots$. Let $\ell_t$ be the longest link length in $L_t$.
The conditions of Lemma~\ref{L:garbage} hold for each $L_t$: it is an equilength $1$-independent set ($\gamma\tlog$-independence implies $1$-independence for $\gamma\ge 1$) and is $f$-independent from link $i$, with $f=\gamma\tlog$. Note also that $f(x)\geq 1$ when $x\geq 1$. Applying the lemma, we obtain
\[
I(L_t,i) = O\left((\gamma\tlog(l_i/\ell_t))^{m-\alpha}\right).
\]
Let $d$ denote the largest index $t$ for which $L_t$ is not empty. By the definition of function $\tlog$ we have that $\tlog(l_i/\ell_d)\geq 1$ and for each $t<d$, $\tlog(l_i/\ell_t)=\log^{2/(\alpha-m)}(l_i/\ell_t)\geq (d-t)^{2/(\alpha-m)}$. Thus, 
\[
I(L_i^-,i)= \sum_{t=1}^{d}{I(L_t,i)}\leq c\gamma^{m-\alpha}\left(1+ \sum_{t=1}^d{\left((d-t)^{2/(\alpha-m)}\right)^{m-\alpha}}\right)=O(\gamma^{m-\alpha}),
\]
where $c$ is a constant. Since this holds for arbitrary $i\in L$, we have that $I(L)=O(\gamma^{m-\alpha})$.
 \end{proof}

Since the theorem above holds for any $\gamma\geq 1$, we 
obtain the desired result.
\begin{corollary}\label{C:mainresult}
There is a constant $\gamma\geq 1$ such that each $\gamma\tlog$-independent set is feasible.
\end{corollary}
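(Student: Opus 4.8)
The plan is to combine Theorem~\ref{T:main} with Kesselheim's sufficient condition (Theorem~\ref{T:kesselheimconstant}) in the most direct way possible. Theorem~\ref{T:main} tells us that for any $\gamma\ge 1$, a $\gamma\tlog$-independent set $L$ satisfies $I(L)=O(\gamma^{m-\alpha})$; more precisely, there is a constant $c_0>0$ (depending only on $\alpha$ and $m$) such that $I(L)\le c_0\gamma^{m-\alpha}$. Theorem~\ref{T:kesselheimconstant} tells us that $I(L)<\frac{1}{2\cdot 3^\alpha(4\beta+2)}$ suffices for feasibility. So the whole argument is: choose $\gamma$ large enough that $c_0\gamma^{m-\alpha}<\frac{1}{2\cdot 3^\alpha(4\beta+2)}$.

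The one point that actually requires the hypothesis of the paper — namely that we are in a \emph{fading metric}, $m<\alpha$ — is that the exponent $m-\alpha$ is negative. Hence $\gamma^{m-\alpha}\to 0$ as $\gamma\to\infty$, so the inequality $c_0\gamma^{m-\alpha}<\frac{1}{2\cdot 3^\alpha(4\beta+2)}$ can indeed be satisfied by taking $\gamma$ sufficiently large (and $\ge 1$, as Theorem~\ref{T:main} requires). Concretely, I would set
\[
\gamma=\max\left\{1,\ \left(2c_0\cdot 3^\alpha(4\beta+2)\right)^{1/(\alpha-m)}\right\},
\]
and then for any $\gamma\tlog$-independent set $L$ we get $I(L)\le c_0\gamma^{m-\alpha}\le \frac{1}{2\cdot 3^\alpha(4\beta+2)}<\frac{1}{2\cdot 3^\alpha(4\beta+2)}$ (with a strict inequality obtained by, say, doubling the constant inside the max), so $L$ is feasible by Theorem~\ref{T:kesselheimconstant}.

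There is no real obstacle here — this corollary is a bookkeeping consequence of the two theorems it cites, and all the work has already been done in establishing Theorem~\ref{T:main}. The only thing to be mildly careful about is that the constant $\gamma$ depends on $\beta$, $\alpha$, and $m$, all of which are fixed parameters of the model, so $\gamma$ is legitimately a constant; and that one should phrase Theorem~\ref{T:main}'s conclusion with an explicit constant rather than $O(\cdot)$ notation so that the chain of inequalities is rigorous. I would state the proof in two or three sentences exactly along these lines.

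\begin{proof}
By Theorem~\ref{T:main}, there is a constant $c_0>0$, depending only on $\alpha$ and $m$, such that every $\gamma\tlog$-independent set $L$ with $\gamma\ge 1$ satisfies $I(L)\le c_0\gamma^{m-\alpha}$. Since we work in a fading metric, $m<\alpha$, so $\alpha-m>0$ and $\gamma^{m-\alpha}$ tends to $0$ as $\gamma\to\infty$. Choose
\[
\gamma=\max\left\{1,\ \left(4c_0\cdot 3^\alpha(4\beta+2)\right)^{1/(\alpha-m)}\right\},
\]
which is a constant depending only on $\alpha$, $\beta$, and $m$, and satisfies $\gamma\ge 1$. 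Then for any $\gamma\tlog$-independent set $L$ we have
\[
I(L)\le c_0\gamma^{m-\alpha}\le \frac{1}{4\cdot 3^\alpha(4\beta+2)}<\frac{1}{2\cdot 3^\alpha(4\beta+2)},
\]
so by Theorem~\ref{T:kesselheimconstant} the set $L$ is feasible.
\end{proof}
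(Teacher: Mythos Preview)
Your proposal is correct and follows exactly the approach the paper intends: the paper does not even write out a proof of this corollary, merely noting that ``since the theorem above holds for any $\gamma\ge 1$, we obtain the desired result,'' which is precisely the combination of Theorem~\ref{T:main} with Theorem~\ref{T:kesselheimconstant} that you carry out explicitly.
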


\section{Implications}\label{S:mainresult}

\subsec{{\scheduling} and {\wcapacity} Approximation}
Using our method of capturing feasibility with graphs, we approximate {\scheduling} and {\wcapacity} problems within a factor of $O(\log^*{\Delta})$. Let us first formally define the problems and related terms.

A \emph{schedule} for a set $L$ of links is a partition of $L$ into feasible subsets (or \emph{slots}). 
The \emph{length} of the schedule is its number of slots.
The {\scheduling} \emph{problem} is to find a minimum length schedule for a given set $L$. The length of an optimal schedule for $L$ is denoted $OPTS(L)$\label{G:opts}.

 The {\wcapacity} problem is the generalized dual of {\scheduling}, where given a set $L$ of links  with weights $\omega:L\rightarrow \mathbb{R}^+$, the goal is to find a feasible subset $S\subseteq L$ of maximum weight $\sum_{i\in S}{\omega(i)}$.

\begin{theorem}
There are polynomial $O(\log^*{\Delta})$-approximation algorithms for {\scheduling} and {\wcapacity}. The approximation is obtained by 
coloring the graph $\cG_{\gamma\tlog}$ (for an appropriate constant $\gamma\ge 1$) in the case of {\scheduling} and by approximating its maximum weighted independent set in the case of {\wcapacity}.
\end{theorem}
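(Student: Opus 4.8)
The plan is to assemble the three ingredients developed earlier into a sandwich argument. First I would recall the two bounds from Section~\ref{S:sandwich}: by Theorem~\ref{T:lowerbound}, every feasible set is $\gamma_0$-independent for the constant $\gamma_0 = \beta^{1/\alpha}-1$ (if $\beta>1$; otherwise apply Thm.~\ref{T:signalstrengthening} first to pass to a large enough threshold, losing only a constant factor), so every feasible set is an independent set in $\cG_{\gamma_0}(L)$. By Corollary~\ref{C:mainresult}, there is a constant $\gamma\ge 1$ such that every independent set in $\cG_{\gamma\tlog}(L)$ is feasible. Hence $OPTS(L) \ge \chi(\cG_{\gamma_0}(L))$ is false in general --- rather, the correct direction is $\chi(\cG_{\gamma_0}(L)) \le OPTS(L)$ and $OPTS(L) \le \chi(\cG_{\gamma\tlog}(L))$, since a proper coloring of $\cG_{\gamma\tlog}(L)$ partitions $L$ into independent sets of $\cG_{\gamma\tlog}(L)$, each of which is feasible.

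Next I would close the gap between these two chromatic numbers using Section~\ref{S:gaps}. Since $\tlog$ is strongly sub-linear (it is a constant power of $\log$, up to the $\max(\cdot,1)$ truncation, which is harmless), Theorem~\ref{T:inductiveness} gives $\chi(\cG_{\gamma\tlog}(L)) = O(\tlog^*(\Delta))\cdot\chi(\cG_{\gamma}(L))$ for any constant $\gamma$, and Theorem~\ref{T:constantsens} gives $\chi(\cG_{\gamma}(L)) = \Theta(\chi(\cG_{\gamma_0}(L)))$ since $\gamma,\gamma_0$ are both positive constants. Combining, $\chi(\cG_{\gamma\tlog}(L)) = O(\tlog^*(\Delta))\cdot\chi(\cG_{\gamma_0}(L)) = O(\tlog^*(\Delta))\cdot OPTS(L)$. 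Finally, by the examples following the definition of $f^*$, for $f(x)=\log^{(c)}x$ one has $f^*(x) = \lceil\log^*x/c\rceil$, and since $\tlog(x) = \Theta(\log^{2/(\alpha-m)}x)$, a constant power of $\log$, we get $\tlog^*(\Delta) = \Theta(\log^*\Delta)$. Thus $\chi(\cG_{\gamma\tlog}(L)) = O(\log^*\Delta)\cdot OPTS(L)$.

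For the algorithm, Theorem~\ref{T:perfectness} says $\cG_{\gamma\tlog}(L)$ is constant-simplicial (with the increasing-length elimination order), so by the cited results \cite{ackoglu,kammertholey,yeborodin} we can compute in polynomial time a proper coloring with $O(\chi(\cG_{\gamma\tlog}(L)))$ colors, and each color class, being an independent set in $\cG_{\gamma\tlog}(L)$, is feasible by Corollary~\ref{C:mainresult}. This yields a feasible schedule of length $O(\chi(\cG_{\gamma\tlog}(L))) = O(\log^*\Delta)\cdot OPTS(L)$, giving the claimed approximation for {\scheduling}. For {\wcapacity}, the same constant-simplicial structure lets us compute in polynomial time a feasible (independent in $\cG_{\gamma\tlog}(L)$) set whose weight is within a constant factor of $\max\{\omega(S): S \text{ independent in } \cG_{\gamma\tlog}(L)\}$; and since any feasible set is independent in $\cG_{\gamma_0}(L)$, which by Thm.~\ref{T:inductiveness} and \ref{T:constantsens} has the property that any of its independent sets can be covered by $O(\log^*\Delta)$ independent sets of $\cG_{\gamma\tlog}(L)$, the optimum weight of a feasible set is at most $O(\log^*\Delta)$ times the optimum independent-set weight in $\cG_{\gamma\tlog}(L)$ --- so the returned set has weight $\Omega(1/\log^*\Delta)$ times the true optimum.

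The step I expect to require the most care is the {\wcapacity} direction: bounding the optimum feasible weight against the optimum independent-set weight in $\cG_{\gamma\tlog}(L)$. The coloring argument gives this for unweighted counting directly, but in the weighted case I need that any set independent in $\cG_{\gamma_0}(L)$ (in particular any feasible set) decomposes into $O(\log^*\Delta)$ sets each independent in $\cG_{\gamma\tlog}(L)$ --- this follows from the $O(\log^*\Delta)$-inductiveness established inside the proof of Theorem~\ref{T:inductiveness} (an inductive graph on $k$ colors, hence $k+1$ independent sets), combined with Theorem~\ref{T:constantsens} to switch $\gamma_0$ for $\gamma$. Then a pigeonhole/averaging over the $O(\log^*\Delta)$ pieces of the optimal feasible set shows one piece carries an $\Omega(1/\log^*\Delta)$ fraction of the optimal weight, and that piece is a candidate for the constant-factor independent-set approximation in $\cG_{\gamma\tlog}(L)$.
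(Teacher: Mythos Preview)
Your proposal is correct and follows essentially the same sandwich argument as the paper: bound $OPTS(L)$ between $\chi(\cG_{\gamma})$ and $\chi(\cG_{\gamma\tlog})$ via Theorems~\ref{T:lowerbound} and Corollary~\ref{C:mainresult}, close the gap with Theorem~\ref{T:inductiveness}, and exploit constant-simpliciality (Theorem~\ref{T:perfectness}) for the polynomial-time coloring and weighted independent set; the {\wcapacity} averaging step is exactly the paper's. The only cosmetic differences are that the paper avoids the auxiliary constant $\gamma_0$ by applying Theorems~\ref{T:signalstrengthening} and~\ref{T:lowerbound} directly to refine each feasible slot into $O(1)$ $\gamma$-independent sets (so Theorem~\ref{T:constantsens} is not needed here), and that your justification of $\tlog^*(\Delta)=\Theta(\log^*\Delta)$ cites the example for the \emph{iterated} logarithm $\log^{(c)}$ rather than the \emph{power} $\log^{2/(\alpha-m)}$---the conclusion is still correct, but the example you quote does not literally apply.
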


\begin{proof}
First consider the {\scheduling} problem. Let $L$ be an input to {\scheduling}.  We construct and color the graph $\cG_{\gamma\tlog}(L)$ with constant
$\gamma$ chosen as in Corollary~\ref{C:mainresult}. By Corollary~\ref{C:mainresult}, such a coloring corresponds to
a feasible schedule.

To derive the approximation factor, observe on one hand that in view of Thms.\ \ref{T:signalstrengthening}
and~\ref{T:lowerbound}, any schedule of $L$ can be refined into a coloring of $\cG_\gamma(L)$ with only constant factor
increase in the number of slots. Thus, $OPTS(L)=\Omega(\chi(\cG_{\gamma}(L)))$.  On the other hand, by Thm.\
\ref{T:inductiveness}, $\chi(\cG_{\gamma \tlog}(L))=O(\tlog^*(\Delta))\cdot \chi(\cG_{\gamma}(L))
=O(\log^*{\Delta})\cdot OPTS(L)$.  It is readily verified that the function $\gamma\tlog$ is strongly sub-linear,
implying, via Thm.\ \ref{T:perfectness}, that $\cG_{\gamma\tlog}(L)$ is
constant-simplicial and thus colorable within constant approximation factor.

Now consider the {\wcapacity} problem. Let a set $L$ be given. As in the case of {\scheduling}, we first construct the graph $\cG_{\gamma\tlog}(L)$ with constant $\gamma$ chosen as in Corollary~\ref{C:mainresult}. We find a constant-factor approximate weighted maximum independent set in $\cG_{\gamma\tlog}(L)$ using the fact that this graph is constant-simplicial. By Corollary~\ref{C:mainresult}, the resulting set is feasible, i.e. it is a valid solution for {\wcapacity}. Now let us derive the approximation factor. 
Let $W_l$ and $W_u$ be the weights of the weighted maximum independent sets in $\cG_{\gamma}(L)$ and $\cG_{\gamma\tlog}(L)$ respectively, and let $W_o$ be the weight of the optimal solution to {\wcapacity} in $L$. Let $S$ be a solution to {\wcapacity} in $L$. Since $S$ is feasible,  it can be split into a constant number of $\gamma$-independent subsets, by Thms.\ \ref{T:signalstrengthening} and~\ref{T:lowerbound}. Let $S'$ be the largest weight subset. Obviously, the weight of $S'$ is  $\Omega(W_o)$, implying that $W_l =\Omega(W_o)$, as $S'$ is an independent set in $\cG_{\gamma}$. On the other hand, Thm.\ \ref{T:inductiveness} implies that $S'$ can be refined into  at  most $O(\log^*{\Delta})$ $\gamma\tlog$-independent subsets. The largest weight subset will have weight at least $\Omega(W_l/\log^*{\Delta})$, which implies that $W_u=\Omega(W_l/\log^*{\Delta})=\Omega(W_o/\log^*{\Delta})$.
\end{proof}

\subsec{Measure of Interference}
%
While approximation algorithms give bounds relative to an optimal value, it is frequently advantageous to have bounds in
terms of some intrinsic parameters or more easily computable properties. Thus the interest in bounding chromatic numbers
of graphs in terms of clique numbers, broadcast algorithms in terms of network diameter, and routing time in terms of
``congestion + dilation''. Our results also imply bounds for the optimum schedule length that can be efficiently
computed from the network topology.  
Previous such results involved logarithmic factors in $n$ and/or $\Delta$ \cite{fangkeslinear,kesvokdistributed}.

Let $G$ be a $k$-simplicial graph and let $v_1,v_2,\dots,v_n$ be a $k$-simplicial elimination order of vertices, which
for our conflict graphs is by increasing link length.  A $k$-approximate coloring of $G$ is obtained by coloring the
vertices greedily in reverse order.  
The number of colors used is at most the maximum post-degree plus 1, or
$\max_i\{|N(v_i)\cap\{v_{i+1},\dots,v_n\}|\}+1\le k\cdot \chi(G)+1$. We therefore define
  \[ B_f(L) = \max_{i\in L}|\{j\in L : l_j\geq l_i, d(i,j)\le l_if(l_j/l_i)\}|, \]
for a function $f$, and observe that $\chi(\cG_f(L)) = \Theta(B_f(L))$.
The results of Sec.\ \ref{S:sandwich} and \ref{S:limitations} then imply the following theorem.

\begin{theorem}
There are constants $a,b>0$ and $\gamma\ge 1$, such that for any set $L$,
\[a\cdot B_{\gamma}(L) \le OPTS(L) \le b\cdot B_{\tlog}(L)\text{ and }\frac{B_{\tlog}(L)}{B_{\gamma}(L)}=O(\log^*{\Delta(L)}).\]
Moreover, there are infinitely many instances $L'$ and $L''$ s.t. $\frac{OPTS(L')}{B_{\gamma}(L')}=\Omega(\log^*{\Delta(L')})$ and $\frac{B_{\tlog}(L'')}{OPTS(L'')}=\Omega(\log^*{\Delta(L'')})$.
\end{theorem}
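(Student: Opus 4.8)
The plan is to decompose the theorem into its four assertions and dispatch each using results already established. The first two inequalities, $a\cdot B_\gamma(L)\le OPTS(L)\le b\cdot B_{\tlog}(L)$, follow from combining the observation $\chi(\cG_f(L))=\Theta(B_f(L))$ with the sandwiching results of Sec.~\ref{S:sandwich}. First I would note that a feasible set is $\gamma$-independent (Thm.~\ref{T:lowerbound}, after applying Thm.~\ref{T:signalstrengthening} to adjust the constant), so any schedule refines into a coloring of $\cG_\gamma(L)$ with at most a constant-factor blowup; hence $OPTS(L)=\Omega(\chi(\cG_\gamma(L)))=\Omega(B_\gamma(L))$. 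For the upper bound, Corollary~\ref{C:mainresult} gives a constant $\gamma$ with each $\gamma\tlog$-independent set feasible, so a proper coloring of $\cG_{\gamma\tlog}(L)$ is a valid schedule, yielding $OPTS(L)\le\chi(\cG_{\gamma\tlog}(L))=O(B_{\gamma\tlog}(L))$; since $B_{\gamma\tlog}(L)=\Theta(B_{\tlog}(L))$ (a constant factor in the argument of $f$ only changes $B_f$ by a constant, mirroring Thm.~\ref{T:constantsens} at the level of the inductive-degree count), we get $OPTS(L)\le b\cdot B_{\tlog}(L)$ for a suitable constant $b$.

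Next I would establish $B_{\tlog}(L)/B_\gamma(L)=O(\log^*\Delta(L))$. Here I would invoke Thm.~\ref{T:inductiveness}: since $\tlog$ is strongly sub-linear, $\chi(\cG_{\gamma\tlog}(L))=O(\tlog^*(\Delta))\cdot\chi(\cG_\gamma(L))$, and by the Examples following the definition of $f^*$, $\tlog^*(\Delta)=\Theta(\log^*\Delta)$ (since $\tlog(x)=\Theta(\log^{2/(\alpha-m)}x)$, which is a constant iterate of $\log$ up to constant factors). Translating through $\chi(\cG_f)=\Theta(B_f)$ and absorbing the constant-factor shift between $B_{\tlog}$ and $B_{\gamma\tlog}$ gives the claimed ratio. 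The only mild subtlety is making sure the $\Theta$ in $\chi(\cG_f)=\Theta(B_f)$ has constants independent of $f$ — but this is immediate since $B_f$ is exactly the maximum post-degree in the length-increasing order and $\cG_f$ is constant-simplicial via Thm.~\ref{T:perfectness}, so $B_f\le\chi(\cG_f)\cdot O(1)$ with an absolute constant.

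Finally, for the ``moreover'' clause I would appeal to the limitation results promised in Sec.~\ref{S:limitations}. The statement asserts two families of instances: one, $L'$, witnessing $OPTS(L')/B_\gamma(L')=\Omega(\log^*\Delta)$, and another, $L''$, witnessing $B_{\tlog}(L'')/OPTS(L'')=\Omega(\log^*\Delta)$. These are precisely the lower-bound constructions underlying the $\Omega(\log^*\Delta)$ price-of-abstraction claim advertised in the introduction; the first shows that the lower-bound graph $\cG_\gamma$ underestimates the true schedule length by a $\log^*\Delta$ factor on some instances, and the second shows that the upper-bound graph $\cG_{\tlog}$ overestimates it by the same factor on (generally different) instances. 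I would simply cite those constructions and note that $B_\gamma(L')=\Theta(\chi(\cG_\gamma(L')))$ and $B_{\tlog}(L'')=\Theta(\chi(\cG_{\tlog}(L'')))$ translate the chromatic-number gaps into the stated $B_f$ gaps.

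The main obstacle is bookkeeping the constants: one must verify that the constant $\gamma$ appearing in the lower-bound inequality (coming from Thm.~\ref{T:lowerbound}/\ref{T:signalstrengthening}) can be taken to match, or be comparably related to, any fixed constant in the problem, and that the $\Theta(\tlog^*(\Delta))=\Theta(\log^*\Delta)$ identification is valid for the actual exponent $2/(\alpha-m)$ (which is a positive constant since we are in a fading metric, $m<\alpha$). Neither is deep, but both require care so that the single symbol $\gamma$ in the statement is legitimately uniform. Beyond that, the theorem is essentially a repackaging of Thms.~\ref{T:lowerbound}, \ref{T:inductiveness}, \ref{T:perfectness}, Corollary~\ref{C:mainresult}, and the Sec.~\ref{S:limitations} constructions in terms of the explicitly computable quantity $B_f$.
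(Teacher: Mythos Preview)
Your proposal is correct and matches the paper's own treatment: the paper gives no formal proof of this theorem, stating only that it follows from the observation $\chi(\cG_f(L))=\Theta(B_f(L))$ together with the results of Sec.~\ref{S:sandwich} and Sec.~\ref{S:limitations}, which is precisely the decomposition you carry out. Your identification of the specific ingredients---Thms.~\ref{T:lowerbound} and~\ref{T:signalstrengthening} for the lower bound, Corollary~\ref{C:mainresult} for the upper bound, Thm.~\ref{T:inductiveness} for the ratio, and Thms.~\ref{T:ndependence} and~\ref{T:hardinstance} for the tightness instances $L''$ and $L'$ respectively---is exactly right.
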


\subsec{A Necessary and Sufficient Condition for Feasibility}
\label{S:necsuf}
Another interesting implication of Thm.\ \ref{T:main} is the following result that shows that the sufficient condition for feasibility stated in Thm.\ \ref{T:kesselheimconstant} is essentially necessary in doubling metric spaces. 
This result is of independent interest, as it may prove useful for improved analysis of various problems.
It should be noted that this theorem does not hold in general metric spaces.

The proof consists of two parts, bounding the influence on a link $i$ by faraway links (i.e., links that are highly independent from link $i$) on one hand using Thm.\ \ref{T:main}, and by near links (the rest) on the other hand, using simple manipulations of the SINR condition.
\begin{theorem}\label{T:necessary}
Let $L$ be a $3^\alpha$-feasible set of links. Then,
$
  I(L) = O(1).
 $
 \end{theorem}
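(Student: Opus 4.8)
The plan is to show that $I(L) = \max_{i \in L} I(L_i^-, i) = O(1)$ by fixing a link $i$ and splitting the shorter links $L_i^-$ into two groups according to their spatial separation from $i$: the \emph{far} links, which are $\gamma\tlog$-independent from $i$ for the constant $\gamma$ of Corollary~\ref{C:mainresult}, and the \emph{near} links, which are not. For the far links, the set $F_i = \{j \in L_i^- : d(i,j) > \gamma l_j \tlog(l_i/l_j)\}$ together with $i$ need not be $\gamma\tlog$-independent (the longer links could be close), but $F_i \cup \{i\}$ restricted appropriately is, so Thm.~\ref{T:main} applied to this set bounds $I(F_i, i) = O(\gamma^{m-\alpha}) = O(1)$. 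More carefully, since feasibility is symmetric (Thm.~\ref{T:lowerbound}) a $3^\alpha$-feasible set is $2$-independent, hence $\gamma\tlog$-independence among the links in $L_i^-$ fails only by a bounded factor, and one applies Thm.~\ref{T:main} to $\{i\} \cup F_i$ or uses Lemma~\ref{L:garbage} directly on equilength subclasses of $F_i$ to get $I(F_i,i) = O(1)$.

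For the near links $n_i = \{j \in L_i^- : d(i,j) \le \gamma l_j \tlog(l_i/l_j)\}$, the idea is to use the SINR condition directly. Because $L$ is $3^\alpha$-feasible, there is a power assignment $P$ with $P(i)/l_i^\alpha \ge 3^\alpha(\sum_{k \ne i} P(k)/d_{ki}^\alpha + N)$ for every $i$. For a near link $j$, the feasibility of $j$ forces $P(j)/l_j^\alpha \ge 3^\alpha P(i)/d_{ij}^\alpha$, i.e. $P(j) \ge 3^\alpha P(i) l_j^\alpha / d_{ij}^\alpha$, which lower-bounds $P(j)$ in terms of $P(i)$. Meanwhile the feasibility of $i$ gives an upper bound on $\sum_j P(j)/d_{ji}^\alpha$ in terms of $P(i)/l_i^\alpha$. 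Combining, $\sum_{j \in n_i} I(j,i) = \sum_{j \in n_i} l_j^\alpha/d(i,j)^\alpha$ is controlled: each term $I(j,i)$ is comparable to $P(j)/(P(i)) \cdot (l_j/l_i)^{?}$-type ratios, and summing over the near links reduces to $\sum_j P(j)/d_{ji}^\alpha \le P(i)/(3^\alpha l_i^\alpha)$, giving $O(1)$. The point is that $d(i,j)$ and the directed distances $d_{ij}, d_{ji}$ differ by at most constant factors once $d(i,j) \ge c\, l_i \ge c\, l_j$, which holds for all these links since the set is $2$-independent; so $I(j,i) = \Theta(l_j^\alpha/d_{ij}^\alpha)$ and $\sum_{j \in n_i} l_j^\alpha/d_{ij}^\alpha = O(\sum_j P(j)/(P(i) d_{ji}^\alpha) \cdot \text{something bounded})$.

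The cleanest way to handle the near links is probably the following: for $j \in n_i$ with $d(i,j) \le \gamma l_j \tlog(l_i/l_j)$, one has $\tlog(l_i/l_j) \ge d(i,j)/(\gamma l_j) \ge (c/\gamma)(l_i/l_j)^{1/?}$-type bound only when $d(i,j)$ is large, so in fact $l_i/l_j$ is \emph{bounded} by a constant depending on how $\tlog$ grows — wait, $\tlog$ is increasing and unbounded, so this does not immediately bound $l_i/l_j$. Instead, one uses that $d(i,j) \le \gamma l_j \tlog(l_i/l_j)$ means $j$ is spatially \emph{close} to $i$ relative to $l_j$; combined with the SINR lower bound $P(j) \gtrsim P(i)(l_j/d_{ij})^\alpha$, and the fact that $I(j,i) = (l_j/d(i,j))^\alpha \le (l_j/d_{ji})^\alpha \cdot O(1) \le (P(j)/P(i))^{} \cdot O(1) \cdot$ (adjust via $d_{ij} \approx d_{ji}$), we get $I(j,i) = O(d_{ji}^{-\alpha} P(j)/P(i) \cdot l_i^\alpha) $ roughly, and summing, $\sum_{j \in n_i} I(j,i) = O(l_i^\alpha/P(i)) \sum_j P(j)/d_{ji}^\alpha \le O(l_i^\alpha/P(i)) \cdot P(i)/(3^\alpha l_i^\alpha) = O(1)$.

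The main obstacle will be the bookkeeping in the near-links argument: relating the undirected distance $d(i,j)$ (which appears in $I$) to the directed distances $d_{ij}, d_{ji}$ (which appear in the SINR inequalities) uniformly, and making sure the exponents of $l_i/l_j$ that arise from the $\tlog$ term and from combining the two SINR inequalities actually cancel to leave a summable series. I expect that after establishing $2$-independence (hence $d(i,j) = \Theta(d_{ij}) = \Theta(d_{ji})$ for all pairs, with constants depending only on the feasibility parameter), the far-links part is essentially immediate from Thm.~\ref{T:main} and the near-links part collapses to a one-line telescoping of the SINR condition for link $i$. The subtlety is just confirming that the split at threshold $\gamma\tlog$ is the right place to cut so that both pieces are $O(1)$ simultaneously.
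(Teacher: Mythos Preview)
Your overall strategy matches the paper's: split $L_i^-$ into far and near links, handle the far ones via the equilength argument behind Thm.~\ref{T:main}/Lemma~\ref{L:garbage}, and handle the near ones by multiplying the two SINR inequalities to obtain $\sum_j l_i^\alpha l_j^\alpha/(d_{ij}^\alpha d_{ji}^\alpha)\le 9^{-\alpha}$. Two remarks, one cosmetic and one a genuine gap.

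\emph{The split threshold.} The paper cuts at the \emph{linear} function $f(x)=2x$ rather than at $\gamma\tlog$: the far links are those with $d(i,j)>2l_i$. Since $\tlog(x)=O(x)$, the equilength sums for the far part converge (even faster than in Thm.~\ref{T:main}). Your $\gamma\tlog$ threshold also works---near links still satisfy $d(i,j)=O(l_i)$ because $l_j\tlog(l_i/l_j)=O(l_i)$---but the linear cut makes the near-link constants immediate.

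\emph{The gap.} Your claim that $2$-independence gives ``$d(i,j)\ge c\,l_i\ge c\,l_j$'', and hence $d(i,j)=\Theta(d_{ij})=\Theta(d_{ji})$, is false: $2$-independence only yields $d(i,j)>2\min\{l_i,l_j\}=2l_j$. For a near link one can have $d(i,j)=\Theta(l_j)$ while $d_{ij}=\Theta(l_i)$, so the three distances are \emph{not} mutually comparable and your ``one-line telescoping'' does not go through as stated. The correct conversion (this is what the paper does) treats $d_{ij}$ and $d_{ji}$ \emph{asymmetrically}. Assuming w.l.o.g.\ $d_{ij}\ge d_{ji}$: the near condition $d(i,j)\le 2l_i$ and the triangle inequality give $d_{ij}\le d(i,j)+l_i+l_j\le 4l_i$, so $l_i^\alpha/d_{ij}^\alpha\ge 4^{-\alpha}$; meanwhile $2$-independence gives $d_{ji}\ge d(i,j)>2l_j$, whence $d(s_i,s_j),\,d(r_i,r_j)\ge d_{ji}-l_j>d_{ji}/2$, so $d(i,j)>d_{ji}/2$ and $I(j,i)<2^\alpha l_j^\alpha/d_{ji}^\alpha$. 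Combining, $I(j,i)\le 8^\alpha\cdot l_i^\alpha l_j^\alpha/(d_{ij}^\alpha d_{ji}^\alpha)$, and summing over the near links gives $I(S_2,i)\le(8/9)^\alpha$. This asymmetric cancellation---one directed distance absorbed by $l_i$, the other by $d(i,j)$---is the missing ingredient in your sketch.
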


\begin{proof}
Let us fix a link $i\in L$ and denote $S=L_i^-$.
We split $S$ into two subsets $S_1$ and $S_2$, where for each link $j\in S_1$, $j$ and $i$ are $f$-independent with $f(x)=2x$, and $S_2=S\setminus S_1$. 

Recall that by Thm.\ \ref{T:lowerbound}, $3^{\alpha}$-feasibility implies $2$-independence of $S_1$. The bound $I(S_1,i)= O(1)$ then follows by applying an analogue of Thm.\ \ref{T:main} with $\gamma=1$ and with $f$-independence instead of $\tlog$-independence, which can be done because $\tlog(x)= O(f(x))$.

It remains to show that $I(S_2,i)= O(1)$. Let $P$ be a power assignment for which $L$ is $P$-feasible.
Then, the SINR condition gives us the following inequalities:
\[
\frac{P(i)}{l_i^\alpha} > 3^{\alpha}\sum_{j\in S_2}{\frac{P(j)}{d_{ji}^\alpha}},\mbox{ and }\frac{P(j)}{l_j^\alpha} >  3^{\alpha}\frac{P(i)}{d_{ij}^\alpha}\mbox{ for all }j\in S_2.
\]
 By replacing $P(j)$ with $3^{\alpha}\frac{P(i)l_j^{\alpha}}{d_{ij}^\alpha}$ in the first inequality and simplifying the expression, we get:
\begin{equation}\label{E:equation2}
\sum_{j\in S_2}\frac{l_i^\alpha l_j^\alpha}{d_{ij}^\alpha d_{ji}^\alpha} \le 9^{-\alpha}.
\end{equation}
In order to extract a bound on $I(S_2,i)$ from (\ref{E:equation2}), we will show that one of the values $d_{ij}^{\alpha},d_{ji}^{\alpha}$ in the denominator can be canceled out with $l_i^{\alpha}$ in the numerator and the other one can be replaced with $d(i,j)^\alpha$ by only introducing additional constant factors in the expression. Such a modification will transform the left side of (\ref{E:equation2}) into $I(S_2,i)$.

Let us assume w.l.o.g. that $d_{ij}\geq d_{ji}$. Recall that for each $j\in S_2$, $d(i,j) \leq 2l_i$ by definition of $S_2$. Using the triangle inequality, we obtain $d_{ij}\leq d(i,j) + l_i+l_j \leq 4l_i$. On the other hand, as it was mentioned above, the set $S_2$ is $2$-independent, which implies that $d_{ji}\geq d(i,j) > 2l_j$. Using the triangle inequality again, we obtain:
\[
d(s_i,s_j)\geq d_{ij} - l_j\geq d_{ji} - l_j > d_{ji}/2\text{ and }d(r_i,r_j)\geq d_{ji} - l_j>d_{ji}/2.
\]
Thus, $d(i,j) > d_{ji}/2$. By replacing $d_{ij}$ with $4l_i$ and $d_{ji}$ with $2d(i,j)$ in the left-hand part of (\ref{E:equation2}), we obtain the desired bound: $I(S_2,i)\le (8/9)^\alpha$. Since this holds for an arbitrary $i\in L$, we get that $I(L)=O(1)$. 
\end{proof}

\noindent \emph{Remark.}
Note that Thm.\ \ref{T:signalstrengthening} implies that any feasible set can be refined into a constant number of $3^\alpha$-feasible subsets. Thus, the influence function fully captures feasibility in fading metrics, modulo constant factors.

\section{Limitations of the Graph-Based Approach}\label{S:limitations}

We have found that conflict graphs can achieve a remarkably good, yet super-constant, approximation for scheduling problems in doubling metrics. We examine in this section how far this approach can be pushed, obtaining essentially tight bounds. We treat these issues in terms of the {\scheduling} problem.

In the first part of the section, we expose the limitations of the graph method in Euclidean spaces. We show, in particular, that conflict graphs do not yield any non-trivial approximation to the {\scheduling} problem in terms of the number of links $n$. In particular, they cannot lead to constant factor approximation. We also consider approximation limits in terms of the parameter $\Delta$, and show that for all reasonable functions $f$, the approximation factor is at least $\Omega(\log^*{\Delta})$. Thus, the approximation factor we obtained cannot be improved within a conflict-graph framework. 
Note that the instances we construct are embedded on the real line, i.e., in one dimensional space. 

In the second part of the section, we find that the graph method cannot provide any non-trivial approximation guarantees
in general metric spaces, neither in terms of $n$ nor $\Delta$.

\subsec{Euclidean Spaces}
In the following theorem, we construct, for any function $f=\omega(1)$, a feasible set of $f$-adjacent links. The
construction is based on the following observations. On the one hand, it follows from Thm.\ \ref{T:kesselheimconstant}
that any set of exponentially growing links arranged sequentially by the order of length on the real line is (almost)
feasible. On the other hand, given such a set $S$ of links on the line, a new link $j$ can be formed so that
$j$ is $f$-adjacent to all the links in $S$ while the set $S\cup j$ stays feasible; the only requirement is that $j$
be long enough. Our construction then builds recursively on these ideas.
\begin{theorem}\label{T:ndependence}
Let $f(x)=\omega(1)$. For any integer $n > 0$, there is a \emph{feasible} set $L$ of $n$ links arranged on the real line, such that $\cG_f(L)$ is a clique, i.e., $\chi(\cG_f(L))=n$. Moreover, if $f(x)\ge g(x)$ ($x\ge 1$) for a strongly sub-linear increasing function $g(x)$ with $g(x)=\omega(1)$, then $n = \Omega(g^*(\Delta))$.
\end{theorem}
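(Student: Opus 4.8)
The plan is to construct the family $L$ recursively, adding one link at a time, always inserting the new link as the \emph{longest} one and placing it sufficiently far from all existing links so that (a) it is $f$-adjacent to everything already present, yet (b) the enlarged set remains feasible via Thm.~\ref{T:kesselheimconstant}. I would maintain as an invariant that after $t$ steps we have a feasible set $L_t$ of $t$ links on the real line, with lengths $l_1 < l_2 < \dots < l_t$ growing geometrically (say $l_{k+1} \ge 2 l_k$), all placed ``to the right'' in a staircase pattern, such that $I(L_t) < c_0$ for the constant $c_0 = \frac{1}{2\cdot 3^\alpha(4\beta+2)}$ of Thm.~\ref{T:kesselheimconstant}. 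The base case $t=1$ is trivial. For the inductive step, I place a new link $j$ of length $l_{t+1}$, to be chosen, with both endpoints at distance exactly $D$ from the right end of the current configuration, where $D$ is a parameter.

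For step (a), $f$-adjacency of $j$ with an existing link $i$ requires $d(i,j)/l_i \le f(l_{t+1}/l_i)$. Since $l_i \le l_t$ is fixed and $f = \omega(1)$, I can always choose $l_{t+1}$ large enough that $f(l_{t+1}/l_i) \ge D/l_i$ for every $i \le t$ simultaneously (finitely many constraints), so $j$ becomes $f$-adjacent to all of $L_t$; combined with the inductive hypothesis that $L_t$ is an $f$-clique, $L_{t+1}$ is an $f$-clique. For step (b), I must re-examine $I(L_{t+1})$. The influence terms fall into three groups: the old terms $I((L_t)_i^-, i)$ for $i \le t$, which are unchanged and bounded by the inductive hypothesis; the new term $I((L_{t+1})_j^-, j) = I(L_t, j)$, i.e. interference from the short links on the new long link, which is $\sum_{i\le t} l_i^\alpha / d(i,j)^\alpha \le \sum_{i\le t} l_i^\alpha/D^\alpha$, and this can be made arbitrarily small by taking $D$ large; and there are no terms of the form $I(j, i)$ added to $I((L_{t+1})_i^-, i)$ since $j$ is longer than every $i$. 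Hence for $D$ large enough (and then $l_{t+1}$ large enough depending on $D$), $I(L_{t+1}) < c_0$ still holds, so $L_{t+1}$ is feasible by Thm.~\ref{T:kesselheimconstant}. This yields a feasible $f$-clique on $n$ links, so $\chi(\cG_f(L)) = n$.

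For the quantitative part, I need to control how fast $l_t$ (hence $\Delta = l_n/l_1$) must grow relative to $n$ when $f(x) \ge g(x)$ for a strongly sub-linear increasing $g = \omega(1)$. The key observation is that the $f$-adjacency constraint at step $t$ forces $D_t/l_1 \le f(l_{t+1}/l_1)$ — but actually the binding constraint is that $j$ must reach \emph{back} far enough, and since the links are laid out in a staircase with cumulative spread, $D_t$ is at least of order the current diameter, which is at least $l_t$ itself (each link has length $l_k$ and they are placed sequentially). So roughly $l_t \lesssim D_t \lesssim l_1 \cdot f(l_{t+1}/l_1)$, giving $l_{t+1}/l_1 \gtrsim g^{-1}(l_t/l_1)$ in the worst case, i.e. the ratios $\lambda_t = l_t/l_1$ satisfy $g(\lambda_{t+1}) \gtrsim \lambda_t$, equivalently $\lambda_t \lesssim g(\lambda_{t+1})$. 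Iterating this recurrence $n$ times in the style of the proof of Thm.~\ref{T:inductiveness} gives $\lambda_1 \lesssim g^{(n-1)}(\lambda_n)$, which combined with the definition of $g^*$ yields $n = O(g^*(\lambda_n)) = O(g^*(\Delta))$, i.e. $\Delta = \Omega$ of the inverse, which rearranges to $n = \Omega(g^*(\Delta))$ as claimed (here I use $g^*(x) = \Theta(g^*(cx))$ for constants $c$, noted in the Functions paragraph).

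The main obstacle I anticipate is making the geometric layout precise enough that the ``$D_t$ is at least of order $l_t$'' claim is actually forced rather than merely convenient — one has to verify that the construction cannot cheat by keeping all links short and clustered, which is where the requirement $g(x) = \omega(1)$ and strong sub-linearity enter: strong sub-linearity guarantees $g^{-1}$ grows super-linearly, so the length ratios genuinely must blow up at the iterated-$g$ rate, and no more economical feasible $f$-clique exists. A secondary technical point is bookkeeping the two families of constants (the $D$'s chosen for feasibility and the $l_{t+1}$'s chosen for $f$-adjacency) so that the dependence $l_{t+1} = l_{t+1}(D_t, l_t, \dots, l_1)$ does not spiral faster than $g^{-1}$; since at each step only finitely many constraints are active and $f \ge g$, choosing $l_{t+1}$ essentially as $l_1 \cdot g^{-1}(\Theta(\text{current diameter}/l_1))$ suffices, which is exactly the rate needed for the lower bound to be tight against the upper bound of Thm.~\ref{T:inductiveness}.
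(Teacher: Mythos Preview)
Your construction of the feasible $f$-clique (the qualitative first claim) is sound and close to the paper's: both lay progressively longer links on the line and invoke Thm.~\ref{T:kesselheimconstant}. The paper is slightly more economical --- it places links contiguously ($r_i = s_{i+1}$), obtains $I \le 1$ on the odd-indexed links directly from the doubling $l_{i+1}\ge 2l_i$, and then thins via Thm.~\ref{T:signalstrengthening} --- whereas you introduce a separate gap parameter $D$; either route works for the existence statement.

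The quantitative part, however, has a genuine error. From the adjacency constraint you derive $\lambda_t \lesssim g(\lambda_{t+1})$, iterate to $\lambda_1 \lesssim g^{(n-1)}(\lambda_n)$, read off $n = O(g^*(\Delta))$, and then assert that this ``rearranges to $n = \Omega(g^*(\Delta))$.'' It does not: these are opposite statements. What your chain actually establishes is that the constraints \emph{force} at least a certain growth rate of the lengths --- a lower bound on growth, hence an \emph{upper} bound on $n$ in terms of $\Delta$. The theorem asks for the reverse: you must exhibit a construction with \emph{slow} growth, i.e.\ prove an \emph{upper} bound on the growth rate, which then yields a \emph{lower} bound on $n$.

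To get the correct direction you have to choose $l_{t+1}$ \emph{minimally}, not merely ``large enough''. The paper takes $l_{t+1}$ to be the least value with $g(l_{t+1}) \ge 2 l_t$; minimality then gives $g(l_{t+1}/2) < 2 l_t$, which is the inequality pointing the right way and iterates to $n = \Omega(g^*(\Delta))$. The nontrivial verification --- and this is where strong sub-linearity of $g$ is actually used, not in the way you suggest --- is that this single minimal choice (which is essentially the $g$-adjacency condition between link $t{+}1$ and link $1$) automatically yields $g$-adjacency, hence $f$-adjacency, with \emph{every} intermediate link $j$: strong sub-linearity gives $l_j\, g(l_{t+1}/l_j) \ge 2\, g(l_{t+1}) \ge 4 l_t > 2\, d(t{+}1,j)$. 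Your final paragraph gestures at choosing $l_{t+1} \approx l_1\cdot g^{-1}(\Theta(\text{diameter}/l_1))$, but treats it as secondary bookkeeping rather than as the crux, and the check that such a minimal choice suffices for adjacency with \emph{all} earlier links is missing.
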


\begin{proof}
Consider a set of $cn$ links $\{1,2,\dots,cn\}$ arranged sequentially from left to right on the real line, where $c>0$ is a constant to be chosen later. Each link $i$ is directed from left to right
and for each $i=1,2,3,\dots,n-1$, the nodes $s_{i+1}$ and $r_{i}$ share the same location on the line,  i.e., $r_i=s_i+l_i=s_{i+1}$. See Figure~\ref{fig:notcomplicated}.
 The lengths of links are defined inductively, as follows. We set $l_1=1$, and for $i\ge 1$, we choose $l_{i+1}$ to be the minimum value satisfying:
\begin{align}
\label{E:feasibility} l_{i+1}&\ge 2l_i \\ 
\label{E:clique}2d(i+1,j)=2d_{i+1,j}&\le l_jf(l_{i+1}/l_j)\mbox{ for all } j\leq i.
\end{align}
Such a value of $l_{i+1}$ can be chosen as follows. By the inductive hypothesis, we have $l_j\geq 2l_{j-1}$ for $j=2,3,\dots,i$. This implies that $l_i\ge \sum_{j=1}^{i-1}{l_j}$. Then, we have that $d_{i+1,j}=\sum_{t=j+1}^i{l_t}\le 2l_i$ for $j=1,2,\dots,i$. Thus, it is enough to choose $l_{i+1}$ so that $l_{i+1}\ge 2l_i$ and $4l_i \leq l_jf(l_{i+1}/l_j)$, which can be done using $f=\omega(1)$ and the fact that the values of $l_j$ for $j=1,2,\dots,i$ are already fixed at this point. This completes the construction.
\begin{figure}[htbp]
\begin{center}
\includegraphics[width=0.8\textwidth]{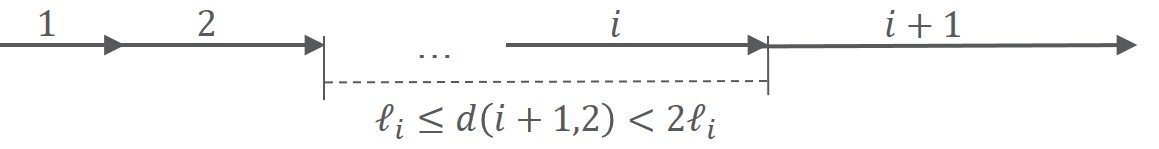}
\caption{The construction in Thm.\ \ref{T:ndependence}.}
\label{fig:notcomplicated}
\end{center}
\end{figure}
First note that (\ref{E:clique}) implies that $\cG_f(L)$ is a clique. It remains to argue feasibility. Consider the odd numbered links $S=\{1,3,\dots,...,2t+1\}$. Let us fix a link $2k+1\in S$. Note that for each $j\in S_{2k+1}^-$, $d(j,2k+1)\geq l_{2k}$. We have that
\[
I(S_{2k+1}^-,2k+1)=\sum_{j\in S_{2k+1}^-}{\frac{l_j^{\alpha}}{d(j,2k+1)^{\alpha}}}\leq \sum_{j\in S_{2k+1}^-}{\left(\frac{l_j}{l_{2k}}\right)^{\alpha}}\leq \sum_{j\in S_{2k+1}^-}{\frac{l_j}{l_{2k}}}\leq 1,
\]
where the second inequality holds because $l_j/l_{2k}\leq 1$ and the last inequality follows from (\ref{E:feasibility}). Thus, we can extract a constant fraction $S'$ of $S$ that is feasible, using Thm.\ \ref{T:signalstrengthening}. With the right choice of the constant $c$ in the beginning of the proof we  have that $|S'|=n$. This proves the first part of the theorem.

Now let us assume that $f(x)\ge g(x)$ for a strongly sub-linear function $g(x)$ with $g(x)=\omega(1)$. Then, there is a constant $x_0$ such that $g(x) < x$ for all $x \ge x_0$ (because $g(x)=o(x)$) and there is a constant $c$ such that $2g(x)/x\le g(y)/y$ whenever $x\ge cy$ (strong sub-linearity). In this case we repeat the construction above with slight modifications. 

We set $l_1=1$ and set $l_{i+1}>\max\{c,x_0\}$  be the minimum value s.t. $g(l_{i+1}) \ge 2l_i$, for $i=1,2,\dots$ (such a value exists because $g(x)=\omega(1)$). Let us show that the conditions (\ref{E:feasibility}-\ref{E:clique}) hold with these lengths.
  
Since $l_{i+1}\ge x_0$, we have that $l_{i+1} > g(l_{i+1}) \ge 2l_i$, which implies (\ref{E:feasibility}). This in turn implies, as observed in the first part of the proof, that $d(i+1,j) < 2l_i$ for all $2\le j\le i$. Let us denote $x=l_{i+1}/l_1=l_{i+1}$ and $y=l_{i+1}/l_j$. Note that $x/y=l_j \ge c$, so we have, by strong sub-linearity of $g$,  that $g(y)/y \ge 2g(x)/x$, or equivalently, that $l_j\cdot g(l_{i+1}/l_{j}) \ge 2\cdot g(l_{i+1})$; hence $l_j\cdot g(l_{i+1}/l_j) \ge 4l_i > 2d(i+1,j)$ for all $2\le j \le i$, which means that (\ref{E:clique}) also holds. 

It remains to prove the lower bound for $n$. Recall that the value of $l_{i+1}$ is the minimum satisfying $g(l_{i+1})\ge 2l_i$ for $i=1,2,\dots,n-1$. Then, we have $g(l_{i+1}/2) < 2l_i$ or, equivalently, $h(l_{i+1}/2) < l_i/2$, where $h(x)=g(x)/4$. Thus,
\[
1/2=l_1/2 > h(l_2/2) > h(h(l_3/2))> \dots > h^{(n-1)}(l_n/2)=h^{(n-1)}(\Delta/2),
\]
  which implies that $n =\Omega( h^*(\Delta/2))=\Omega(g^*(\Delta))$.
\end{proof}

\begin{corollary}
In terms of the number of links $n$, the approximation factor for {\scheduling} when using $\cG_f$ with any $f=\omega(1)$ is no better than $n$.
\end{corollary}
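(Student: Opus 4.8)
The plan is to read Theorem~\ref{T:ndependence} through the lens of approximation. A conflict-graph formulation turns into a scheduling algorithm in essentially one way: one computes a proper coloring of $\cG_f(L)$ and outputs the color classes as the slots. For that output to be a legal schedule one needs $\cG_f$ to bound feasibility from above (every $f$-independent set feasible), and in any case such an algorithm uses at least $\chi(\cG_f(L))$ slots. Hence the approximation factor of the $\cG_f$-based approach is at least $\sup_L \chi(\cG_f(L))/OPTS(L)$, and it suffices to exhibit a single instance on which this ratio is $n$.

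I would then fix any $f=\omega(1)$ and any integer $n>0$, and take $L$ to be the $n$-link instance on the real line supplied by Theorem~\ref{T:ndependence}. That theorem gives two facts at once: $\cG_f(L)$ is a clique, so $\chi(\cG_f(L))=n$; and $L$ is feasible, so $OPTS(L)=1$. (For the same reason, by Theorems~\ref{T:lowerbound} and~\ref{T:signalstrengthening}, $L$ splits into $O(1)$ many $\gamma$-independent sets, so $\chi(\cG_\gamma(L))=O(1)$ and the huge gap is not an artifact of a pathological optimum.) Therefore any scheduler that colors $\cG_f(L)$ outputs at least $n$ slots against an optimum of one, an approximation ratio of at least $n=n\cdot OPTS(L)$. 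Since $n$ is arbitrary, the factor cannot be bounded by any $o(n)$ function of $n$; in particular it is super-constant, and it matches the trivial ``one link per slot'' upper bound, which is exactly what the corollary asserts.

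I do not expect a genuine obstacle: the content is carried entirely by Theorem~\ref{T:ndependence}, and the corollary is its algorithmic restatement. The only point deserving a word of care is the phrase ``approximation factor when using $\cG_f$'' — one should state explicitly that a conflict graph is exploited for {\scheduling} only through its colorings, and that when $f$ grows too slowly for $\cG_f$ to be a valid upper-bound graph the coloring need not even produce a feasible schedule, so the claim is only stronger there. With that clarification the argument is complete.
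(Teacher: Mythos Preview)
Your argument is correct and is exactly the intended one: the corollary is an immediate consequence of Theorem~\ref{T:ndependence}, since that theorem produces a feasible instance $L$ with $OPTS(L)=1$ on which $\cG_f(L)$ is a clique of size $n$, forcing any $\cG_f$-based coloring to use $n$ slots. The paper gives no separate proof precisely because nothing beyond this observation is needed.
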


By choosing $g(x)=\gamma\tlog(x)$ in Thm.\ \ref{T:ndependence}, we obtain that the approximation factor of $O(\log^*{\Delta})$ cannot be improved for $\cG_{\gamma\tlog}$.

\begin{corollary}
Let $f(x)=\Omega(\log^{(c)}x)$ for a constant $c$. Then, for each $\Delta>0$, there is a \emph{feasible} set of links $L$ with $\Delta(L)=\Omega(\Delta)$, such that $\cG_{f}(L)$ is a clique of size $\Theta(\log^*{\Delta(L)})$.
\end{corollary}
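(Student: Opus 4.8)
The plan is to derive this corollary directly from Theorem~\ref{T:ndependence} by exhibiting a strongly sub-linear lower-bounding function $g$ for $f$ and then translating the quantitative bound $n = \Omega(g^*(\Delta))$ into the desired clique size $\Theta(\log^*{\Delta(L)})$. First I would note that the hypothesis $f(x) = \Omega(\log^{(c)} x)$ means there is a constant $a > 0$ and a threshold $x_1$ such that $f(x) \ge a \log^{(c)} x$ for $x \ge x_1$; by adjusting constants (and using that $f$ is non-decreasing) we may take $g(x) = a' \log^{(c)} x$ for a suitable $a' > 0$ to be a valid lower bound $f(x) \ge g(x)$ for all $x \ge 1$ after clipping $g$ to be at least $1$ on $[1,x_1]$. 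This $g$ is strongly sub-linear and satisfies $g(x) = \omega(1)$, exactly as required by the second part of Theorem~\ref{T:ndependence}.

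Next I would invoke Theorem~\ref{T:ndependence} with this choice of $g$: for any target value $D$, setting $n$ large enough, the theorem produces a feasible link set $L$ on the real line with $\cG_f(L)$ a clique of size $n$ and $n = \Omega(g^*(\Delta(L)))$; moreover, inspecting the construction, the $\Delta(L)$ it produces can be taken to be $\Omega(D)$ (the lengths grow until $n$ links are placed, and one can stop once $\Delta(L) \ge D$, which happens after finitely many steps since $g = \omega(1)$). The remaining step is purely a calculation about iterated functions: for $g(x) = a' \log^{(c)} x$, one has $g^*(x) = \Theta(\log^* x)$. Indeed, the excerpt already records that $(\log^{(c)})^*(x) = \lceil \log^*{x} / c \rceil = \Theta(\log^* x)$, and that multiplying $g$ by a constant changes $g^*$ only by a constant factor ($g^*(x) = \Theta(f^*(x))$ when $g = c f$). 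Hence $n = \Omega(g^*(\Delta(L))) = \Omega(\log^* \Delta(L))$, giving a clique of size $\Omega(\log^* \Delta(L))$.

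For the matching upper bound $n = O(\log^* \Delta(L))$ on the clique size — so that it is $\Theta(\log^*{\Delta(L)})$ rather than merely $\Omega$ — I would apply Theorem~\ref{T:inductiveness} (or the simpler observation that a clique in $\cG_f$ restricted to links longer than a fixed link is controlled by $f^*(\Delta)$): since a clique in $\cG_f(L)$ is in particular an independent set in $\cG_\gamma(L)$ only in the trivial one-vertex case, I would instead argue that any $f$-adjacent set with $f(x) = \Omega(\log^{(c)} x)$ has size $O(f^{*}(\Delta)) = O(\log^* \Delta)$, using that consecutive links (ordered by length) in such a clique must grow at least inverse-to-$f$, exactly the length-progression argument from the proof of Theorem~\ref{T:inductiveness}. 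Combining the two bounds yields the clique size $\Theta(\log^* \Delta(L))$.

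\textbf{Main obstacle.} The only delicate point is the bookkeeping around thresholds: ensuring that the hypothesis $f(x) = \Omega(\log^{(c)} x)$, which is an asymptotic statement, can be massaged into a genuine pointwise lower bound $f(x) \ge g(x)$ for all $x \ge 1$ by a strongly sub-linear $g = \omega(1)$ (handling small $x$ by the floor-at-$1$ trick and using monotonicity of $f$), and then checking that the $\Delta(L)$ delivered by the construction in Theorem~\ref{T:ndependence} is as large as we wish while keeping $n = \Theta(g^*(\Delta(L)))$ tight on both sides. None of this is deep, but it is where care is needed; the iterated-function identities $g^*(x) = \Theta(\log^* x)$ are already spelled out in Section~\ref{S:defs} and can be quoted.
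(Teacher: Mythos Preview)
Your approach is correct and matches the paper's: the corollary is a direct application of Theorem~\ref{T:ndependence} with $g$ taken to be a constant multiple of $\log^{(c)}$, combined with the identity $(\log^{(c)})^*(x)=\Theta(\log^* x)$ recorded in Section~\ref{S:defs}.

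One point needs cleaning up, namely the upper bound $n=O(\log^*\Delta(L))$. Your assertion that ``any $f$-adjacent set with $f(x)=\Omega(\log^{(c)}x)$ has size $O(f^*(\Delta))$'' is false without further hypotheses: a cluster of many unit-length links placed close together is $f$-adjacent with $\Delta=1$. The length-progression argument from the proof of Theorem~\ref{T:inductiveness} requires the links to be $\gamma$-independent; that \emph{does} hold here because the set $L$ produced by Theorem~\ref{T:ndependence} is feasible, hence $\gamma$-independent by Theorem~\ref{T:lowerbound}, so your argument can be repaired by inserting that observation. Alternatively and more directly, the construction in the second part of Theorem~\ref{T:ndependence} enforces $g(l_{i+1})\ge 2l_i$, so $l_{i+1}\ge g^{-1}(2l_i)$ and the lengths grow at least like an iterated $c$-fold exponential; this immediately yields $\log^*\Delta(L)=\Omega(n)$, i.e., $n=O(\log^*\Delta(L))$, without appealing to Theorem~\ref{T:inductiveness} at all.
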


While the theorem above shows that graphs $\cG_f$ with $f=\Omega(\log^{(c)}x)$ for some constant $c$ require too much
separation,
the theorem below shows that graphs $\cG_f$ with $f=O(\log^{1/\alpha}x)$ provide insufficient separation, leading,
perhaps surprisingly, to a similar sized gap of $\log^*{\Delta}$. Namely,
$\chi(\cG_f(L'))=\frac{OPTS(L')}{\Omega(\log^*{\Delta})}$ holds for certain instances $L'$. The construction 
follows the general structure of Thm.\ 7 in~\cite{halmitSODA12} of a lower bound for scheduling the edges of a minimum spanning tree of a set of points in the plane.
There are two technical challenges to overcome, in order to implement this
structure in our setting. First, the construction of~\cite{halmitSODA12} is not $f$-independent. Second, even when
ignoring the $f$-independence requirement, the lower bound for the scheduling number obtained in~\cite{halmitSODA12} is
only $\Omega(\log{\log^*{\Delta}})$. 

\begin{theorem}\label{T:hardinstance}
Let $f(x)=O(\log^{1/\alpha}x)$. For each $\Delta>0$, there is an $f$-independent set of links $L$ on the real line with $\Delta(L)=\Omega(\Delta)$ that cannot be scheduled in fewer than $\Theta(\log^*{\Delta(L)})$ slots.
\end{theorem}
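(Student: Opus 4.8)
The plan is to prove the statement in two stages: build a recursive instance $L$ on the line whose scales grow by one full exponential per level, and then show that scheduling it costs one additional slot per level. Concretely, I would construct $L=L_K$ at the top of a chain $L_1\subseteq L_2\subseteq\cdots\subseteq L_K$, where $L_t$ uses lengths in $[1,\Delta_t]$ with $\Delta_t$ roughly $2^{\Theta(\Delta_{t-1})}$, so that $\log^*\Delta_t=\Theta(t)$ and hence $\log^*\Delta(L)=\Theta(K)$. The target inequality is $OPTS(L_t)\ge OPTS(L_{t-1})+\Omega(1)$, which yields $OPTS(L)=\Omega(K)=\Omega(\log^*\Delta(L))$; the matching upper bound $O(\log^*\Delta(L))$ is already supplied by Sec.~\ref{S:mainresult}, so this gives the claimed $\Theta(\log^*\Delta(L))$.

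For the construction I would follow the general recursive cluster layout of Thm.~7 of~\cite{halmitSODA12}, adapted to the line, but with the two modifications flagged before the statement. First, the instance must be \emph{$f$-independent}, whereas the construction of~\cite{halmitSODA12} uses \emph{pairwise} conflicts between edges at nearby scales. I would replace these by \emph{collective} conflicts: each new level-$t$ link is placed so that it is comfortably $f$-independent of every individual link of the level-$(t-1)$ sub-instance it ``supervises'', which is possible precisely because $f(x)=O(\log^{1/\alpha}x)$ forces only a negligible separation relative to the link's (enormous) length; yet the \emph{aggregate} interference exchanged between that link and a ``full'' active portion of its sub-instance will be calibrated to be $\Omega(1)$. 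Placing sibling clusters far enough apart makes all remaining pairs $f$-independent for free, since their mutual separations dwarf their lengths, so $\cG_f(L)$ is edgeless. Second, the recursion of~\cite{halmitSODA12} gains a slot only per \emph{doubling} of the scale count, giving $\Omega(\log\log^*\Delta)$; I would instead arrange the levels so that one slot is gained per level, by making the level-$t$ additions conflict with an entire slot's worth of the level-$(t-1)$ structure rather than merely thinning it.

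To establish $OPTS(L_t)\ge OPTS(L_{t-1})+\Omega(1)$ I would reason about an arbitrary feasible partition of $L_t$. By Thm.~\ref{T:signalstrengthening} it is enough to argue with $3^\alpha$-feasible sets, at the cost of a constant factor. In any such set $S$, Thm.~\ref{T:necessary} bounds $I(S_i^-,i)$ by an absolute constant for every $i\in S$. Applying this to a level-$t$ link $\ell\in S$ that supervises a copy $C$ of $L_{t-1}$: since there is no power cap, $\ell$ is by far the longest link near $C$, and the shorter links of $S\cap C$ charge $\ell$ with influence that, by the calibration above, exceeds the $O(1)$ budget as soon as $S\cap C$ contains ``too much'' of $C$. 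Hence no $3^\alpha$-feasible slot can simultaneously carry a level-$t$ link and a maximal feasible chunk of $C$, so these responsibilities cost one extra slot; an induction on $t$ closes the argument. I would also include a building-block feasibility check, analogous to the odd-indexed-links argument in the proof of Thm.~\ref{T:ndependence} (exponentially spaced links on the line are almost feasible by Thm.~\ref{T:kesselheimconstant}), to confirm that the construction is genuinely schedulable and the bound tight.

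The main obstacle is exactly this calibration: choosing the lengths $\Delta_t$ and the geometric offsets of the level-$t$ links so that, at the same time, (i) every pair of links is strictly $f$-independent and (ii) the interference a level-$t$ link exchanges with a ``full'' active portion of its supervised copy provably exceeds the constant feasibility budget of Thm.~\ref{T:necessary}. These requirements pull in opposite directions --- $f$-independence wants large separations, a non-trivial interference load wants small ones --- and the whole construction survives only because $f=O(\log^{1/\alpha}x)$ is small enough that the minimal $f$-independent separations stay in the useful range; for $f=\Omega(\tlog)$ the same accounting collapses, which is consistent with $\tlog$ appearing in the positive result. Making the interleaving tight enough to gain a full additive slot per level, rather than the weaker $\Omega(\log\log^*\Delta)$ of~\cite{halmitSODA12}, is the second delicate point, and I expect it to require a careful choice of how many and how widely spaced the sibling copies are at each level.
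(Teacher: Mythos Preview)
Your high-level scaffolding is right: a recursive line construction with tower-type growth, the appeal to Thm.~\ref{T:necessary} (via Thm.~\ref{T:signalstrengthening}) for the infeasibility certificate, and the observation that $f=O(\log^{1/\alpha})$ is exactly what lets $f$-independence and non-trivial accumulated influence coexist. These match the paper.

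The gap is in your slot-gaining mechanism. You pair each level-$t$ link $\ell$ with \emph{its own} copy $C$ and argue that $\ell$ cannot share a $3^\alpha$-feasible slot with a ``maximal feasible chunk'' of $C$. But this does not force an extra slot: a schedule may place $\ell$ together with an arbitrarily \emph{small} (even empty) piece of $C$ and spread the remainder of $C$ over the other $t-1$ slots; nothing you have set up prevents that remainder from still being $(t-1)$-schedulable. The only way a single supervisor forces its copy entirely out of its slot is if $\ell$ is pairwise infeasible with every link of $C$ --- which immediately contradicts the $f$-independence you need. So the ``one supervisor per copy'' picture cannot yield the $+1$ by itself.

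The paper's device is different and is the missing idea: at each step there is \emph{one} long link $j_{t+1}$ and \emph{many} copies $L_t^1,\dots,L_t^k$ of $L_t$, scaled by factors $8^s$ and placed at distances $\Theta(8^s\ell_t\cdot g(l_{j_{t+1}}/(8^s\ell_t)))$ from $j_{t+1}$. The scaling serves two purposes simultaneously: it makes the copies mutually $f$-independent (each copy sits far beyond the diameter of all smaller ones), and it keeps every copy's worst-case single-link influence on $j_{t+1}$ of the same order, namely $\Omega(1/((k-s+1)\ell_t))$. Summing one link per copy gives $I(S,j_{t+1})=\Omega((\log k)/\ell_t)$, so taking $k=2^{c\ell_t}$ pushes this above the constant of Thm.~\ref{T:necessary}. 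The conclusion is then a \emph{pigeonhole}, not a size bound: the slot containing $j_{t+1}$ must miss \emph{some copy entirely}, and that untouched copy needs its full $t$ slots on its own, yielding $t+1$. This ``one long link versus many scaled copies, then pigeonhole to an untouched copy'' is the crux you should replace your per-copy supervisor argument with; once you have it, the tower bound $\ell_{t+1}=O(8^{2^{c\ell_t}})$ and hence $t=\Omega(\log^*\Delta(L_t))$ fall out as you anticipated.
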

We describe the idea of the construction informally. The construction is inductive, starting from a trivial instance
$L_1$ containing a single link. For $t \ge 1$, assume there is an instance $L_t$ having the desired properties, i.e.,
$L_t$ is $f$-independent and with $OPTS(L) \ge t$.
In order to construct the instance $L_{t+1}$, consider a single link $j$ that is longer than the links in $L_t$ and
place it at distance $d$ from $L_t$ so that
all the links in $L_t$ are $f$-independent from $j$. Let $I_0$ denote the minimum influence of a link from $L_t$ on link
$j$. Now, take $k$ identical copies of $L_t$ and place them at a distance $d$ from $j$. This will of course violate the
independence between different instances, which we will address shortly, but they will still be independent from link
$j$. The idea is that if the number of copies $k$ is large enough, then for any set $S$ containing at least one link
from each copy, we will have $I(S,j)=kI_0>c_0$, where $c_0$ is a constant large enough to ensure that $S\cup\{j\}$ is
infeasible (based on Thm.\ \ref{T:necessary}). This will mean that any schedule of the link $j$ and the $k$ copies 
must place at least \emph{one whole copy} of $L_t$ in slots separate from $j$. Since it takes at least $t$ slots to
schedule one copy of $L_t$, it takes at least $t+1$ slots to schedule
all the copies together with link $j$.

It remains to address the issue of $f$-independence between different copies. 
Note that because of the scale-invariance of the influence operator, we can scale a copy of $L_t$ by a factor $s$ and place it further than before, at a distance $s\cdot d$ from link $j$ and still have the minimum influence of $I_0$ on $j$. However, in order for this influence to be taken into the account, the link $j$ must still be longer than the links in the scaled instance. Using this trick, we can scale different copies by different factors and guarantee their mutual independence, while preserving the properties we had in the case of identical copies. Since the link lengths must grow exponentially at each step $t$, the number $t$ of slots required will be small compared to the number of links and the parameter $\Delta$, but will still be $\Omega(\log^*{\Delta})$.

\begin{proof}
For a set $S$ of links, we will use $diam(S)$ to denote the diameter of $S$, or the maximum distance between nodes in $S$.

We will construct a set of links that cannot be scheduled in fewer than $\Theta(\log^*{\Delta})$ $3^\alpha$-feasible slots, relying on the necessary condition for feasibility (Thm.\ \ref{T:necessary}). This will be sufficient to prove the theorem, as Thm.\ \ref{T:signalstrengthening} will imply that there cannot exist a $\beta$-feasible schedule with $\Theta(\log^*{\Delta})$ slots for that set, for any constant $\beta$.

  Let us fix a function $f$. Note that since $f=O(\log^{1/\alpha})$, there is a constant $C\geq 1$ s.t. $f(x)\leq C\log^{1/\alpha}x$.  We construct sets  $L_t$ of links recursively. The construction is illustrated in Figure~\ref{fig:complicated}. All the links will be arranged on the real line and the receiver of
  each link will be to the right of the sender.  Initially, we have a set $L_1$ consisting of a single
  link of length $1$, for which a single slot is sufficient and necessary.  Suppose that we have already
  constructed $L_{t}$ with the property that at least $t$ slots are required for scheduling $L_{t}$. The
  instance $L_{t+1}$ is constructed as follows using $k$ scaled copies of $L_t$, where $k$ is to be
  determined. First we place a single very long link $j_{t+1}$ in the line. We then add, in order from left to
  right, copies 
  $L_t^1,L_t^2,\dots,L_t^{k}$ of $L_t$ to the right of $j_{t+1}$, where $L_t^s$ is the copy of $L_t$ scaled by a factor
  $8^s$. The idea is to make the construction so that the following properties hold: 
  \begin{enumerate}[(i)]
  \setlength{\itemsep}{0cm}%
  \setlength{\parskip}{0cm}%
\item   {$L_{t+1}$ is $f$-independent,}\label{EN:independence}
  \item {$t=\Omega(\log^*{\Delta(L_t)})$,} \label{EN:lowerbound}
  \item {for any set $S=\{i_1,i_2,\dots,i_k\}$
  with $i_s\in L_t^s$, $s=1,2,\dots,k$, we have that$I(S,j_{t+1})>c_0$ for a constant $c_0$ of our choice.}\label{EN:inconsistency2}
 \end{enumerate}
  The last property ensures that each $3^\alpha$-feasible schedule of $L_{t+1}$ must put a whole copy $L_t^s$ in a slot separate from $j_{t+1}$. Indeed, if there was a schedule that placed at least one link from each copy $L_t^s$ in the same slot with $j_{t+1}$ then we would get a contradiction with (\ref{EN:inconsistency2}): we would have $I(S,j_{t+1}) =O(1)$ for some $S$ as above, due to Thm.\ \ref{T:necessary}. Recall that $L_t$ needs at least $t$ slots to be scheduled, and so does each copy of it. It follows that $L_{t+1}$ needs at least $t+1=\Omega(\log^*{\Delta(L_t)})$ slots to be scheduled, one for $j_t$ and at least $t$ for scheduling the copies of $L_t$. Proving the properties (\ref{EN:independence}-\ref{EN:inconsistency2})  will complete the proof of the theorem.
\begin{figure}[htbp]
\begin{center}
\includegraphics[width=0.85\textwidth]{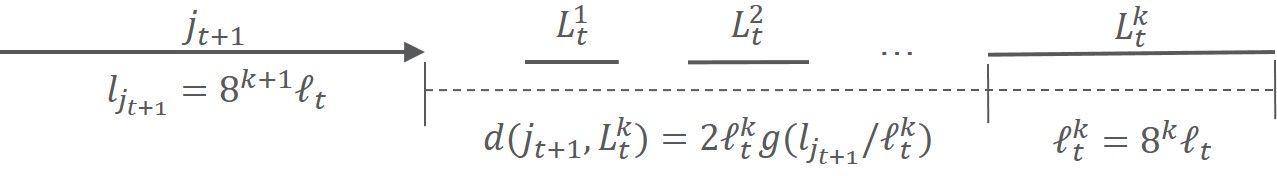}
\caption{The recursive construction of $L_{t+1}$.}
\label{fig:complicated}
\end{center}
\end{figure}

Now let us describe the inductive step of the construction in detail. Let $\ell_t=diam(L_t)$ denote the diameter of $L_t$. The number of copies of $L_t$ is $k=2^{c\ell_t}$ for a large enough constant $c$. The length of link $j_{t+1}$ is set to $l_{j_{t+1}}=8^{k+1}\ell_t$. It remains to specify the placement of each copy $L_t^s$ so as to guarantee the desired properties of $L_{t+1}$.

We assume by induction that the links within each copy of $L_t$ are $f$-independent. We place the copies $L_t^s$ so that the links between any two copies are $f$-independent and are $f$-independent from $j_{t+1}$. Let $\ell_t^{s}=diam(L_t^s)=8^s\ell_t$ denote the diameter of $L_t^s$. 
Let $g(x)=C\log^{1/\alpha}x$. We place each copy $L_t^s$ at a distance $d(L_t^s,j_{t+1})=2\ell_t^sg(l_{j_{t+1}}/\ell_t^s)$ from $j_{t+1}$. The construction is ready.

We first prove the property (\ref{EN:independence}).
\begin{claim}
With the distances defined as above, the set $L_{t+1}$ is $f$-independent.
\end{claim}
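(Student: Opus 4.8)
The plan is to verify $f$-independence of $L_{t+1}$ separately for the three kinds of pairs: (a) two links inside one copy $L_t^s$; (b) the long link $j_{t+1}$ together with a link of some copy; and (c) two links from distinct copies $L_t^s,L_t^{s'}$ with $s<s'$. Since $f(x)\le g(x)=C\log^{1/\alpha}x$ for $x\ge1$ and the $f$-independence predicate is monotone in the separating function, it suffices throughout to verify the stronger inequality $d(a,b)>l_{\min}\,g(l_{\max}/l_{\min})$. I will use two elementary observations. First, every link $i\in L_t^s$ has length $l_i\le diam(L_t^s)=\ell_t^s=8^s\ell_t$, because the two endpoints of $i$ are nodes of $L_t^s$. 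Second, for any fixed $M$ the map $u\mapsto u\,g(M/u)$ is non-decreasing on the range $u\le M/8$; this is a one-line calculus check that uses $\alpha>2$, hence $e^{1/\alpha}<8$.

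Case (a) is immediate: $L_t^s$ is a copy of $L_t$ scaled by the factor $8^s$, and $f$-independence depends only on ratios of distances, so the inductive hypothesis that $L_t$ is $f$-independent transfers verbatim. For case (b), note that $l_{j_{t+1}}=8^{k+1}\ell_t>8^s\ell_t\ge l_i$, so $j_{t+1}$ is the longer link. Write $D_s:=d(L_t^s,j_{t+1})=2\ell_t^s\,g(l_{j_{t+1}}/\ell_t^s)$ for the placement distance; then $d(i,j_{t+1})\ge d(L_t^s,j_{t+1})=D_s$. Applying the second observation to $u\mapsto u\,g(l_{j_{t+1}}/u)$ on the interval $[l_i,\ell_t^s]$ --- legitimate because $l_{j_{t+1}}/\ell_t^s=8^{k+1-s}\ge 8$ --- yields $D_s\ge 2\,l_i\,g(l_{j_{t+1}}/l_i)\ge 2\,l_i\,f(l_{j_{t+1}}/l_i)>l_i\,f(l_{j_{t+1}}/l_i)$, as needed.

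Case (c) is the crux. First I establish geometric growth of the placement distances: substituting $l_{j_{t+1}}=8^{k+1}\ell_t$ and $\ell_t^s=8^s\ell_t$ gives $D_{s+1}/D_s=8\,\bigl(\tfrac{k-s}{k+1-s}\bigr)^{1/\alpha}\ge 8\cdot 2^{-1/\alpha}\ge 4$ for every $s\le k-1$; hence $D_{s'}\ge 4^{\,s'-s}D_s$, and the copies occupy pairwise-disjoint intervals, placed in order of increasing index to the right of $j_{t+1}$. Consequently, for $i\in L_t^s$ and $i'\in L_t^{s'}$,
\[
d(i,i')\ \ge\ d(L_t^s,L_t^{s'})\ =\ D_{s'}-D_s-\ell_t^s\ \ge\ \bigl(2C\cdot 4^{\,s'-s}-2C-1\bigr)\,\ell_t^s ,
\]
using $D_s\ge 2C\ell_t^s$. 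On the other hand $l_{\min}\le\ell_t^s$ and $l_{\max}\le\ell_t^{s'}$; monotonicity of $g$ followed by the second observation (now with $M=\ell_t^{s'}$, valid since $\ell_t^{s'}/\ell_t^s=8^{s'-s}\ge 8$) gives $l_{\min}\,g(l_{\max}/l_{\min})\le \ell_t^s\,g(\ell_t^{s'}/\ell_t^s)=C\,(3(s'-s))^{1/\alpha}\,\ell_t^s$. Thus the claim reduces to the elementary inequality $2C\cdot 4^{\,s'-s}-2C-1>C\,(3(s'-s))^{1/\alpha}$, which holds comfortably: the left side exceeds $C\cdot 4^{\,s'-s}$ for $s'-s\ge1$, while $(3(s'-s))^{1/\alpha}\le\sqrt{3(s'-s)}\le 4^{\,s'-s}$.

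The main obstacle is precisely case (c): one must guarantee that the separation between two copies beats the gap demanded by $f$ \emph{simultaneously for all pairs} $s<s'$, and this succeeds only because the placement radii $D_s$ grow by a constant factor $(\ge 4)$ per index step while the separation $f$ requires between $L_t^s$ and $L_t^{s'}$ grows merely like the $1/\alpha$-th power of the logarithm of their diameter ratio. A recurring subtlety, already visible in case (b), is that the $f$-gap is normalized by the \emph{shorter} of the two links, whose length can lie far below the diameter of its copy; the monotonicity of $u\mapsto u\,g(M/u)$ in the second observation is exactly what absorbs this.
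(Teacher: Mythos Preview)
Your proof is correct and follows essentially the same three-case decomposition as the paper's: links within a copy (inductive hypothesis plus scale-invariance), links versus $j_{t+1}$ (monotonicity of $u\mapsto u\,g(M/u)$), and links from distinct copies. The only notable difference is in how you lower-bound the gap between copies in case (c): you first establish the clean geometric growth $D_{s+1}/D_s\ge 4$ and then use $D_{s'}-D_s\ge (4^{s'-s}-1)D_s$, whereas the paper works directly with the inequality $g(8x)\le 3g(x)$ to bound $D_s$ against $D_{s'}$. Both routes land on the same final comparison of an exponential in $s'-s$ against $(3(s'-s))^{1/\alpha}$, so this is a cosmetic variation rather than a different idea.
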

\begin{proof}
Consider any link $i\in L_t^s$. We have that 
\[
d(i,j_{t+1})\geq d(L_t^s,j_{t+1})=2\ell_t^sg(l_{j_{t+1}}/\ell_t^s)\geq 2l_ig(l_{j_{t+1}}/l_i)\geq 2l_if(l_{j_{t+1}}/l_i),
\]
where the second inequality follows from the fact that $xg(c/x)$ is an increasing function of $x$ and that $l_i<\ell_t^s$, and the third inequality follows because $f(x)\leq g(x)$ for all $x$. Thus, all the links in $L_t^s$ are $f$-independent from $j_{t+1}$. Now let us show that any two links $i,k$ with $l_i\leq l_k$ from different copies $L_t^s$ and $L_t^r$ with $s > r$ are $f$-independent (no matter which link is from which copy). Since $f(x)\leq g(x)$, it will be enough to show that
\begin{equation}\label{E:pessdist}
d(i,k)> l_ig(l_k/l_i).
\end{equation}
 Recall that $xg(c/x)$ is an increasing function of $x$. Then, for a fixed $k$, the right side of (\ref{E:pessdist}) is maximized when $l_i$ is maximum. On the other hand, for a fixed $i$,  the value $g(l_k/l_i)$ is maximized when $l_k$ is maximum, because $g$ is an increasing function. Let $j_t$ denote the maximum length link in $L_t$. Then, the maximum link length in $L_t^s$ (in $L_t^r$) is $8^sl_{j_{t}}$ ($8^rl_{j_{t}}$). Therefore, it is enough to show that 
\[
d(i,k)>\ell_t^rg(8^sl_{j_t}/(8^rl_{j_t}))=\ell_t^rg(8^{s-r})=C(3(s-r))^{1/\alpha}\ell_t^r.
\]
We have that
\[
d(i,k) \geq d(L_t^s,L_t^r)=d(L_t^s,j_{t+1}) - d(L_t^r,j_{t+1}) - \ell_t^r\geq 2\ell_t^sg(l_{j_{t+1}}/\ell_t^s) - 3\ell_t^rg(l_{j_{t+1}}/\ell_t^r).
\]
The term $g(l_{j_{t+1}}/\ell_t^r)$ can be bounded by
\[
g(l_{j_{t+1}}/\ell_t^r)=g(8^{s-r}l_{j_{t+1}}/\ell_t^s)\leq 3^{s-r}g(l_{j_{t+1}}/\ell_t^s),
\]
where the last inequality follows because $g(8x)\leq 3 g(x)$ for $x\geq 2$ (note that $\alpha \geq 1$). Thus,
\[
d(i,k) \geq 2\ell_t^sg(l_{j_{t+1}}/\ell_t^s) - 3^{s-r+1}\ell_t^rg(l_{j_{t+1}}/\ell_t^s)>C(2\cdot 8^{s-r} - 3\cdot 3^{s-r})\ell_t^r>C(3(s-r))^{1/\alpha}\ell_t^r.
\]
\end{proof}
Next, we can observe that (the first line follows because the links are arranged linearly) 
\begin{align}
\nonumber \ell_{t+1}&=l_{j_{t+1}}+d(L_t^k,j_{t+1}) + \ell_t^k\\
\nonumber &\leq l_{j_{t+1}}+2\ell_t^kg(l_{j_{t+1}}/\ell_t^k) + \ell_t^k\\
\nonumber &=8^{k+1}\ell_t + 8^k\ell_tg(8) + 8^k\ell_t\\
&=O(8^{ 2^{c\ell_t}}).
\end{align}
Since the minimum link-length in $L_{t+1}$ is $1$, we can conclude that $\Delta(L_{t})<\ell_t\leq 2\uparrow (c_1t)$ for a constant $c_1$ and for each $t$, where $\uparrow$ denotes the tower function.
This implies that $t=\Omega(\log^*{\Delta(L_t)})$. The property (\ref{EN:lowerbound}) is now proven.

It remains to check that (\ref{EN:inconsistency2}) holds. Let us consider a link $i_s$ from $L_{t}^s$ where $i_s$ is the copy of link $i$ in $L_{t}$. We have that
\[
d(i_s,j_t)\leq \ell_t^s+d(L_t^s,j_t)= \ell_t^s + 2C\ell_t^s\log^{1/\alpha}{(l_{j_{t+1}}/\ell_t^s)}\leq c_2\ell_t^s(k-s+1)^{1/\alpha},
\]
for a constant $c_2$. This implies:
\[
I(i_s,j_{t+1})=\left(\frac{l_{i_s}}{d(i_s,j_{t+1})}\right)^\alpha \geq \left(\frac{l_{i_s}}{  c_2(k-s+1)^{1/\alpha}\ell_{t}^s} \right)^\alpha \geq \frac{1}{c_3(k-s)\ell_{t-1}},
\]
where we used the fact that $l_{i_s}/\ell_t^s=l_i/\ell_t\geq 1/\ell_t$.
Now, let $i_s,$ $s=1,2,\dots,k$ be a set of links where $i_s\in L_{t}^s$ and they are not necessarily the copies of the same link of $L_{t}$. Then, 
\[
I(S,j_{t+1}) = \sum_{s=1}^{k}I(i_s,j_{t+1}) > \sum_{1}^{k}{\frac{1}{c_3(k-s+1)\ell_{t}}} = \Omega\left(\frac{\log{k}}{\ell_{t}}\right).
\]
Recall that $k=2^{c\ell_t}$. By taking the constant $c$ large enough, we can thus guarantee the property (\ref{EN:inconsistency2}). This completes the proof of all the properties of $L_t$ and the proof of the theorem.
\end{proof}

\subsec{General Metric Spaces}
The following theorem shows that conflict graphs can be arbitrarily far from schedules in general metric spaces. Given a function $f$, the construction consists of an $f$-independent set of \emph{unit length} links. Since all links have length $1$, $f$-independence is equivalent to $f(1)$-independence. The separation between the links is just enough to ensure $f(1)$-independence. However, since all the links are equally ($f(1)$-) separated from any given link, their interference accumulates and only a constant number of links can be scheduled in the same slot. This leads to schedules of length $\Theta(n)$. 
\begin{proposition}
For each function $f$ and any $n\geq 1$, there is an $f$-independent set of $n$ unit length links (hence, $\Delta=1$) that cannot be scheduled into less than $\Theta(n)$ slots.
\end{proposition}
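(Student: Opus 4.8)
The plan is to realize $n$ unit-length links in a metric space where every pair of endpoints lying on \emph{distinct} links sits at one fixed distance $D$, chosen just above $f(1)$. Since all link lengths equal $1$, $f$-independence of a pair $i,j$ only requires $d(i,j)>f(1)=f(l_{max}/l_{min})$, so the set is $f$-independent by construction; and since all mutual distances are equal, each link exerts the same constant interference on every other, so putting more than a constant number of links in one slot overloads every receiver. The lower bound then follows by a simple averaging over the slot, without invoking any doubling structure.

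Concretely I would take $2n$ points partitioned into pairs $\{s_i,r_i\}$, one pair per link, set $d(s_i,r_i)=1$ for each $i$, and set $d(x,y)=D$ for any two points $x,y$ on different links, where $D:=f(1)+1$ (positive, since $f$ maps into $\mathbb{R}_+$). A short verification shows this table defines a metric: every triple of points has its three distances in a pattern of the form $\{1,D,D\}$ or $\{D,D,D\}$, and the triangle inequality holds in each case because $D\ge 1$. Here $\Delta=1$, and for any two links $d(i,j)=D>f(1)=f(l_{max}/l_{min})$, so the whole link set $L$ is $f$-independent; equivalently, $f$-adjacency coincides with $f(1)$-adjacency on unit links, and no pair is $f(1)$-adjacent.

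For the lower bound, let $S\subseteq L$ be $P$-feasible for some power assignment $P$. Applying the SINR inequality~(\ref{E:sinr}) to a link $i\in S$, using $l_i=1$ and $d_{ji}=D$ for every $j\ne i$, gives
\[
P(i)\;\ge\;\beta\sum_{j\in S\setminus\{i\}}\frac{P(j)}{d_{ji}^{\alpha}}\;=\;\frac{\beta}{D^{\alpha}}\sum_{j\in S\setminus\{i\}}P(j).
\]
Summing this over all $i\in S$ and writing $T=\sum_{i\in S}P(i)>0$ yields $T\ge \frac{\beta}{D^{\alpha}}(|S|-1)\,T$, hence $|S|\le 1+D^{\alpha}/\beta=:c$, a constant depending only on $f$, $\alpha$ and $\beta$. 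Therefore every schedule of $L$ uses at least $n/c=\Omega(n)$ slots, while scheduling each link in its own slot (a single link is trivially feasible, e.g.\ with sufficiently large power) uses exactly $n$ slots; so $OPTS(L)=\Theta(n)$.

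I do not anticipate a real obstacle here. The two points requiring care are (i) checking that the proposed distance table is a genuine metric, which is routine precisely because $D$ is large relative to the cluster diameter $1$, and (ii) making sure the argument does not rely on the necessary condition for feasibility (Thm.~\ref{T:necessary}), which is false in general metrics --- the averaging of the raw SINR constraints over a slot does exactly this, needing no doubling or geometric assumptions.
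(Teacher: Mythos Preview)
Your proposal is correct and follows essentially the same approach as the paper: both place $n$ unit-length links in a symmetric general metric with all inter-link node distances of order $f(1)$, verify the metric axioms and $f$-independence directly, and bound the size of any feasible set by a constant via the raw SINR inequalities. Your execution is slightly cleaner---a single cross-link distance $D$ and the summing/averaging trick on the SINR constraints, versus the paper's three distinct distances and pairwise substitution---but the ideas are identical.
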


\begin{proof}
Let $L=\{1,2,\dots,n\}$ be the set of links. We define the lengths and the distances between the links such as to ensure the metric constraints hold. For each link $i$ we define $l_i=1$. The distances between the nodes are defined as follows:
\begin{itemize} 
\setlength{\itemsep}{0cm}%
  \setlength{\parskip}{0cm}%
\item[\checkmark]{sender to sender: $d(s_i,s_j)=f(1)\cdot (l_i+l_j)=2f(1)$,}
 \item[\checkmark]{sender to receiver: $d(s_i,r_j)=d(s_i,s_j)+l_j=2f(1)+1$,}
 \item[\checkmark]{ receiver to receiver: $d(r_i,r_j)=d(s_i,s_j)+l_i+l_j = 2f(1)+2$.}
\end{itemize}
It is straightforward to check that such distances define a metric. Moreover, the whole set of links in this metric is $f$-independent, since $d(i,j)>f(1)\cdot l_i=l_if(l_j/l_i)$. Let us consider any $P$-feasible subset $S$ of $k$ links for a power assignment $P$. Let us fix a link $i\in S$. The SINR condition implies: $P(i) > \beta\sum_{j\in S\setminus\{i\}}\frac{P(j)l_{i}^\alpha}{d_{ji}^\alpha}$ and $P(j) > \beta \frac{P(i)l_j^\alpha}{d_{ij}^\alpha}$ for all $j\in S\setminus \{i\}$.  By replacing $P(j)$ with $\frac{\beta P(i)l_j^\alpha}{d_{ij}^\alpha}$ in the first inequality and canceling the term $P(i)$, we obtain:
\[
1 >\beta^2\sum_{j\in S\setminus\{i\}}{\frac{l_i^{\alpha }l_j^\alpha}{d_{ij}^\alpha d_{ji}^\alpha}}=\beta^2\sum_{j\in S\setminus\{i\}}{\frac{1}{(2f(1)+1)^2}}=\frac{\beta^2(|S|-1)}{(2f(1)+1)^2},
\]
which implies that $|S| < \left(\frac{2f(1)+1}{\beta}\right)^2+1=O(1)$. Since $S$ was an arbitrary feasible subset of $L$, we conclude that $L$ cannot be split into less than $\Theta(n)$ feasible subsets.
\end{proof}

\mypara{Acknowledgements} We thank Christian Konrad and Jaikumar Radhakrishnan for helpful suggestions.

\bibliographystyle{abbrv}
\bibliography{Bibliography}

\end{document}